\newcommand\EFFACE[1]{}
\newtheorem{theorem}{Theorem}
\newtheorem{proposition}[theorem]{Proposition}
\newtheorem{lemma}[theorem]{Lemma}
\newtheorem{corollary}[theorem]{Corollary}
\newtheorem{claim}{Claim}
\newtheorem{observation}[theorem]{Observation}
\newenvironment{proof}{
\par
\noindent {\bf Proof.}\rm}{\mbox{}\hfill$\square$\par\vskip 3mm}
\newcommand\SOMMET[1]{\draw[fill=black] (#1) circle (2pt);}
\newcommand\ETIQUETTE[2]{\node[below] at (#1) {#2};}
\def\ch{{\rm ch}}
\def\dist{{\rm dist}}
\let\@fnsymbol\@arabic
\begin{document}

%\vspace*{2cm}

%\title{{\bf On the List Incidence chromatic number of graphs}} 
\title{{\bf Incidence Choosability of Graphs}} 

\author{Brahim BENMEDJDOUB~\thanks{Faculty of Mathematics, Laboratory L'IFORCE, University of Sciences and Technology
Houari Boumediene (USTHB), B.P.~32 El-Alia, Bab-Ezzouar, 16111 Algiers, Algeria.}
\and Isma BOUCHEMAKH~\footnotemark[1]
\and \'Eric SOPENA~\thanks{Univ. Bordeaux, LaBRI, UMR5800, F-33400 Talence, France.}~$^,$\thanks{CNRS, LaBRI, UMR5800, F-33400 Talence, France.}~$^,$\footnote{Corresponding author. Eric.Sopena@labri.fr.}
}

\maketitle

\abstract{
An incidence of a graph $G$ is a pair $(v,e)$ where $v$ is a vertex of~$G$ and $e$ is an edge of~$G$ incident with $v$. Two incidences $(v,e)$ and $(w,f)$ of~$G$ are adjacent whenever
(i) $v=w$, or (ii) $ e=f$, or (iii) $vw=e$ or $f$.
An incidence $p$-colouring of~$G$ is a mapping from the set of incidences of~$G$
 to the set of colours $\{1,\dots,p\}$ such that every two adjacent incidences receive distinct colours.
Incidence colouring has been introduced by Brualdi and Quinn Massey in 1993 and, since then,
studied by several authors.

In this paper, we introduce and study the list version of incidence colouring.
We determine the exact value of -- or upper bounds on -- the incidence choice number of 
several classes of graphs, namely square grids, Halin graphs, %generalized coronae of cycles,
cactuses and Hamiltonian cubic graphs.
}%abstract

\medskip

\noindent
{\bf Keywords:} 
Incidence colouring; 
Incidence list colouring;
List colouring; 
Square grid; 
Halin graph; 
%Corona of a cycle; Cactus; 
Hamiltonian cubic graph.

\medskip

\noindent
{\bf MSC 2010:} 05C15.

%%%%%%%%%%%%%%%%%%%%%%%%%%%%%%%%%%%%%%%%%%%%%%%%%%%%%%%%%%%%%%%%%%%%%%%%%%%%%%%%%%%%%%%%%%%%%%%%%%%%%%%%%%%%%%%%%%%%%%%%%%%%%
%%%%%%%%%%%%%%%%%%%%%%%%%%%%%%%%%%%%%%%%%%%%%%%%%%%%%%%%%%%%%%%%%%%%%%%%%%%%%%%%%%%%%%%%%%%%%%%%%%%%%%%%%%%%%%%%%%%%%%%%%%%%%
%%%%%%%%%%%%%%%%%%%%%%%%%%%%%%%%% INTRODUCTION

\section{Introduction}

All graphs  considered in this paper are simple and loopless undirected graphs.
We denote by $V(G)$ and $E(G)$ the set of vertices and the set of edges of a graph $G$, respectively,
 by $\Delta(G)$ the maximum degree of~$G$,
 and by $\dist_G(u,v)$ the distance between vertices $u$ and $v$ in $G$.

A \emph{(proper) colouring} of a graph $G$ is a mapping from $V(G)$ to a finite set of colours
such that adjacent vertices are assigned distinct colours.
Let $L$ be a \emph{list assignment} of~$G$, that is, a mapping that assigns to every vertex $v$ of~$G$ a finite list $L(v)$ of colours.
The graph $G$ is \emph{$L$-list colourable} if there exists a proper colouring $\lambda$ of~$G$
satisfying $\lambda(v)\in L(v)$ for every vertex $v$ of~$G$.
The graph $G$ is \emph{$k$-list colourable}, or \emph{$k$-choosable}, if, 
for every list assignment $L$ with $|L(v)|=k$ for every vertex $v$,
$G$ is $L$-list colourable.
The \emph{choice number} $\ch(G)$ of~$G$ is then defined as the smallest integer $k$ such that $G$ is $k$-choosable.
List colouring was independently introduced by Vizing~\cite{V76} and Erd\H{o}s, Rubin and Taylor~\cite{ERT79}
(see the surveys by Alon~\cite{A93},  Tuza~\cite{T97},
Kratochv\`il, Tuza and Voigt~\cite{KTV99}, 
or the monography by Chartrand and Zhang~\cite[Section~9.2]{CZbook}).

An {\em incidence} of a graph $G$ is a pair $(v,e)$ where $v$ is a vertex of~$G$ and $e$ is an edge of~$G$ incident with $v$.
%We denote by $Inc(G)$ the set of incidences of~$G$.
Two incidences $(v,e)$ and $(w,f)$ of~$G$ are {\em adjacent} whenever
(i) $v=w$, or (ii) $ e=f$, or (iii) $vw=e$ or $f$.
An {\em incidence $p$-colouring} of~$G$ is a mapping from the set of incidences of~$G$
 to the set of colours $\{1,\dots,p\}$ such that every two adjacent incidences receive distinct colours.
The smallest $p$ for which $G$ admits an incidence $p$-colouring is the \emph{incidence chromatic number} of~$G$, denoted by $\chi_i(G)$.
Incidence colourings were first introduced and studied by Brualdi and Quinn Massey~\cite{BM93}.
Incidence colourings of various graph families have attracted much interest in recent years, 
see for instance \cite{GLS16a,GLS16b,HSZ04,M05,SW13,WCP02,W09}.

The list version of incidence colouring is defined in a way similar to the case of ordinary proper vertex colouring.
We thus say  that a graph $G$ is %\emph{$k$-list incidence colourable}, or 
\emph{incidence $k$-choosable}, if, 
for every list assignment $L$ with $|L(v,e)|=k$ for every incidence $(v,e)$,
$G$ is \emph{$L$-list incidence colourable}.
The \emph{incidence choice number} of~$G$, denoted by $\ch_i(G)$, is then defined as the smallest integer $k$ such that $G$ is 
incidence $k$-choosable.

Our paper is organised as follows.
We first give some preliminary results in Section~\ref{sec:preliminary}.
We then study the incidence choice number
of square grids in Section~\ref{sec:grids},
of Halin graphs in  Section~\ref{sec:Halin},
%of generalized coronae of cycles  in  Section~\ref{sec:coronae},
of cactuses in Section~\ref{sec:cactus},
and of Hamiltonian cubic graphs in Section~\ref{sec:cubic}.
We finally propose some directions for future research in Section~\ref{sec:discussion}.

%%%%%%%%%%%%%%%%%%%%%%%%%%%%%%%%%%%%%%%%%%%%%%%%%%%%%%%%%%%%%%%%%%%%%%%%%%%%%%%%%%%%%%%%%%%%%%%%%%%%%%%%%%%%%%%%%%%%%%%%%%%%%
%%%%%%%%%%%%%%%%%%%%%%%%%%%%%%%%%%%%%%%%%%%%%%%%%%%%%%%%%%%%%%%%%%%%%%%%%%%%%%%%%%%%%%%%%%%%%%%%%%%%%%%%%%%%%%%%%%%%%%%%%%%%%
%%%%%%%%%%%%%%%%%%%%%%%%%%%%%%%%% BASIC RESULTS

\section{Preliminary results}\label{sec:preliminary}

We list in this section some basic results on the incidence choice number
of various graph classes. 
Note first that the inequality $\ch_i(G)\ge\chi_i(G)$ %and $\chi_i(G)\ge\Delta(G)+1$ 
obviously holds for every graph $G$,
and that whenever $G$ is not connected, $\chi_i(G)$ (resp. $\ch_i(G)$)
equals the maximum value of $\chi_i(C)$ (resp. of $\ch_i(C)$), taken
over all connected components $C$ of~$G$.
Therefore, when studying the incidence chromatic number or the incidence choice number
of special graph classes, it is enough to consider the case of connected graphs.

We start by introducing some notation.
With any graph $G$, we associate the \emph{incidence graph of~$G$}, denoted by $I_G$,
whose vertices are the incidences of~$G$, two incidences being joined by an edge
whenever they are adjacent. Clearly, every incidence colouring of~$G$ is nothing
but a proper vertex colouring of $I_G$,
so that $\chi_i(G)=\chi(I_G)$ and $\ch_i(G)=\ch(I_G)$.
Note also that for every subgraph $H$ of $G$, $I_H$ is a subgraph of $I_G$.
Hence we have:

\begin{observation}
For every subgraph $H$ of a graph $G$, $\chi_i(H)\leq \chi_i(G)$ and $\ch_i(H)\leq \ch_i(G)$.
\label{obs:subgraph}
\end{observation}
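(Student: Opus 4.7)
The plan is to exploit the reformulation stated just above: with every graph $G$, $\chi_i(G)=\chi(I_G)$ and $\ch_i(G)=\ch(I_G)$, and for every subgraph $H$ of $G$ the incidence graph $I_H$ is a subgraph of $I_G$. Both inequalities then reduce to the standard monotonicity of the ordinary chromatic number and the ordinary choice number under the subgraph relation, applied to the pair $I_H\subseteq I_G$.

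For the first inequality $\chi_i(H)\leq\chi_i(G)$, I would take any incidence colouring of $G$ using $\chi_i(G)$ colours, view it as a proper vertex colouring of $I_G$, and restrict it to $V(I_H)$. Since every edge of $I_H$ is also an edge of $I_G$, the restriction remains proper, hence defines an incidence colouring of $H$ using at most $\chi_i(G)$ colours.

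For the second inequality $\ch_i(H)\leq\ch_i(G)$, set $k=\ch_i(G)$ and let $L$ be any list assignment on the incidences of $H$ with $|L(v,e)|=k$ for every incidence $(v,e)$ of $H$. I would extend $L$ to a list assignment $L'$ on the incidences of $G$ by assigning, for instance, the list $\{1,\dots,k\}$ to every incidence of $G$ that is not an incidence of $H$. Then $|L'(v,e)|=k$ for every incidence $(v,e)$ of $G$, so by definition of $\ch_i(G)=k$ there exists a proper $L'$-incidence colouring of $G$. Restricting this colouring to the incidences of $H$ yields a proper $L$-incidence colouring of $H$.

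There is no real obstacle here: the statement is essentially a direct translation of the subgraph monotonicity of $\chi$ and $\ch$ through the assignment $G\mapsto I_G$. The only step that deserves a word of care is the extension of the list assignment in the choosability argument, which must keep the palette size exactly $k$ at every incidence of $G$ so that the hypothesis on $\ch_i(G)$ applies.
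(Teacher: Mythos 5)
Your proof is correct and follows the same route as the paper, which simply notes that $I_H$ is a subgraph of $I_G$ and invokes the subgraph monotonicity of $\chi$ and $\ch$ via $\chi_i(G)=\chi(I_G)$ and $\ch_i(G)=\ch(I_G)$; you merely spell out the restriction and list-extension steps that the paper leaves implicit.
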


For every vertex $v$ in a graph $G$, we denote by
$A^-(v)$ the set of incidences of the form $(v,vu)$,
and by $A^+(v)$ the set of incidences of the form $(u,uv)$
(see Figure~\ref{fig:adjacent-incidences}).
We thus have $|A^-(v)|=|A^+(v)|=\deg(v)$ for every vertex $v$.
For every vertex $v$, the incidences in $A^-(v)$ are called
the \emph{internal incidences of $v$}, and
the incidences in $A^+(v)$ are called
the \emph{external incidences of $v$}.
The following observation will be useful.

\begin{observation}
For every incidence $(v,vu)$, the set of incidences that are
adjacent to $(v,vu)$ is $A^-(v) \cup A^+(v) \cup A^-(u)$,
whose cardinality is $2\deg_G(v)+\deg_G(u)-2$.
\label{obs:incidences}
\end{observation}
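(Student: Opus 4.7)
My plan is to verify both parts of the observation by a direct case analysis on the three adjacency conditions (i)--(iii), followed by an inclusion--exclusion count on the resulting union.

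For the characterisation, I would fix an arbitrary incidence $(w,f)\neq (v,vu)$ and determine when it is adjacent to $(v,vu)$. Condition (i), $w=v$, yields exactly the incidences in $A^-(v)$. Condition (ii), $f=vu$, forces $w\in\{v,u\}$: the case $w=v$ recovers $(v,vu)$ itself, while $w=u$ produces the single incidence $(u,uv)$. Condition (iii) splits into two subcases: $vw=vu$ forces $w=u$ with $f$ an arbitrary edge at $u$, contributing all of $A^-(u)$; and $vw=f$ forces $f=vw$ with $w$ a neighbour of $v$, contributing all of $A^+(v)$. Conversely, every element of $A^-(v)\cup A^+(v)\cup A^-(u)$ other than $(v,vu)$ itself satisfies at least one of (i)--(iii). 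This identifies the set of incidences adjacent to $(v,vu)$ with $A^-(v)\cup A^+(v)\cup A^-(u)$, under the convention that $(v,vu)$ is not considered adjacent to itself.

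For the cardinality, I would use inclusion--exclusion on this union. Since $G$ is loopless, $v$ differs from every neighbour of $v$ and from $u$, so $A^-(v)$ is disjoint from both $A^+(v)$ (first coordinates $v$ versus neighbours of $v$) and $A^-(u)$ (first coordinates $v$ versus $u$). The only nontrivial overlap is $A^+(v)\cap A^-(u)=\{(u,uv)\}$. Hence
\[
|A^-(v)\cup A^+(v)\cup A^-(u)|=\deg_G(v)+\deg_G(v)+\deg_G(u)-1=2\deg_G(v)+\deg_G(u)-1,
\]
and subtracting the incidence $(v,vu)\in A^-(v)$ itself yields the announced $2\deg_G(v)+\deg_G(u)-2$ adjacent incidences.

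There is no real obstacle here: the argument reduces to a careful bookkeeping of the three adjacency conditions and their overlaps. The only point requiring attention is the off-by-one between the size of the union and the number of adjacent incidences, which is accounted for by the implicit convention that an incidence is not adjacent to itself.
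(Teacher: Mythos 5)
Your proof is correct and is exactly the elaboration the paper intends: the paper states this observation without proof, justifying it only by Figure~\ref{fig:adjacent-incidences}, and your case analysis of conditions (i)--(iii) together with the inclusion--exclusion count is the natural bookkeeping behind that figure. You also rightly flag the off-by-one coming from $(v,vu)\in A^-(v)$ itself, which the paper's phrasing glosses over.
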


Note also that all incidences in $A^-(v)$ must be assigned pairwise distinct
colours in every incidence colouring of~$G$ and that the colour of any
incidence in $A^+(v)$ must be distinct from the colours assigned to
the incidences of $A^-(v)$.
Moreover, since every incidence has 
 at most $3\Delta(G)-2$ adjacent incidences by Observation~\ref{obs:incidences}
(see Figure~\ref{fig:adjacent-incidences}), we get:

\begin{proposition}
For every graph $G$, $\Delta(G)+1 \leq \chi_{i}(G)\leq \ch_i(G)\leq 3\Delta(G)-1$.
\label{prop:bounds}
\end{proposition}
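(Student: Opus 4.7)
The proposition bundles three inequalities, and I would establish them one at a time, from easiest to hardest (though none is hard).

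The middle inequality $\chi_i(G)\le\ch_i(G)$ is immediate: any incidence $p$-colouring of $G$ is in particular an $L$-list incidence colouring for the constant assignment $L(v,e)=\{1,\dots,p\}$, so every value of $p$ that works with prescribed lists of size $p$ also works in the ordinary sense.

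For the lower bound $\Delta(G)+1\le\chi_i(G)$, I would pick a vertex $v$ with $\deg_G(v)=\Delta(G)$ and any neighbour $u$ of $v$. By Observation~\ref{obs:incidences}, the $\Delta(G)$ internal incidences in $A^-(v)$ are pairwise adjacent (they all share the vertex $v$), so they must receive $\Delta(G)$ pairwise distinct colours in any incidence colouring. Moreover, the external incidence $(u,uv)\in A^+(v)$ is adjacent to every element of $A^-(v)$ (again by Observation~\ref{obs:incidences}, or directly because $vu=uv$), so its colour must differ from all $\Delta(G)$ colours used on $A^-(v)$. This forces the use of at least $\Delta(G)+1$ colours.

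For the upper bound $\ch_i(G)\le 3\Delta(G)-1$, I would pass to the incidence graph $I_G$, recalling that $\ch_i(G)=\ch(I_G)$. By Observation~\ref{obs:incidences}, every incidence $(v,vu)$ has at most $2\deg_G(v)+\deg_G(u)-2\le 3\Delta(G)-2$ neighbours in $I_G$, so $\Delta(I_G)\le 3\Delta(G)-2$. A standard greedy argument then yields $\ch(H)\le\Delta(H)+1$ for every graph $H$ (given lists of size $\Delta(H)+1$ at each vertex, process the vertices in any order; each vertex, when coloured, has at most $\Delta(H)$ already-coloured neighbours, leaving at least one available colour in its list). Applying this to $H=I_G$ gives $\ch_i(G)\le 3\Delta(G)-1$, concluding the proof. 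There is no real obstacle here; the only thing to be careful about is invoking the greedy list-colouring bound cleanly rather than just the chromatic version.
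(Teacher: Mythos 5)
Your proof is correct and follows essentially the same route as the paper: the lower bound from the $\Delta(G)$ pairwise adjacent incidences of $A^-(v)$ at a maximum-degree vertex together with one incidence of $A^+(v)$, and the upper bound from Observation~\ref{obs:incidences} giving $\Delta(I_G)\le 3\Delta(G)-2$ followed by greedy list colouring of $I_G$. The only difference is that you spell out the greedy argument and the triviality of $\chi_i\le\ch_i$, which the paper leaves implicit.
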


It was proved in~\cite{ERT79,V76} that the choice number also satisfies a Brooks-like theorem, that is,
the inequality $\ch(G)\le\Delta(G)$ holds for every graph $G$ which is neither complete nor
an odd cycle.
Observe that whenever $\Delta(G)\ge 2$, the incidence graph $I_G$ contains a triangle
(induced by three incidences of the form $(v,vu_1)$, $(v,vu_2)$
and $(u_1,u_1v)$, $u_1\neq u_2$) and is non-complete (two incidences of the form $(u_1,u_1v)$ 
and $(u_2,u_2v)$, $u_1\neq u_2$, are not adjacent).
On the other hand, if $\Delta(G)=1$, then $G$ is a union of $K_2$'s, and thus
incidence 2-colourable.
Hence, Proposition~\ref{prop:bounds} can be slightly improved as follows:

\begin{proposition}
For every graph $G$ with $\Delta(G)\ge 2$, 
$\Delta(G)+1 \leq \chi_{i}(G)\leq \ch_i(G)\leq 3\Delta(G)-2.$
\label{prop:bounds-2}
\end{proposition}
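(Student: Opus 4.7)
The lower bound $\Delta(G)+1\le\chi_i(G)\le\ch_i(G)$ is already established by Proposition~\ref{prop:bounds}, so the only new content is the upper bound $\ch_i(G)\le 3\Delta(G)-2$. My plan is to obtain it by applying the Brooks-type theorem for list colouring (of Erd\H{o}s--Rubin--Taylor and Vizing) directly to the incidence graph $I_G$. Since $\ch_i(G)=\ch(I_G)$ and $\Delta(I_G)\le 3\Delta(G)-2$ by Observation~\ref{obs:incidences}, it suffices to check that $I_G$ is neither a complete graph nor an odd cycle.

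The $\Delta(G)=1$ case is trivial and handled in the text preceding the statement, so I assume $\Delta(G)\ge 2$. I would then fix a vertex $v$ of degree at least $2$ with two distinct neighbours $u_1,u_2$, and work with the four specific incidences $(v,vu_1),(v,vu_2),(u_1,u_1v),(u_2,u_2v)$. A routine check against the three adjacency rules shows that the first three form a triangle in $I_G$, so $I_G$ contains a triangle and in particular has at least $3$ vertices. The same rules show that $(u_1,u_1v)$ and $(u_2,u_2v)$ are \emph{not} adjacent in $I_G$ (they share neither vertex nor edge, and $u_1u_2\notin\{u_1v,u_2v\}$ since $v\neq u_1,u_2$), which already appears in the text.

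These two observations together yield the required exclusions: $I_G$ has the four vertices above, so it has at least $4$ vertices and thus cannot be a triangle, while the triangle it contains rules out its being any odd cycle of length $\ge 5$; and the non-adjacent pair $(u_1,u_1v),(u_2,u_2v)$ shows $I_G$ is not complete. Therefore the list Brooks theorem applies and gives
\[
\ch_i(G)\;=\;\ch(I_G)\;\le\;\Delta(I_G)\;\le\;3\Delta(G)-2,
\]
as desired.

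There is no real obstacle here; the argument is essentially the one outlined in the paragraph just before the proposition. The only points requiring a little care are (i) verifying that the triangle and the non-adjacent pair both genuinely exist as soon as $\Delta(G)\ge 2$, which forces $I_G$ to have at least four vertices and hence simultaneously rules out $K_n$ and all odd cycles, and (ii) making sure the Brooks-type hypothesis is invoked for the \emph{choice} number rather than the chromatic number, so that we obtain an upper bound on $\ch_i(G)$ and not merely on $\chi_i(G)$.
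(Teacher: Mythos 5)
Your proposal is correct and follows essentially the same route as the paper: the text immediately preceding the proposition invokes the list version of Brooks' theorem on $I_G$, using exactly the triangle $(v,vu_1),(v,vu_2),(u_1,u_1v)$ and the non-adjacent pair $(u_1,u_1v),(u_2,u_2v)$ to exclude completeness and odd cycles, together with $\Delta(I_G)\le 3\Delta(G)-2$ from Observation~\ref{obs:incidences}. Your extra remark that $I_G$ has at least four vertices is harmless but not needed, since a $3$-cycle is already excluded by non-completeness.
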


%%%%
Recall that for every integer $p\ge 1$, the {\it $p^{th}$-power} $G^p$ of a graph $G$
is the graph obtained from~$G$ by linking every two 
vertices at distance at most $p$ from each other in $G$, that is,
$V(G^p)=V(G)$ and $uv\in E(G^p)$ if and only if
$1\le \dist_G(u,v)\le p$.
Consider now the cycle $C_n$ of order $n\ge 3$.
Such a cycle has $2n$ incidences and the associated
incidence graph $I_{C_n}$ is the square $C_{2n}^2$ of the cycle $C_{2n}$.
In~\cite{PW03}, Prowse and Woodall proved that
$\ch(C_n^p)=\chi(C_n^p)$ for every $p\ge 1$ and $n\ge 3$,
and thus, in particular, for the square of such a cycle.
On the other hand, it is not difficult to determine the incidence chromatic number
of any cycle $C_n$~\cite{BM93,SopWeb}.
Therefore, we get:

\begin{theorem}
For every $n\ge 3$, $3\le\ch_i(C_n)= \chi_{i}(C_n)\le 4$,
with $\ch_i(C_n)= \chi_{i}(C_n)=3$ if and only if $n\equiv 0\pmod 3$.
\label{th:cycles}
\end{theorem}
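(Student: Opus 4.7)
The plan is to reduce the statement to an assembly of two external results: the identification $I_{C_n}\cong C_{2n}^2$ and the theorem of Prowse--Woodall asserting $\ch(C_n^p)=\chi(C_n^p)$ for all $n\ge3$ and $p\ge1$. Applied to the cycle $C_{2n}$ with $p=2$, the latter gives $\ch(C_{2n}^2)=\chi(C_{2n}^2)$. Combining this with the identities $\chi_i(G)=\chi(I_G)$ and $\ch_i(G)=\ch(I_G)$ recorded in Section~\ref{sec:preliminary}, and the fact $I_{C_n}=C_{2n}^2$ stated in the paragraph preceding the theorem, I immediately obtain $\ch_i(C_n)=\chi_i(C_n)$. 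Thus the whole statement reduces to computing $\chi_i(C_n)$ and checking the lower bound.

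For the lower bound $\ch_i(C_n)\ge 3$, I would simply invoke Proposition~\ref{prop:bounds-2} with $\Delta(C_n)=2$. The upper bound $\chi_i(C_n)\le4$ and the characterization of the case of equality to $3$ are the classical facts from \cite{BM93,SopWeb}; I would cite them, but for completeness I would sketch the short verification directly on $C_{2n}^2$. In that graph each vertex is adjacent to the four vertices at cyclic distance at most $2$, so any three consecutive positions along $C_{2n}$ form a triangle and must receive three different colors. A proper $3$-coloring therefore has to be the cyclic pattern $1,2,3,1,2,3,\dots$, which closes up on $C_{2n}$ if and only if $3\mid 2n$, i.e.\ $3\mid n$. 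When $n\not\equiv0\pmod 3$, no $3$-coloring exists, so $\chi_i(C_n)\ge 4$, and a proper $4$-coloring of $C_{2n}^2$ is easily produced by using the pattern $1,2,3$ on a prefix and inserting color $4$ near the closing point to respect the cyclic distance-$2$ constraint.

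I do not expect any real obstacle: the Prowse--Woodall theorem does all the heavy lifting, and everything else is either a direct corollary of Proposition~\ref{prop:bounds-2} or a short case analysis on $C_{2n}^2$. The only point requiring care is bookkeeping the identification $I_{C_n}=C_{2n}^2$ so that the Prowse--Woodall conclusion for $p=2$ is applied to the correct cycle (of length $2n$, not $n$).
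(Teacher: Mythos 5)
Your proposal is correct and follows exactly the paper's route: identify $I_{C_n}$ with $C_{2n}^2$, invoke Prowse--Woodall to get $\ch(C_{2n}^2)=\chi(C_{2n}^2)$, and fall back on the known value of $\chi_i(C_n)$ from the cited references, with the lower bound from Proposition~\ref{prop:bounds-2}. One small caveat on your optional sketch: when $2n\equiv 2\pmod 3$ a single occurrence of colour $4$ ``near the closing point'' is not enough (e.g.\ $C_8^2$ requires two well-separated vertices of colour $4$, as in the decomposition $2n=3a+4b$ with $b=2$), but since you defer to \cite{BM93,SopWeb} for this classical fact, just as the paper does, this does not affect the validity of the argument.
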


%%%%%%%%%%%%%%%%%%%%%%%%%%%%%%%%%%%%%%%%%%%%%%%%%%%%%%%%%%%%%%%%%%%%
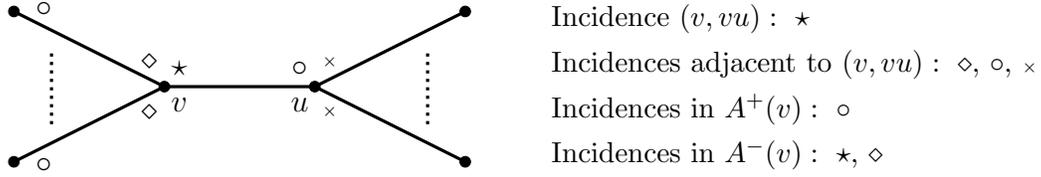
\begin{figure}
\centering
\begin{tikzpicture}[x=1cm,y=1cm]
%sommets
\SOMMET{0,0} \SOMMET{2,0}
\SOMMET{-2,-1} \SOMMET{-2,1} 
\SOMMET{4,-1} \SOMMET{4,1} 
\ETIQUETTE{0.2,0}{$v$} \ETIQUETTE{1.8,0}{$u$}
%aretes
\draw[very thick] (0,0) -- ++(2,0);
\draw[very thick] (0,0) -- ++(-2,-1);
\draw[very thick] (0,0) -- ++(-2,1);
\draw[very thick] (2,0) -- ++(2,-1);
\draw[very thick] (2,0) -- ++(2,1);
%pointillés
\draw[very thick,dotted] (-1.5,-0.5) -- ++(0,1);
\draw[very thick,dotted] (3.5,-0.5) -- ++(0,1);
%incidences
\node[above] at (0.2,0) {$\star$};
\node[above] at (-0.2,0.1) {$\diamond$};
\node[below] at (-0.2,-0.1) {$\diamond$};
\node[above] at (1.8,0) {$\circ$};
\node[above] at (-1.6,0.8) {$\circ$};
\node[below] at (-1.6,-0.8) {$\circ$};
\node[above] at (2.2,0.1) {{\tiny $\times$}};
\node[below] at (2.2,-0.1) {{\tiny $\times$}};
%légende
\node[right] at (5,0.9) {\small Incidence $(v,vu):\ \star$};
\node[right] at (5,0.3) {\small Incidences adjacent to $(v,vu):\ \diamond$, $\circ$, {\tiny $\times$}};
\node[right] at (5,-0.3) {\small Incidences in $A^+(v):\ \circ$};
\node[right] at (5,-0.9) {\small Incidences in $A^-(v):\ \star$, $\diamond$};
\end{tikzpicture}
\caption{Adjacent incidences.}
\label{fig:adjacent-incidences}
\end{figure}
%%%%%%%%%%%%%%%%%%%%%%%%%%%%%%%%%%%%%%%%%%%%%%%%%%%%%%%%%%%%%%%%%%%%%

A graph $G$ is \emph{$d$-degenerated} if every subgraph of~$G$ contains a vertex of degree at most $d$.
By a simple inductive argument, it is easy to prove that every $d$-degenerate graph has 
chromatic number, as well as choice number,
at most $d+1$~\cite[Proposition 2.2]{A93}.
Let $v$ be any vertex of~$G$ with degree at most $d$.
Every incidence of the form $(v,vu)$ has then at most $\Delta(G)+2d-2$ adjacent incidences in $G$.
Therefore, the incidence graph $I_G$ is $(\Delta(G)+2d-2)$-degenerate whenever $G$ is $d$-degenerate,
and we have:

\begin{theorem}
For every $d$-degenerated graph $G$, $\chi_i(G)\le\ch_i(G)\le\Delta(G)+2d-1$.
\label{th:degenerated}
\end{theorem}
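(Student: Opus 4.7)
The inequality $\chi_i(G)\le\ch_i(G)$ is a general fact already recorded at the beginning of Section~\ref{sec:preliminary}, so the content of the statement lies in the upper bound on $\ch_i(G)$. My plan is to combine two ingredients: the standard fact, cited just above the theorem, that every $d'$-degenerate graph is $(d'+1)$-choosable, together with a structural claim that the incidence graph $I_G$ is itself $(\Delta(G)+2d-2)$-degenerate whenever $G$ is $d$-degenerate. Since $\ch_i(G)=\ch(I_G)$, applying the first ingredient to~$I_G$ with $d'=\Delta(G)+2d-2$ will then yield $\ch_i(G)\le\Delta(G)+2d-1$ and finish the proof.

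The quantitative input for the degeneracy claim is Observation~\ref{obs:incidences}: any incidence $(v,vu)$ has exactly $2\deg_G(v)+\deg_G(u)-2$ neighbours in~$I_G$, so that if $\deg_G(v)\le d$ this number is at most $2d+\Delta(G)-2$. To turn this into a degeneracy statement I would, given any non-empty subgraph $H$ of~$I_G$, consider the subgraph $G_H\subseteq G$ whose edges are precisely those $e\in E(G)$ for which at least one incidence $(w,e)$ belongs to~$V(H)$. As a subgraph of a $d$-degenerate graph, $G_H$ is itself $d$-degenerate and hence contains a vertex~$v$ of degree at most~$d$ in~$G_H$; an incidence of $H$ of the form $(v,vu)$ then has at most $\Delta(G)+2d-2$ neighbours in~$H$, by the quantitative bound above combined with $|A^-(v)\cap V(H)|\le d$, $|A^+(v)\cap V(H)|\le d$ and $|A^-(u)\cap V(H)|\le\Delta(G)$.

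The main technical obstacle is precisely the last sentence: one must choose the incidence at~$v$ so as to exploit the overlap $A^+(v)\cap A^-(u)=\{(u,uv)\}$ and save a full unit in the degree count. A short case distinction on which of the two incidences $(v,vu)$ and $(u,uv)$ belong to~$V(H)$ for each edge $vu\in E(G_H)$ handles the generic situation; the mildly degenerate case where no internal incidence at~$v$ lies in~$V(H)$ can be dealt with separately, for instance by recursing on the subgraph $H''$ of $H$ obtained after removing all incidences of~$H$ involving~$v$, which is strictly smaller and yields a low-degree vertex of~$H''$ that remains of low degree in~$H$ itself. Once the degeneracy of~$I_G$ is certified, the choice-number bound is immediate.
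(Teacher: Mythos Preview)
Your overall strategy coincides with the paper's: the text preceding the theorem sketches exactly your two ingredients (degeneracy of~$I_G$, then greedy choosability of degenerate graphs), and you are right that the content lies in bounding the degeneracy of~$I_G$ by $\Delta(G)+2d-2$. Your treatment of the main case---when the low-degree vertex~$v$ of~$G_H$ carries an internal incidence $(v,vu)\in V(H)$---is correct, and in fact more careful than the paper, which does not explicitly discuss arbitrary subgraphs~$H$ at all.

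There is, however, a genuine gap in your handling of the residual case $A^-(v)\cap V(H)=\emptyset$. After deleting the incidences $(u,uv)$ to form~$H''$, a low-degree vertex $(x,xy)$ of~$H''$ may acquire up to two additional neighbours when passing back to~$H$, namely $(x,xv)$ and $(y,yv)$ if these were present in~$V(H)$; so the assertion that it ``remains of low degree in~$H$ itself'' is unjustified and can fail by~$2$. A clean repair avoids recursion altogether: take a degeneracy ordering $w_1,\dots,w_m$ of~$G_H$ and let $v=w_{i^*}$ be the vertex of \emph{largest} index having an internal incidence in~$H$. If every $G_H$-neighbour of~$v$ has smaller index, then $\deg_{G_H}(v)\le d$ by the ordering property and your main-case bound applies. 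Otherwise $v$ has a neighbour $u=w_j$ with $j>i^*$, whence $A^-(u)\cap V(H)=\emptyset$ and in particular $(u,uv)\notin V(H)$, forcing $(v,vu)\in V(H)$; its degree in~$H$ is then $|A^-(v)\cap V(H)|+|A^+(v)\cap V(H)|-1\le\Delta(G)+d-1\le\Delta(G)+2d-2$, since only the at most~$d$ lower-indexed neighbours of~$v$ can contribute to $A^+(v)\cap V(H)$.
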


%In~\cite{HSZ04}, Hosseini Dolama and Sopena proved that every $d$-degenerated graph $G$ admits an incidence
%colouring such that, for every vertex $v$ of~$G$,
%the set $A^+(v)$ uses at most $d$ distinct colours.

Since every tree is 1-degenerated, every $K_4$-minor free graph (and thus every outerplanar graph) is 2-degenerated,
%every square grid is 2-degenerated
and every planar graph is 5-degenerated, Theorem~\ref{th:degenerated} gives
the following:

\begin{corollary}For every graph $G$,
\begin{enumerate}
  \item if $G$ is a  tree, then  $\ch_i(G) = \Delta(G)+1$,
  \item if $G$ is a $K_4-$minor free graph (resp. an outerplanar graph), then  $\ch_i(G)\leq \Delta(G)+3$,
  \item if $G$ is a planar graph,  then $\ch_i(G)\leq \Delta(G)+9$.
\end{enumerate}
\label{cor:degenerated}
\end{corollary}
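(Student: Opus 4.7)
The plan is to verify that each of the three graph classes is $d$-degenerate for the value of $d$ advertised, and then invoke Theorem~\ref{th:degenerated} to read off the upper bound $\Delta(G)+2d-1$. Concretely, trees are $1$-degenerate (every tree has a leaf, and any subgraph of a tree is a forest, hence also has a leaf), $K_4$-minor free graphs are $2$-degenerate (a classical fact, since any minor-closed family avoiding $K_4$ has the property that every such graph contains a vertex of degree at most $2$; outerplanar graphs inherit this because they are $K_4$-minor free), and planar graphs are $5$-degenerate (an immediate consequence of Euler's formula, since any planar graph has a vertex of degree at most $5$, and subgraphs of planar graphs are planar).

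For item~1, Theorem~\ref{th:degenerated} with $d=1$ yields $\ch_i(G)\le\Delta(G)+1$. The matching lower bound $\ch_i(G)\ge\Delta(G)+1$ comes for free: if $\Delta(G)\ge 2$, it is given directly by Proposition~\ref{prop:bounds}; if $\Delta(G)=1$ the tree is just $K_2$, whose two incidences are adjacent, so $\ch_i(K_2)=2=\Delta+1$; and if $\Delta(G)=0$ there is nothing to prove. Combining gives the equality stated. For item~2, applying Theorem~\ref{th:degenerated} with $d=2$ gives $\ch_i(G)\le\Delta(G)+3$, which covers the outerplanar case as a special instance. For item~3, applying Theorem~\ref{th:degenerated} with $d=5$ gives $\ch_i(G)\le\Delta(G)+9$.

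There is essentially no obstacle here: the corollary is a one-line consequence of the theorem together with standard degeneracy bounds. The only subtle point worth spelling out is the lower bound in item~1, since Proposition~\ref{prop:bounds} requires $\Delta(G)\ge 1$ to be meaningful and the sharper Proposition~\ref{prop:bounds-2} assumes $\Delta(G)\ge 2$; hence the small-degree cases of a tree should be mentioned explicitly so that the asserted equality, and not merely an inequality, is justified in every case.
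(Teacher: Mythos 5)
Your proof is correct and follows exactly the route the paper takes: establish the degeneracy constants $d=1,2,5$ for the three classes and plug them into Theorem~\ref{th:degenerated}, with the lower bound in item~1 supplied by Proposition~\ref{prop:bounds}. Your extra care about the small-degree cases of trees is a reasonable refinement of a point the paper leaves implicit, but it is not a different argument.
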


%%%%%%%%%%%%%%%%%%%%%%%%%%%%%%%%%%%%%%%%%%%%%%%%%%%%%%%%%%%%%%%%%%%%%%%%%%%%%%%%%%%%%%%%%%%%%%%%%%%%%%%%%%%%%%%%%%%%%%%%%%%%%
%%%%%%%%%%%%%%%%%%%%%%%%%%%%%%%%%%%%%%%%%%%%%%%%%%%%%%%%%%%%%%%%%%%%%%%%%%%%%%%%%%%%%%%%%%%%%%%%%%%%%%%%%%%%%%%%%%%%%%%%%%%%%
%%%%%%%%%%%%%%%%%%%%%%%%%%%%%%%%% SQUARE GRIDS

\section{Square grids}\label{sec:grids}

The \emph{square grid} $G_{m,n}$ is the graph defined as the Cartesian product of two paths of respective order $m$ and $n$, 
that is, $G_{m,n}=P_m\,\Box\, P_n$.
Since every square grid is 2-degenerated, 
Theorem~\ref{th:degenerated} gives $\ch_i(G_{m,n})\le\Delta(G_{m,n})+3\le 7$ for every $m$ and $n$, $m\ge n\ge 2$.
In this section, we prove that this bound can be decreased to~5 if $n=2$ and to~6 if $n\ge 3$.

We first prove the following useful lemma.

%%%%%%%%%%%%%%%%%%%%%%%%%%%%%%%%%%%%%%%%%%%%%%%%%%%%%%%%%%%%%%
\begin{figure}
\begin{center}
\begin{tikzpicture}[domain=0:10,x=0.7cm,y=0.7cm]
%sommets
\SOMMET{-3,0} \SOMMET{0,0} \SOMMET{3,0} \SOMMET{6,0}
\SOMMET{0,-3} \SOMMET{3,-3} 
\SOMMET{0,3} \SOMMET{3,3}  \SOMMET{6,3}
\SOMMET{3,6} 
%arêtes
\draw[very thick] (-3,0) -- (6,0);
\draw[very thick] (0,3) -- (6,3);
\draw[very thick] (0,-3) -- (0,3);
\draw[very thick] (3,-3) -- (3,6);
%noms sommets
\node[above] at (0.3,0) {$u$};
\node[above] at (2.7,0) {$x$};
\node[below] at (0.3,3) {$v$};
\node[below] at (2.7,3) {$w$};
\node[left] at (-3,0.1) {$u'$};
\node[below] at (0,-3) {$u''$};
\node[below] at (3,-3) {$x''$};
\node[right] at (6,0.1) {$x'$};
\node[right] at (6,3.1) {$w''$};
\node[above] at (3,6) {$w'$};
%labels
\node[above] at (-2.5,-0.8) {$\alpha'_1$};
\node[above] at (-0.5,-0.8) {$\alpha_1$};
\node[left] at (0,0.7) {$\alpha_2$};
\node[left] at (0,2.3) {$\alpha'_2$};
\node[above] at (2.3,3) {$\beta_1$};
\node[above] at (3.7,3) {$\beta_3$};
\node[right] at (3,2.3) {$\beta_4$};
\node[right] at (3,4.2) {$\beta_2$};
\node[below] at (0.8,0) {$[a]$};
\node[below] at (2.2,0) {$[c]$};
\node[right] at (3,0.7) {$[d]$};
\node[right] at (0,-1.2) {$[b]$};
\end{tikzpicture}
\caption{The partially $L_0$-list incidence coloured graph $H_0$ of Lemma~\ref{lem:grid}.}
\label{fig:grid}
\end{center}
\end{figure}
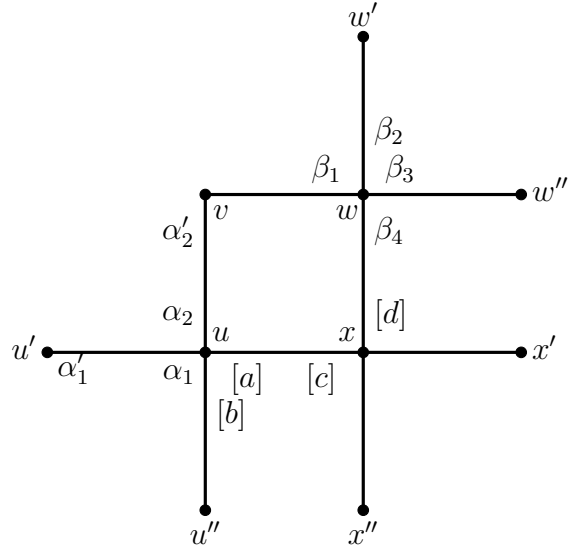
%%%%%%%%%%%%%%%%%%%%%%%%%%%%%%%%%%%%%%%%%%%%%%%%%%%%%%%%%%%%%%%%%%%%%%%%

\begin{lemma}
Let $H_0$ be the graph depicted in Figure~\ref{fig:grid}, $L_0$ be any list assignment of $H_0$
such that $|L_0(i,ij)|\ge 6$ for every incidence $(i,ij)$ of $H_0$,
and $\sigma_0$ be the partial
$L_0$-list incidence colouring of $H_0$ using colours from the set
$\{\alpha_1,\alpha'_1,\alpha_2,\alpha'_2\,\beta_1,\beta_2,\beta_3,\beta_4\}$ 
depicted in Figure~\ref{fig:grid}.
Then, there exist 
$a\in L_0(u,ux)\setminus\{\alpha_1,\alpha'_1,\alpha_2,\alpha'_2\}$,
$b\in L_0(u,uu'')\setminus\{\alpha_1,\alpha'_1,\alpha_2,\alpha'_2\}$,
$c\in L_0(x,xu)\setminus\{\alpha_1,\alpha_2,\beta_4\}$, and
$d\in L_0(x,xw)\setminus\{\beta_1,\beta_2,\beta_3,\beta_4\}$,
such that $|\{a,b,c\}|=|\{a,c,d\}|=3$,
so that $\sigma_0$ can be extended to colour the four incidences
$(u,ux)$, $(u,uu'')$, $(x,xu)$ and $(x,xw)$.
\label{lem:grid}
\end{lemma}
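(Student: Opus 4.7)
My plan is first to recast the extension problem combinatorially. For each of the four uncoloured target incidences, I identify, via Observation~\ref{obs:incidences}, the set of colours forbidden by already-coloured adjacent incidences, obtaining the residual sets $A := L_0(u,ux) \setminus \{\alpha_1,\alpha'_1,\alpha_2,\alpha'_2\}$, $B := L_0(u,uu'') \setminus \{\alpha_1,\alpha'_1,\alpha_2,\alpha'_2\}$, $C := L_0(x,xu) \setminus \{\alpha_1,\alpha_2,\beta_4\}$, and $D := L_0(x,xw) \setminus \{\beta_1,\beta_2,\beta_3,\beta_4\}$. Since each list has size at least~$6$, this yields $|A|,|B|,|D| \ge 2$ and $|C| \ge 3$. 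A direct adjacency check among the four target incidences, in which only $(u,uu'')$ and $(x,xw)$ turn out not to be adjacent, shows that the mutual constraints reduce to $a \neq b$, $a \neq c$, $a \neq d$, $b \neq c$, $c \neq d$; in particular $b$ and $d$ may coincide.

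My key reduction is that it suffices to find a pair $(a,c) \in A\times C$ with $a \neq c$, $\{a,c\} \neq B$, and $\{a,c\} \neq D$. Indeed, since $|B| \ge 2$ and $B \neq \{a,c\}$, the set $B \setminus \{a,c\}$ is non-empty, so I may pick $b \in B \setminus \{a,c\}$, and analogously $d \in D \setminus \{a,c\}$; all five required inequalities are then automatic.

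I would establish existence of such a pair by cases. In the generic case $|A| \ge 3$, a crude count is enough: the bad ordered pairs $(a,c)$ (those with $a=c$ or $\{a,c\} \in \{B,D\}$) number at most $|A\cap C|+2+2 \le |A|+4$, while $|A|\cdot |C| \ge 3|A|$, leaving at least $2|A|-4 \ge 2$ good pairs. The tight case $|A|=|B|=|D|=2$, $|C|=3$ I split on whether $A \subseteq C$. If $A \not\subseteq C$, I pick $a_0 \in A\setminus C$; then $a_0 \neq c$ holds automatically for every $c \in C$, and at most two $c$'s are excluded by $\{a_0,c\} \in \{B,D\}$, leaving a good~$c$ since $|C| \ge 3$. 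If $A \subseteq C$, I write $A=\{a_1,a_2\}$, pick $c_0 \in C\setminus A$ (non-empty because $|C|>|A|$), and test the candidates $(a_1,c_0)$ and $(a_2,c_0)$. If neither is good, the failure forces $\{\{a_1,c_0\},\{a_2,c_0\}\} = \{B,D\}$, whence $c_0 \in B\cap D$, so $A \neq B$ and $A \neq D$, and the swapped pair $(a_2,a_1) \in A\times C$ (valid because $A\subseteq C$) is good.

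The delicate step is this last subcase: both obvious candidates $(a_i,c_0)$ may fail simultaneously, but the failure is rigid enough to pin down the sets $B$ and $D$, which in turn guarantees that the swapped pair $(a_2,a_1)$ avoids both forbidden two-element sets. Everything else reduces to the initial adjacency verification and the crude count.
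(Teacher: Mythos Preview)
Your proof is correct, and it takes a genuinely different route from the paper's. The paper argues concretely with the actual forbidden colours: it first dispatches the cases where $|L_0(x,xu)\cap\{\alpha_1,\alpha_2,\beta_4\}|\le 2$ or $|L_0(x,xw)\cap\{\beta_1,\beta_2,\beta_3,\beta_4\}|\le 3$ by ad-hoc greedy choices, and in the remaining tight situation it branches on whether $\beta_4$ lies in $L_0(u,ux)\setminus\{\alpha_1,\alpha'_1,\alpha_2,\alpha'_2\}$ or in $L_0(u,uu'')\setminus\{\alpha_1,\alpha'_1,\alpha_2,\alpha'_2\}$, with further sub-branches on intersections of the residual lists. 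Your abstraction to sets $A,B,C,D$ with $|A|,|B|,|D|\ge 2$, $|C|\ge 3$ and the reduction to finding a single pair $(a,c)\in A\times C$ avoiding $a=c$, $\{a,c\}=B$, $\{a,c\}=D$ is cleaner and more transparent; the counting argument for $|A|\ge 3$ and the two-subcase analysis for $|A|=2$ are short and self-contained, and the ``swapped pair'' trick in the final subcase is a nice touch that the paper's proof does not use.

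One expository point worth tightening: you label the second case as ``the tight case $|A|=|B|=|D|=2$, $|C|=3$'', but what you actually need to cover is all of $|A|=2$. Your arguments in both subcases in fact go through verbatim for arbitrary $|B|,|D|\ge 2$ and $|C|\ge 3$ (if $|B|\ge 3$ or $|D|\ge 3$ then $\{a_i,c_0\}$ cannot equal that set, so the ``both candidates fail'' scenario simply cannot arise for that set), so the mathematics is complete; just adjust the phrasing so that a reader does not think a case has been omitted.
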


\begin{proof}
Note first that $|L_0(u,ux)\setminus\{\alpha_1,\alpha'_1,\alpha_2,\alpha'_2\}|\ge 2$
and $|L_0(u,uu'')\setminus\{\alpha_1,\alpha'_1,\alpha_2,\alpha'_2\}|\ge 2$, so that
we can always choose $a$ and $b$ as required.

If $|L_0(x,xu)\cap\{\alpha_1,\alpha_2,\beta_4\}|\le 2$,
then we can choose 
$d\in L_0(x,xw)\setminus\{\beta_1,\beta_2,\beta_3,\beta_4,a\}$
and $c\in L_0(x,xu)\setminus\{\alpha_1,\alpha_2,\beta_4,a,b,d\}$.

Similarly, if $|L_0(x,xw)\cap\{\beta_1,\beta_2,\beta_3,\beta_4\}|\le 3$,
then we can choose
$c\in L_0(x,xu)\setminus\{\alpha_1,\alpha_2,\beta_4,a,b\}$
and $d\in L_0(x,xw)\setminus\{\beta_1,\beta_2,\beta_3,\beta_4,a,c\}$.

Suppose now that $\{\alpha_1,\alpha_2,\beta_4\}\subseteq L_0(x,xu)$
and $\{\beta_1,\beta_2,\beta_3,\beta_4\}\subseteq L_0(x,xw)$.
We consider three cases.
\begin{enumerate}
\item
If $\beta_4\in L_0(u,ux)\setminus\{\alpha_1,\alpha'_1,\alpha_2,\alpha'_2\}$,
then we set $a=\beta_4$.
We can then choose
$b\in L_0(u,uu'')\setminus\{\alpha_1,\alpha'_1,\alpha_2,\alpha'_2,\beta_4\}$,
$c\in L_0(x,xu)\setminus\{\alpha_1,\alpha_2,\beta_4,b\}$, and
$d\in L_0(x,xw)\setminus\{\beta_1,\beta_2,\beta_3,\beta_4,c\}$.

\item 
If $\beta_4\notin L_0(u,ux)\setminus\{\alpha_1,\alpha'_1,\alpha_2,\alpha'_2\}$
and $\beta_4\in L_0(u,uu'')\setminus\{\alpha_1,\alpha'_1,\alpha_2,\alpha'_2\}$,
then we set $b=\beta_4$.
Again, we can then choose
$a\in L_0(u,ux)\setminus\{\alpha_1,\alpha'_1,\alpha_2,\alpha'_2,\beta_4\}$,
$d\in L_0(x,xw)\setminus\{\beta_1,\beta_2,\beta_3,\beta_4,a\}$, and
$c\in L_0(x,xu)\setminus\{\alpha_1,\alpha_2,\beta_4,a,d\}$.

\item 
Suppose that none of the previous cases occurs.
Let $\{\varepsilon_1,\varepsilon_2\}\subseteq L_0(u,ux)\setminus\{\alpha_1,\alpha'_1,\alpha_2,\alpha'_2\}$ and
$\{\varepsilon_3,\varepsilon_4\}\subseteq L_0(u,uu'')\setminus\{\alpha_1,\alpha'_1,\alpha_2,\alpha'_2\}$.
We consider two subcases.
\begin{enumerate}
\item 
If $\{\varepsilon_1,\varepsilon_2\} \cap \{\varepsilon_3,\varepsilon_4\} = \emptyset$, 
we first choose
$d\in L_0(x,xw)\setminus\{\beta_1,\beta_2,\beta_3,\beta_4\}$ and
$c\in L_0(x,xu)\setminus\{\alpha_1,\alpha_2,\beta_4\}$
in such a way that $c\neq d$ and $\{\varepsilon_1,\varepsilon_2\}\neq\{c,d\}$
(this can be done since we have at least two choices for $d$, and then still two choices for $c$).
We then choose $a\in \{\varepsilon_1,\varepsilon_2\}\setminus\{c,d\}$
and $b\in \{\varepsilon_3,\varepsilon_4\}\setminus\{c\}$.

\item 
Otherwise, let $\mu\in \{\varepsilon_1,\varepsilon_2\} \cap \{\varepsilon_3,\varepsilon_4\}$.
We consider two subcases.
\begin{enumerate}
\item 
If $\mu\notin L_0(x,xu)$ or $\mu\notin L_0(x,xw)$, then we set $a=\mu$ and
$b\in \{\varepsilon_3,\varepsilon_4\}$ with $b\neq\mu$.
Now, if $\mu\notin L_0(x,xu)$,
we then choose 
$d\in L_0(x,xw)\setminus\{\beta_1,\beta_2,\beta_3,\beta_4,\mu\}$ and
$c\in L_0(x,xu)\setminus\{\alpha_1,\alpha_2,\beta_4,b,d\}$.
Otherwise (in which case we have $\mu\notin L_0(x,xw)$), 
we then choose 
$c\in L_0(x,xu)\setminus\{\alpha_1,\alpha_2,\beta_4,a,b\}$, and
$d\in L_0(x,xw)\setminus\{\beta_1,\beta_2,\beta_3,\beta_4,c\}$.

\item 
Suppose finally that $\mu\in L_0(x,xu) \cap L_0(x,xw)$.
If $\mu\notin\{\beta_1,\beta_2,\beta_3,\beta_4\}$, 
then we set $b=d=\mu$ and $a\in \{\varepsilon_1,\varepsilon_2\}$ with $a\neq\mu$.
We then choose
$c\in L_0(x,xu)\setminus\{\alpha_1,\alpha_2,\beta_4,a,\mu\}$.
Otherwise (that is, $\mu\in\{\beta_1,\beta_2,\beta_3,\beta_4\}$),
we set $a=\mu$ and
$b\in \{\varepsilon_3,\varepsilon_4\}$ with $b\neq\mu$,
so that we can choose
$c\in L_0(x,xu)\setminus\{\alpha_1,\alpha_2,\beta_4,\mu,b\}$ and
$d\in L_0(x,xw)\setminus\{\beta_1,\beta_2,\beta_3,\beta_4,c\}$.

\end{enumerate}

\end{enumerate}

\end{enumerate}

In all cases, the colours $a$, $b$, $c$ and $d$ clearly satisfy the requirements of the lemma.
\end{proof}

%%%%%%%%%%%%%%%%%%%%%%%%%%%%%%%%%%%%%%%%%%%%%%%%%%%%%%%%%%%%%%
\begin{figure}
\begin{center}
\begin{tikzpicture}[domain=0:10,x=0.7cm,y=0.7cm]
%sommets
\SOMMET{0,-3} \SOMMET{3,-3} \SOMMET{6,-3} \SOMMET{9,-3} 
\SOMMET{0,0} \SOMMET{3,0} \SOMMET{6,0} \SOMMET{9,0} 
\SOMMET{0,3} \SOMMET{3,3} \SOMMET{6,3} \SOMMET{9,3} 
\SOMMET{0,6} \SOMMET{3,6} \SOMMET{6,6} \SOMMET{9,6} 
\SOMMET{0,9} \SOMMET{3,9} \SOMMET{6,9} \SOMMET{9,9} 

%arêtes
\draw[very thick] (0,-3) -- (9,-3);
\draw[very thick] (0,0) -- (9,0);
\draw[very thick] (0,3) -- (9,3);
\draw[very thick] (0,6) -- (9,6);
\draw[very thick] (0,9) -- (9,9);
\draw[very thick] (0,-3) -- (0,9);
\draw[very thick] (3,-3) -- (3,9);
\draw[very thick] (6,-3) -- (6,9);
\draw[very thick] (9,-3) -- (9,9);

%step 1
\node[above] at (0.4,8.9) {{\scriptsize 1}};
\node[above] at (2.6,8.9) {{\scriptsize 1}};
\node[above] at (3.4,8.9) {{\scriptsize 1}};
\node[above] at (5.6,8.9) {{\scriptsize 1}};
\node[above] at (6.4,8.9) {{\scriptsize 1}};
\node[above] at (8.6,8.9) {{\scriptsize 1}};

\node[left] at (0,8.6) {{\scriptsize 1}};
\node[left] at (3,8.6) {{\scriptsize 1}};
\node[left] at (6,8.6) {{\scriptsize 1}};
\node[left] at (9,8.6) {{\scriptsize 1}};

\node[left] at (0,6.7) {{\scriptsize 1}};
\node[left] at (0,5.6) {{\scriptsize 1}};
\node[left] at (0,3.7) {{\scriptsize 1}};
\node[left] at (0,2.6) {{\scriptsize 1}};
\node[left] at (0,0.7) {{\scriptsize 1}};
\node[left] at (0,-0.4) {{\scriptsize 1}};
\node[left] at (0,-2.3) {{\scriptsize 1}};

\node[above] at (0.4,6) {{\scriptsize 1}};
\node[above] at (0.4,3) {{\scriptsize 1}};
\node[above] at (0.4,0) {{\scriptsize 1}};
\node[above] at (0.4,-3) {{\scriptsize 1}};

%step 2
\node[above] at (2.6,5.9) {{\scriptsize 2}};
\node[above] at (3.4,5.9) {{\scriptsize 2}};
\node[above] at (5.6,5.9) {{\scriptsize 2}};
\node[above] at (6.4,5.9) {{\scriptsize 2}};
\node[above] at (8.6,5.9) {{\scriptsize 2}};

\node[left] at (3,6.7) {{\scriptsize 2}};
\node[left] at (6,6.7) {{\scriptsize 2}};
\node[left] at (9,6.7) {{\scriptsize 2}};

\node[left] at (3,5.6) {{\scriptsize 2}};
\node[left] at (6,5.6) {{\scriptsize 2}};
\node[left] at (9,5.6) {{\scriptsize 2}};

%step 3
\node[above] at (2.6,2.9) {{\scriptsize 3}};
\node[above] at (3.4,2.9) {{\scriptsize 3}};
\node[above] at (5.6,2.9) {{\scriptsize 3}};
\node[above] at (6.4,2.9) {{\scriptsize 3}};

\node[left] at (3,3.7) {{\scriptsize 3}};
\node[left] at (6,3.7) {{\scriptsize 3}};

\node[left] at (3,2.6) {{\scriptsize 3}};
\node[left] at (6,2.6) {{\scriptsize 3}};

\node[above] at (2.6,-0.1) {{\scriptsize 3}};
\node[above] at (3.4,-0.1) {{\scriptsize 3}};
\node[above] at (5.6,-0.1) {{\scriptsize 3}};
\node[above] at (6.4,-0.1) {{\scriptsize 3}};

\node[left] at (3,0.7) {{\scriptsize 3}};
\node[left] at (6,0.7) {{\scriptsize 3}};

\node[left] at (3,-0.4) {{\scriptsize 3}};
\node[left] at (6,-0.4) {{\scriptsize 3}};

%step 4
\node[above] at (8.6,2.9) {{\scriptsize 4}};
\node[left] at (9,3.7) {{\scriptsize 4}};
\node[left] at (9,2.6) {{\scriptsize 4}};

\node[above] at (8.6,-0.1) {{\scriptsize 4}};
\node[left] at (9,0.7) {{\scriptsize 4}};
\node[left] at (9,-0.4) {{\scriptsize 4}};

%step 5
\node[above] at (2.6,-3.1) {{\scriptsize 5}};
\node[above] at (3.4,-3.1) {{\scriptsize 5}};
\node[above] at (5.6,-3.1) {{\scriptsize 5}};
\node[above] at (6.4,-3.1) {{\scriptsize 5}};
\node[above] at (8.6,-3.1) {{\scriptsize 5}};

\node[left] at (3,-2.3) {{\scriptsize 5}};
\node[left] at (6,-2.3) {{\scriptsize 5}};
\node[left] at (9,-2.3) {{\scriptsize 5}};

\end{tikzpicture}
\caption{Colouring the incidences of $G_{5,4}$ in five steps (Theorem~\ref{th:grid}).}
\label{fig:grid-algo}
\end{center}
\end{figure}
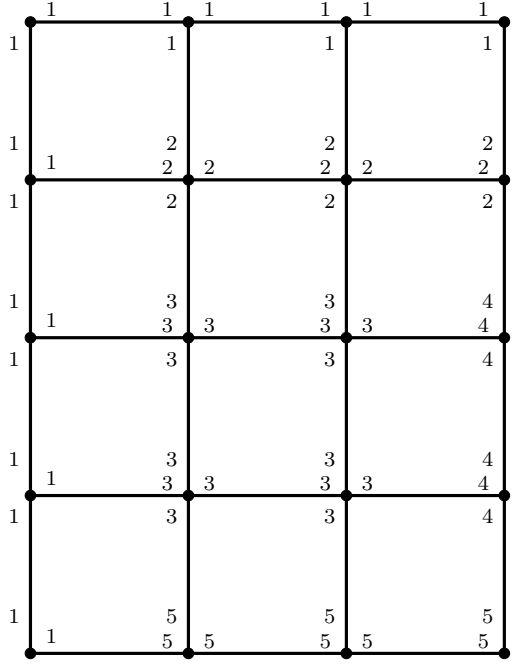
%%%%%%%%%%%%%%%%%%%%%%%%%%%%%%%%%%%%%%%%%%%%%%%%%%%%%%%%%%%%%%%%%%%%%%%%

We are now able to prove the main result of this section.

\begin{theorem}
For every integers $m$ and $n$, $m\ge n\ge 2$, we have
$$\left\{
  \begin{array}{ll}
     \ch_i(G_{m,n})\le 5, & \hbox{if $n=2$,} \\
     \ch_i(G_{m,n})\le 6, & \hbox{otherwise.}
  \end{array}
\right.$$
\label{th:grid}
\end{theorem}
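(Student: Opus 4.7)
The argument splits according to whether $n=2$ or $n\ge 3$. For the case $n=2$, the ladder $G_{m,2}$ has maximum degree $3$, so Theorem~\ref{th:degenerated} only yields $\ch_i(G_{m,2})\le 6$; to push this down to $5$, I would proceed by induction on the number of rungs. After colouring the first $j-1$ rungs and all horizontal edges between them, six new incidences arise when rung~$j$ is added: the two incidences on the rung itself and the four incidences on the two horizontal edges joining rung $j-1$ to rung~$j$. A brief local count shows that each new incidence has at most four forbidden colours coming from the already-coloured part, so lists of size $5$ leave at least one free choice, and a short case analysis in the spirit of Lemma~\ref{lem:grid}, adapted to the lower-degree ladder configuration, reconciles the internal compatibility constraints among the six new colours. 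The base case $m=2$ is covered by Theorem~\ref{th:cycles}.

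For the case $n\ge 3$, I would colour the incidences in successive phases following Figure~\ref{fig:grid-algo}. Phase~$1$ colours the ``L-shape'' consisting of the top row, the leftmost column and the external incidences of their vertices; since the induced fragment of the incidence graph $I_{G_{m,n}}$ has sufficiently small maximum degree, a greedy argument with lists of size $6$ handles this phase. The subsequent phases sweep the remaining rows from top to bottom, and within each row from left to right. At every sweep step I place the configuration of Lemma~\ref{lem:grid} with $v$ and $w$ in the already-completed row above, $u$ equal to the previously processed vertex of the current row, and $x$ equal to the next vertex to process. The pre-coloured pattern $\sigma_0$ required by the lemma is then exactly in place: the incidences $\alpha_1,\alpha'_1$ have been produced by the previous step of the current row's sweep, the incidences $\alpha_2,\alpha'_2$ have been produced when the row above was processed, and the four incidences $\beta_1,\beta_2,\beta_3,\beta_4$ all lie at $w$ inside that already-completed row above. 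The lemma then supplies colours for the four new incidences $(u,ux)$, $(u,uu'')$, $(x,xu)$, $(x,xw)$, advancing the sweep by one vertex.

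The main obstacle will be the boundary cases: the leftmost vertex of each row (where $u'$ does not exist), the rightmost vertex (where $x'$ does not exist), and the bottom row (where $u''$ and $x''$ do not exist). In each such situation, one or more of the pre-coloured incidences $\alpha_1,\alpha'_1,\beta_2,\beta_3$, or one of the colours $b, d$ to be selected, simply disappears, shrinking the corresponding forbidden-colour sets; the argument of Lemma~\ref{lem:grid}---or a straightforward specialisation of it---continues to deliver the required colours. One must also verify that Phase~$1$ produces precisely the initial pattern needed for the first sweep in row~$2$, and that the phase pattern of Figure~\ref{fig:grid-algo}, illustrated on $G_{5,4}$ with five phases, generalises to arbitrary $m\ge n\ge 3$ without any list being exhausted.
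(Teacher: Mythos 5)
Your plan matches the paper's proof in all essentials: for $n=2$ the paper colours the squares of the ladder sequentially with an explicit greedy ordering (so the extra case analysis you anticipate turns out to be unnecessary), and for $n\ge 3$ it carries out exactly the L-shape-then-row-sweep scheme you describe, invoking Lemma~\ref{lem:grid} at each interior step with $u,x$ in the current row and $v,w$ in the completed row above, and disposing of row~2, the two end columns and the bottom row by routine greedy counts. The details you defer --- the degenerate boundary configurations and the check that the initial phase sets up the first sweep --- are precisely the parts the paper settles by direct counting, so no idea is missing.
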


\begin{proof}
Let $V(G_{m,n})=\{v_{i,j}\ |\ 1\le i\le m,\ 1\le j\le n\}$, so that
$E(G_{m,n})=\{(v_{i,j},v_{i',j'})\ |\ |i-i'|+|j-j'|=1\}$.

Suppose first that $n=2$ and let $L$ be any list assignment of $G_{m,2}$ such that $|L(v,vu)|= 5$
for every incidence $(v,vu)$ of $G_{m,2}$.
We construct an $L$-list incidence colouring of $G_{m,2}$ as follows.

Let us denote by $S_i$, $1\le i\le m-1$, the $i^{th}$ \emph{square} of $G_{m,2}$, that is,
the subgraph  of $G_{m,2}$ induced by the set of vertices 
$\{v_{i,1},v_{i,2},v_{i+1,1},v_{i+1,2}\}$.
We first colour the incidences of $S_1$.
This can be done since every such incidence has four adjacent incidences.

Then, if $m\ge 3$, we colour the incidences of the remaining squares
sequentially, from $S_2$ to $S_{m-1}$.
For each such square $S_i$, 
we colour the incidences
$(v_{i,1},v_{i,1}v_{i+1,1})$, $(v_{i+1,1},v_{i+1,1}v_{i,1})$, 
$(v_{i,2},v_{i,2}v_{i+1,2})$, $(v_{i+1,2},v_{i+1,2}v_{i,2})$, 
$(v_{i+1,1},v_{i+1,1}v_{i+1,2})$ and $(v_{i+1,2},v_{i+1,2}v_{i+1,1})$,
in that order.
This can be done since, doing so, every such incidence has 
at most four already coloured adjacent incidences.

\medskip

Suppose now that $m\ge n\ge 3$ and let $L$ be any list assignment of $G_{m,n}$ such that $|L(v,vu)|=6$
for every incidence $(v,vu)$ of $G_{m,n}$.
We will construct an $L$-list incidence colouring of $G_{m,n}$
in five steps.
Figure~\ref{fig:grid-algo} depicts the grid $G_{5,4}$ and gives, for each of its
incidences, the number (from $1$ to $5$) of the step during which it will be coloured.

\begin{enumerate}
\item 
We first colour all internal incidences of vertices $v_{1,j}$, sequentially 
from $v_{1,1}$ to $v_{1,n}$,
and all internal incidences of vertices $v_{i,1}$, sequentially, 
from $v_{2,1}$ to $v_{m,1}$. 
This can be done since, doing so, every such incidence has at most 
three already coloured adjacent incidences.

\item 
We then colour all internal incidences of vertices $v_{2,j}$, sequentially
from $v_{2,2}$ to $v_{2,n}$.
For each such vertex $v_{2,j}$, we colour its internal incidences
$(v_{2,j},v_{2,j}v_{2,j-1})$, $(v_{2,j},v_{2,j}v_{1,j})$, 
$(v_{2,j},v_{2,j}v_{3,j})$ and $(v_{2,j},v_{2,j}v_{2,j+1})$, in that order
(note that $v_{2,n}$ has only the first three internal incidences).
This can be done since, doing so, every
such incidence has at most five already coloured adjacent incidences.

\item 
Now, if $m\ge 4$, then, for $i=2$ to $m-1$, we colour the uncoloured internal incidences
of $v_{i,j}$, sequentially from $v_{i,2}$ to $v_{i,n-1}$ (when $n\ge 4$).
Each ``row'' of internal incidences, corresponding to vertices $v_{i,2}$ to $v_{i,n-1}$,
is coloured as follows.
\begin{enumerate}
\item 
We colour the internal incidences 
$(v_{i,2},v_{i,2}v_{i-1,2})$ and $(v_{i,2},v_{i,2}v_{i,1})$
of $v_{i,2}$, in that order, which can be done since these two incidences have 
five already coloured adjacent incidences.

\item 
If $2\le j\le n-2$, then the set of vertices
$$\{v_{i,j-1},v_{i,j},v_{i,j+1},v_{i,j+2},v_{i+1,j},v_{i+1,j+1},v_{i-1,j},v_{i-1,j+1},v_{i-1,j+2},v_{i-2,j+1}\}$$ 
induces a subgraph of $G_{m,n}$ isomorphic to the graph $H_0$ of Lemma~\ref{lem:grid}.
Therefore, according to Lemma~\ref{lem:grid}, the four incidences
$(v_{i,j},v_{i,j}v_{i,j+1})$, $(v_{i,j},v_{i,j}v_{i+1,j})$,
$(v_{i,j+1},v_{i,j+1}v_{i,j})$ and $(v_{i,j+1},v_{i,j+1}v_{i-1,j+1})$
can be coloured with the colours $a$, $b$, $c$ and $d$ given by the lemma,
respectively.

\item 
We finally colour the two incidences 
$(v_{i,n-1},v_{i,n-1}v_{i,n})$ and $(v_{i,n-1},v_{i,n-1}v_{i+1,n-1})$, in that order,
which can be done since, doing so, these incidences have four and five  already
coloured adjacent incidences, respectively.
\end{enumerate}

\item 
If $m\ge 4$, we colour all internal incidences of vertices $v_{i,n}$, $3\le i\le m-1$, sequentially from 
$v_{3,n}$ to $v_{m-1,n}$.
For each such vertex $v_{i,n}$, we colour its internal incidences
$(v_{i,n},v_{i,n}v_{i,n-1})$, $(v_{i,n},v_{i,n}v_{i-1,n})$ and 
$(v_{i,n},v_{i,n}v_{i+1,n})$, in that order.
This can be done since, doing so, every such incidence has at most five already coloured adjacent incidences.

\item 
Finally, we colour all (uncoloured) internal incidences of vertices $v_{m,j}$, sequentially from 
$v_{m,2}$ to $v_{m,n}$.
For each such vertex $v_{m,j}$, we colour its internal incidences
$(v_{m,j},v_{m,j}v_{m-1,j})$, $(v_{m,j},v_{m,j}v_{m,j-1})$ and $(v_{m,j},v_{m,j}v_{m,j+1})$,
in that order (note that $v_{m,n}$ has only the first two internal incidences).
This can be done since every such incidence has at most five already coloured adjacent incidences.
\end{enumerate}

This completes the proof.
\end{proof}

%%%%%%%%%%%%%%%%%%%%%%%%%%%%%%%%%%%%%%%%%%%%%%%%%%%%%%%%%%%%%%%%%%%%%%%%%%%%%%%%%%%%%%%%%%%%%%%%%%%%%%%%%%%%%%%%%%%%%%%%%%%%%
%%%%%%%%%%%%%%%%%%%%%%%%%%%%%%%%%%%%%%%%%%%%%%%%%%%%%%%%%%%%%%%%%%%%%%%%%%%%%%%%%%%%%%%%%%%%%%%%%%%%%%%%%%%%%%%%%%%%%%%%%%%%%
%%%%%%%%%%%%%%%%%%%%%%%%%%%%%%%%% HALIN GRAPHS

\section{Halin graphs}\label{sec:Halin}

Recall first that the \emph{star} $S_n$, $n\ge 1$, is the complete bipartite graph $K_{1,n}$.
%while the \emph{double-star} $D_{m,n}$, $m\ge n\ge 1$, is the graph obtained from the stars $S_m$ and $S_n$
%by adding an edge joining the central vertex of $S_m$ to the central vertex of $S_n$.
Moreover, the \emph{wheel} $W_n$, $n\ge 3$, is the graph obtained from the cycle $C_n$ by adding
a new vertex adjacent to every vertex of~$C_n$.

A \emph{Halin graph} is a planar graph obtained from a tree of order at least 4 with no vertex of degree~2, by adding a cycle connecting all its leaves~\cite{H71}.
We call this cycle the \emph{outer cycle of~$G$}.
In particular, every wheel is a Halin graph.
Wang, Chen and Pang proved that $\chi_i(G)=\Delta(G)+1$ for every Halin graph $G$ with $\Delta(G)\ge 5$~\cite{WCP02},
Shiu and Sun~\cite{SS08} that $\chi_i(G)=5$ for every cubic Halin graph,
and Meng, Guo and Su that $\chi_i(G)\le\Delta(G)+2$ for every Halin graph $G$ with $\Delta(G)=4$~\cite{MGS12}.
%On the other hand, it is known, by a result of Maydanskiy~\cite{M05}, that if $G$ is subcubic, then $\chi_i(G)\le 5$.

In this section, we determine the incidence choice number of every Halin graph $G$ with $\Delta(G)\ge 6$
and provide upper bounds for Halin graphs with smaller maximum degree.
For every Halin graph $G$, we denote by $C_G$ the outer cycle of~$G$
and by $T_G$ the subgraph of~$G$ obtained by deleting all the edges of the outer cycle of~$G$.
The subgraph $T_G$ is thus a tree and, in particular, $T_G$ is a star if $G$ is a wheel.

We will prove four lemmas, from which the main result of this section will follow.
%
%The following proposition will be useful in Section~\ref{sec:Halin}.
We first prove a preliminary result, which says that for every tree $T$ and list-assignment $L$ of $T$
with  $|L(v,vu)|\ge\Delta(T)+k$ for every incidence $(v,vu)$ of $T$ and some integer $k\ge 1$,
one can pre-colour $k$ %pendent 
incidences of $T$ and extend this pre-colouring
to an $L$-list incidence colouring of $T$.

%%%%%%%%%%%%%%%%%%%%%%%%%%%%%%%%%%%%%%%%%%%%%%%%%%%%%%%%%%%%%%%%%%%%
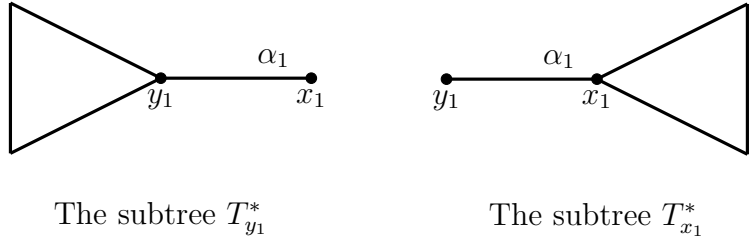
\begin{figure}
\centering
\begin{tikzpicture}[x=1cm,y=1cm]
%sommets
\SOMMET{0,0} \SOMMET{2,0}
%etiquettes
\node[below] at (0,0) {$y_1$};
\node[below] at (2,0) {$x_1$};
\node[above] at (1.5,0) {$\alpha_1$};
%arete
\draw[very thick] (0,0) -- (2,0);
%triangle
\draw[very thick] (0,0) -- (-2,1);
\draw[very thick] (0,0) -- (-2,-1);
\draw[very thick] (-2,1) -- (-2,-1);
%legende
\node[below] at (0,-1.5) {The subtree $T^*_{y_1}$};
\end{tikzpicture}
\hskip 1cm
\begin{tikzpicture}[x=1cm,y=1cm]
%sommets
\SOMMET{0,0} \SOMMET{2,0}
%etiquettes
\node[below] at (0,0) {$y_1$};
\node[below] at (2,0) {$x_1$};
\node[above] at (1.5,0) {$\alpha_1$};
%arete
\draw[very thick] (0,0) -- (2,0);
%triangle
\draw[very thick] (2,0) -- (4,1);
\draw[very thick] (2,0) -- (4,-1);
\draw[very thick] (4,1) -- (4,-1);
%legende
\node[below] at (2,-1.5) {The subtree $T^*_{x_1}$};
\end{tikzpicture}
\caption{Configurations for the proof of Proposition~\ref{prop:tree}.}
\label{fig:tree}
\end{figure}
%%%%%%%%%%%%%%%%%%%%%%%%%%%%%%%%%%%%%%%%%%%%%%%%%%%%%%%%%%%%%%%%%%%%%

\begin{proposition}
Let $T$ be a tree, $k\ge 1$ be an integer, and $L$ be a list-assignment of $T$ such that
$|L(v,vu)|\ge\Delta(T)+k$ for every incidence $(v,vu)$ in $T$.
For every set $\{(x_1,x_1y_1),\dots,(x_k,x_ky_k)\}$ of $k$ %pendent 
incidences in $T$
and every set $\{\alpha_1,\dots,\alpha_k\}$ of $k$ colours
such that $\alpha_i\in L(x_i,x_iy_i)$ for every~$i$, $1\le i\le k$,
and $\alpha_i\neq\alpha_j$ if $(x_i,x_iy_i)$ and $(x_j,x_jy_j)$ are adjacent, $1\le i<j\le k$,
there exists an $L$-list incidence colouring
$\sigma$ of $T$ such that $\sigma(x_i,x_iy_i)=\alpha_i$ for every~$i$, $1\le i\le k$.
\label{prop:tree}
\end{proposition}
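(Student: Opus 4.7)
The plan is to prove the statement by induction on $|V(T)|$. The base case $|V(T)|=2$ is a single edge with exactly two incidences; either $k=1$ and the unique unassigned incidence needs to avoid a single colour in a list of size $\Delta(T)+1=2$, or $k=2$ and both incidences are pre-coloured with distinct colours by the compatibility hypothesis.

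For the inductive step with $|V(T)|\ge 3$, I pick a leaf $u$ of $T$ with unique neighbour $v$, pass to $T'=T-u$, apply the induction hypothesis to $T'$ with a suitably adjusted pre-coloured set and possibly shrunk lists, and finally extend the resulting $L$-list incidence colouring back to $T$ by colouring the two incidences $(v,vu)$ and $(u,uv)$ at the removed edge. The two configurations pictured in Figure~\ref{fig:tree} describe the two delicate situations one must face: either $y_1$ is a leaf of $T$ (so the pre-coloured incidence $(x_1,x_1y_1)$ sits at $x_1$ in the interior of $T^*_{x_1}=T-y_1$), or $x_1$ is a leaf of $T$ (so it sits at $y_1$ in the interior of $T^*_{y_1}=T-x_1$).

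In the configuration where $y_1$ is a leaf (I take $u=y_1$), the incidence $(x_1,x_1y_1)$ disappears from $T'$; to guarantee that the colouring produced by induction remains compatible with the forced value $\sigma(x_1,x_1y_1)=\alpha_1$, I shrink the list of every incidence of $A^-_{T'}(x_1)\cup A^+_{T'}(x_1)$ by removing $\alpha_1$ (and, if $(y_1,y_1x_1)$ is itself the $j$-th pre-coloured incidence, by also removing $\alpha_j$). The shrunk lists still have size at least $\Delta(T')+(k-1)$ (respectively $+(k-2)$), so the induction hypothesis applies to $T'$ with $k-1$ (or $k-2$) pre-coloured incidences and furnishes a colouring $\sigma'$. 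I then set $\sigma(x_1,x_1y_1)=\alpha_1$ and finally colour $(y_1,y_1x_1)$ greedily: its adjacent incidences are $A^-(x_1)$, which has $\deg(x_1)\le\Delta(T)$ coloured elements, leaving at least $k\ge 1$ free colours in a list of size $\Delta(T)+k$. The configuration where $x_1$ is a leaf is entirely symmetric, using $T^*_{y_1}$ in place of $T^*_{x_1}$. When neither $x_1$ nor $y_1$ is a leaf of $T$, one looks for a leaf $u$ of $T$ not of the form $x_i$ or $y_i$; such a leaf exists because $T$ has at least two leaves, and removing $u$ reduces the problem without any list shrinking.

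The main obstacle is the arithmetic bookkeeping on list sizes: every colour one deletes from a shrunken list must be compensated by a matching decrease in the inductive parameter $k$. This is exactly why the statement has to be parametrised by $k$, rather than proved just for $k=1$: the reduction step naturally produces instances with smaller~$k$, and only a hypothesis of the form $|L|\ge \Delta+k$ propagates cleanly through the induction. The two configurations in the figure are precisely the shapes for which this bookkeeping balances; the last greedy step on $(y_1,y_1x_1)$ (or, symmetrically, on $(x_1,x_1y_1)$) then succeeds because a leaf-incidence has only the $\deg$ of its unique interior neighbour many adjacent incidences in $I_T$, which is comfortably below $\Delta(T)+k$.
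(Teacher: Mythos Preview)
Your inductive scheme breaks down in the ``easy'' case (your case~3), where you remove a leaf $u$ that is not of the form $x_i$ or $y_i$ and claim that ``removing $u$ reduces the problem without any list shrinking''. After you colour $T'=T-u$ by induction, you still have to produce a colour for the incidence $(v,vu)$ at the interior endpoint~$v$. But $(v,vu)$ is adjacent to every incidence of $A^-_{T'}(v)\cup A^+_{T'}(v)$, and all of these are already coloured; that is $2\deg_{T'}(v)=2\deg_T(v)-2$ forbidden colours. When $\deg_T(v)=\Delta(T)$ this equals $2\Delta(T)-2$, which is at least $\Delta(T)+k$ as soon as $k\le\Delta(T)-2$, so the greedy extension fails. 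A concrete instance: let $T$ be the double star on six vertices (two adjacent centres $c_1,c_2$, each with two pendants), $\Delta(T)=3$, $k=1$, lists of size~$4$, and pre-colour the single incidence $(c_1,c_1c_2)$. Then neither $c_1$ nor $c_2$ is a leaf and every leaf's neighbour has degree~$3$; whichever leaf $u$ you remove, the incidence $(c_i,c_iu)$ ends up with four coloured neighbours and only four colours in its list. (Your existence claim in case~3 is also not justified as written: ``$T$ has at least two leaves'' does not imply that some leaf avoids all $x_i,y_i$, since you only assumed this of $x_1,y_1$.)

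The paper's proof sidesteps this obstruction by inducting on $k$ rather than on $|V(T)|$. For the base case $k=1$ it splits $T$ at the edge $x_1y_1$ into the two overlapping subtrees $T^*_{x_1}$ and $T^*_{y_1}$ --- this is what Figure~\ref{fig:tree} actually depicts, not the ``$x_1$ is a leaf / $y_1$ is a leaf'' dichotomy you describe --- and colours each subtree greedily top-down from the shared edge; in that order every incidence sees at most $\Delta(T)$ already-coloured neighbours. For $k>1$ it simply deletes $\alpha_k$ from every list, applies the induction hypothesis with $k-1$ pre-coloured incidences, and then recolours $(x_k,x_ky_k)$ with~$\alpha_k$; since $\alpha_k$ was removed from every list, no conflict arises. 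This avoids ever having to greedily colour an internal incidence against $2\Delta(T)-2$ fixed neighbours.
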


\begin{proof}
The proof is by induction on $k$.
Let $L$ be a list-assignment of $T$ with $|L(v,vu)|\ge\Delta(T)+1$
for every incidence $(v,vu)$ in $T$, $(x_1,x_1y_1)$ be any incidence
%in $T$ (not necessarily pendent), 
in $T$, 
and $\alpha_1\in L(x_1,x_1y_1)$.
Let $T_{x_1}$ and $T_{y_1}$ denote the two components (trees) obtained
from $T$ by deleting the edge $x_1y_1$,
with $x_1\in V(T_{x_1})$ and $y_1\in V(T_{y_1})$.
We then denote by $T^*_{x_1}$ and $T^*_{y_1}$
the subtrees of $T$ obtained by adding the edge $x_1y_1$ to 
$T_{x_1}$ and $T_{y_1}$, respectively
(see Figure~\ref{fig:tree}),
and by $L_{x_1}$ and $L_{y_1}$ the restrictions of $L$
to $T^*_{x_1}$ and $T^*_{y_1}$, respectively.
The desired  $L$-list incidence colouring $\sigma$ of $T$ will be obtained
by combining 
an  $L_{x_1}$-list incidence colouring of $T^*_{x_1}$
with an $L_{y_1}$-list incidence colouring of $T^*_{y_1}$.

We construct $\sigma_{x_1}$ as follows. 
We first set $\sigma_{x_1}(x_1,x_1y_1)=\alpha_1$
and $\sigma_{x_1}(y_1,y_1x_1)=\beta_1$, for some $\beta_1\in L_{x_1}(y_1,y_1x_1)=L(y_1,y_1x_1)$.
Considering $y_1$ as the root of $T^*_{x_1}$, we can extend $\sigma_{x_1}$
to an  $L_{x_1}$-list incidence colouring of $T^*_{x_1}$ by colouring the incidences
in a top-bottom way, since, doing so, every uncoloured incidence will
have at most $\Delta(T^*_{x_1})\le\Delta(T)$ forbidden colours.
The colouring $\sigma_{y_1}$ is constructed similarly. 
We first set $\sigma_{y_1}(x_1,x_1y_1)=\alpha_1$
and $\sigma_{y_1}(y_1,y_1x_1)=\beta_1$,
and then colour the remaining incidences of $T^*_{y_1}$
in a top-bottom way, considering $x_1$ as the root of $T^*_{y_1}$.
Clearly, combining the colourings $\sigma_{x_1}$ and $\sigma_{y_1}$
produces an  $L$-list incidence colouring $\sigma$ of $T$ with $\sigma(x_1,x_1y_1)=\alpha_1$.

\smallskip

Suppose now that $k>1$.
Let $\{(x_1,x_1y_1),\dots,(x_k,x_ky_k)\}$ be a set of $k$ %pendent 
incidences in $T$
and $\{\alpha_1,\dots,\alpha_k\}$ be a set of $k$ colours
satisfying the conditions of the proposition.
Let $L'$ denote the list assignment of $T$ defined by
$L'(v,vu)=L(v,vu)\setminus\{\alpha_k\}$ for every incidence $(v,vu)$ in $T$.
Thanks to the induction hypothesis, there exists an  $L'$-list incidence colouring $\sigma'$
of $T$ such that $\sigma'(x_i,x_iy_i)=\alpha_i$ for every~$i$, $1\le i\le k-1$.
The required  $L$-list incidence colouring of $T$ is then obtained by
setting $\sigma(x_k,x_ky_k)=\alpha_k$ and $\sigma(v,vu)=\sigma'(v,vu)$
for every incidence $(v,vu)\neq(x_k,x_ky_k)$ in $T$.
\end{proof}

The next lemma gives a general upper bound
on the incidence choice number of Halin graphs.
Note that by Proposition~\ref{prop:bounds}, the corresponding bound is tight for every Halin graph
with maximum degree at least~6.

\begin{lemma}
If $G$ is a  Halin graph, then  $\ch_i(G)\leq \max(\Delta(G)+1, 7)$.
\label{lem:Halin1}
\end{lemma}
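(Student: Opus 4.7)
The plan is to colour the incidences of $G$ in two stages using two ingredients already at hand: the bound $\ch_i(T_G)\le\Delta(T_G)+1$ from Corollary~\ref{cor:degenerated}(1), and the bound $\ch_i(C_G)\le 4$ from Theorem~\ref{th:cycles}. Concretely, I would first colour all tree incidences (those of $T_G$), and then extend to the cycle incidences (those of $C_G$). Set $k=\max(\Delta(G)+1,7)$ and let $L$ be an arbitrary list assignment of $G$ with $|L(v,vu)|=k$ for every incidence.

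For the first stage, since $T_G$ is a tree with $\Delta(T_G)\le\Delta(G)$, Corollary~\ref{cor:degenerated}(1) yields $\ch_i(T_G)\le\Delta(G)+1\le k$, so the tree incidences admit an $L$-list incidence colouring $\sigma_T$. For the second stage, write $v_1,\ldots,v_n$ for the vertices of $C_G$ in cyclic order; each $v_i$ is a leaf of the tree, hence has a unique tree-neighbour $p_i$ and satisfies $\deg_G(v_i)=3$. The key observation is that, by Observation~\ref{obs:incidences}, every cycle incidence $(v_i,v_iv_{i+1})$ has its neighbours in $I_G$ drawn from $A^-(v_i)\cup A^+(v_i)\cup A^-(v_{i+1})$; inspecting these three sets, the tree incidences among them are exactly $(v_i,v_ip_i)$, $(p_i,p_iv_i)$ and $(v_{i+1},v_{i+1}p_{i+1})$, so at most three. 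Removing the corresponding $\sigma_T$-colours from $L(v_i,v_iv_{i+1})$ produces a reduced list $L'(v_i,v_iv_{i+1})$ of size at least $k-3\ge 4$.

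Since all remaining adjacencies of cycle incidences in $I_G$ are with other cycle incidences, it suffices to $L'$-list-colour $C_G$ as a standalone cycle; such a colouring $\sigma_C$ exists by Theorem~\ref{th:cycles}, and together with $\sigma_T$ it is an $L$-list incidence colouring of $G$. The main point requiring care is precisely the counting above: verifying that every cycle incidence sees \emph{at most three} tree incidences in $I_G$ is what locks the constant $7$ into the statement, because reduced lists of size $4=7-3$ are exactly what is needed to invoke $\ch_i(C_n)\le 4$ from Theorem~\ref{th:cycles}.
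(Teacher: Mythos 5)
Your proof is correct and follows essentially the same route as the paper's: colour $T_G$ first using the tree bound $\ch_i(T_G)\le\Delta(T_G)+1$, observe that each cycle incidence is adjacent to exactly three tree incidences (leaving residual lists of size at least $7-3=4$), and finish with $\ch_i(C_n)\le 4$ from Theorem~\ref{th:cycles}. Your explicit verification via Observation~\ref{obs:incidences} that the three tree incidences seen by $(v_i,v_iv_{i+1})$ are $(v_i,v_ip_i)$, $(p_i,p_iv_i)$ and $(v_{i+1},v_{i+1}p_{i+1})$ is exactly the counting the paper relies on.
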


\begin{proof}
Let $G$ be a Halin graph and $L$ be any list assignment of~$G$ such that
$$|L(v,vu)|= \max(\Delta(G)+1, 7)\ge 7$$
 for every incidence $(v,vu)$ of~$G$.
Let $C_G=v_0v_1\dots v_{k-1}v_0$.
Each vertex~$v_i$, $0\le i\le k-1$, has thus three neighbours in $G$, namely $v_{i-1}$, $v_{i+1}$
(subscripts are taken modulo~$k$), and some vertex $t_i\in V(T_G)\setminus V(C_G)$
(see Figure~\ref{fig:Halin1}).
Note here that the $t_i$'s are not necessarily distinct.
%More precisely, we always have $t_i=t_{i-1}$ or $t_i=t_{i+1}$ (or both)
%for every $i$, $0\le i\le k-1$ (subscripts are taken modulo $k$).
Indeed, since every non-leaf vertex  of $T_G$ has degree at least~3,
we always have $t_{i-1}=t_i$ (when $v_{i-1}$ and $v_{i}$ have the same father in $T_G$),
 or $t_i=t_{i+1}$ (when $v_i$ and $v_{i+1}$ have the same father in $T_G$), and thus possibly
$t_{i-1}=t_i=t_{i+1}$,
for every $i$, $0\le i\le k-1$ (subscripts are taken modulo $k$).

By Corollary~\ref{cor:degenerated}, we know that
$T_G$ is incidence $(\Delta(T_G)+1)$-choosable, and thus incidence $(\Delta(G)+1)$-choosable.
Let $\sigma$ be such an  $L$-list incidence colouring of $T_G$.
Since every incidence of $C_G$ has exactly three already coloured adjacent
incidences in $T_G$, and thus at least four available colours in its list,
$\sigma$ can be extended to an
 $L$-list incidence colouring of~$G$, thanks to Theorem~\ref{th:cycles}.
\end{proof}

%%%%%%%%%%%%%%%%%%%%%%%%%%%%%%%%%%%%%%%%%%%%%%%%%%%%%%%%%%%%%%
\begin{figure}
\begin{center}
\begin{tikzpicture}[domain=0:10,x=0.8cm,y=0.8cm]
%sommets
\SOMMET{2,0} \SOMMET{5,0} \SOMMET{8,0} 
\SOMMET{2,2} \SOMMET{5,2} \SOMMET{8,2} 
%arêtes
\draw[very thick,dotted] (0,0) -- (1,0);
\draw[very thick] (1,0) -- (9,0);
\draw[very thick,dotted] (9,0) -- (10,0);
\draw[very thick] (2,0) -- (2,2);
\draw[very thick] (5,0) -- (5,2);
\draw[very thick] (8,0) -- (8,2);
%labels
\node[below] at (2,0) {$v_{k-1}$};
\node[below] at (5,0) {$v_{0}$};
\node[below] at (8,0) {$v_{1}$};
\node[above] at (2,2) {$t_{k-1}$};
\node[above] at (5,2) {$t_{0}$};
\node[above] at (8,2) {$t_{1}$};
\end{tikzpicture}
\caption{Part of the outer cycle $C_G$ of a Halin graph $G$ (the $t_i$'s are not necessarily distinct).}
\label{fig:Halin1}
\end{center}
\end{figure}
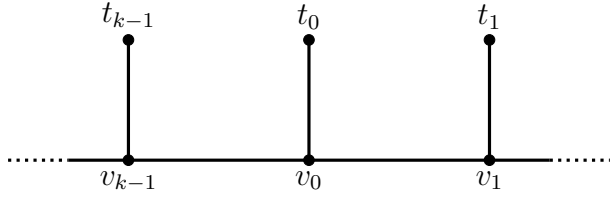
%%%%%%%%%%%%%%%%%%%%%%%%%%%%%%%%%%%%%%%%%%%%%%%%%%%%%%%%%%%%%%%%%%%%%%%%

Using Proposition~\ref{prop:tree}, we can get another upper bound on the
incidence choice number of Halin graphs that are not wheels.
This new bound thus improves the bound given in Lemma~\ref{lem:Halin1} for every Halin graph
with maximum degree~3 or~4, except for the two wheels $W_3=K_4$ and $W_4$.

\begin{lemma}
If $G$ is a  Halin graph such that $T_G$ is not a star, then $\ch_i(G)\leq \max(\Delta(G)+2, 6)$.
\label{lem:Halin2}
\end{lemma}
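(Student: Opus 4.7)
Since $T_G$ is not a star, the subtree of $T_G$ obtained by deleting all leaves of $T_G$ contains at least one edge and therefore has a leaf; let $u$ denote such a vertex. Then $u$ is a non-leaf of $T_G$ whose unique non-leaf $T_G$-neighbour is some vertex $v$, while all other $T_G$-neighbours of $u$ are leaves of $T_G$ forming a consecutive block $v_a, v_{a+1}, \ldots, v_b$ on the outer cycle $C_G$ (by the planarity of Halin graphs).

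Let $L$ be a list assignment of $G$ with $|L(w, wx)| \geq n_0 := \max(\Delta(G)+2, 6)$ for every incidence. The plan is to colour $G$ in three phases. First, I would colour all incidences of $T_G$ that do not involve $u$; this reduces to $L$-list incidence colouring of the subtree $T_G - u$ (the isolated vertices $v_a, \ldots, v_b$ contribute no incidences in this subgraph), which is feasible by Corollary~\ref{cor:degenerated} since $n_0 \geq \Delta(T_G) + 1$. Second, I would extend the colouring to $C_G$: each cycle incidence has at most three forbidden colours from the first phase, and those adjacent to the block $\{v_a, \ldots, v_b\}$ have strictly fewer forbidden colours (because their tree-adjacent incidences at $u$ remain uncoloured), so the cycle incidences have restricted lists of size at least $3$, with significantly larger lists inside and near the block. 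Applying Theorem~\ref{th:cycles}, possibly after pre-colouring two consecutive cycle-incidences in the block (whose large lists provide freedom) so as to reduce $C_G$'s incidence graph from a $C_{2n}^2$ to a $P_{2n-2}^2$ that is $3$-list-colourable, completes the phase. Third, I would colour the remaining $2\deg_{T_G}(u)$ incidences at $u$: namely $(u, uv)$, $(v, vu)$, and $(u, uv_i), (v_i, v_iu)$ for $i \in \{a, \ldots, b\}$.

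The main obstacle will be the third phase. The $\deg_{T_G}(u)$ outgoing incidences at $u$ must receive pairwise distinct colours and each must avoid the colours of the already-coloured adjacent incidences at $v$ (from phase one) and on $C_G$ (from phase two). The numerical slack $n_0 \geq \deg_{T_G}(u) + 2$, combined with the fact that only $(v, vu)$ among the remaining incidences at $v$ is still uncoloured (because $u$ has $v$ as its sole non-leaf neighbour), should leave just enough flexibility to complete the colouring via a careful ordering, first $(u, uv)$, then $(v, vu)$, and finally the pairs $(u, uv_i), (v_i, v_iu)$ in sequence. The role of the ``$T_G$ not a star'' hypothesis is precisely to guarantee the existence of a vertex $u$ with this simple local structure, without which the slack available in the third phase would not suffice.
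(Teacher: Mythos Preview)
Your three-phase plan has a genuine gap in Phase~3. Consider the critical case $\Delta(G)=4$ (for $\Delta(G)\ge 5$ the statement already follows from Lemma~\ref{lem:Halin1}), and take the Halin graph whose tree $T_G$ consists of two adjacent degree-$4$ vertices $u,v$, each with three pendant leaves. After Phase~1 you have coloured all incidences of $T_G-u$; in particular, the six incidences $(v,vw)$ and $(w,wv)$ with $w\in N(v)\setminus\{u\}$ are now fixed, and every one of them is adjacent to $(v,vu)$. Since $n_0=6$, these six colours can exhaust $L(v,vu)$ before Phase~3 even begins, so no ordering within Phase~3 can colour $(v,vu)$. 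The slack you invoke, $n_0\ge\deg_{T_G}(u)+2$, bounds the wrong quantity: the obstruction comes from $\deg(v)$, not $\deg(u)$. Pushing the edge $uv$ into Phase~1 (so that Phase~3 handles only the pendants $uv_i$) does not rescue the argument either: the last incidence $(u,uv_i)$ to be coloured then has $\deg(u)+2$ already-coloured neighbours (the other $\deg(u)-1$ internal incidences of $u$, the incidence $(v,vu)$, and the two cycle incidences at $v_i$), which for $\deg(u)=4$ again exceeds $n_0-1=5$.

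The paper takes a quite different route. It does not isolate a subtree at all; instead it first fixes five carefully chosen colours (Claim~\ref{claim:Halin}) on specific tree and cycle incidences near two consecutive outer vertices $v_0,v_1$ having distinct fathers in $T_G$, then extends this partial colouring to all of $T_G$ via Proposition~\ref{prop:tree}, and only afterwards colours $C_G$ greedily. The claim is engineered so that the two ``closing'' cycle incidences $(v_{k-1},v_{k-1}v_0)$, $(v_0,v_0v_{k-1})$ and $(v_1,v_1v_0)$ each see at most five forbidden colours. This advance reservation of colours, rather than the structural simplification you attempt, is what makes the bookkeeping close.
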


\begin{proof}
If $\Delta(G)\ge 5$, the result directly follows from Lemma~\ref{lem:Halin1}.
We can thus assume $\Delta(G)\in\{3,4\}$ (but we do not need this assumption in the proof).

Let $G$ be a Halin graph and $L$ be any list assignment of~$G$ such that
$$|L(v,vu)|= \max(\Delta(G)+2, 6)\ge 6$$
 for every incidence $(v,vu)$ of~$G$, and let $p=\max(\Delta(G)+2, 6)$.
As in the proof of Lemma~\ref{lem:Halin1}, we let $C_G=v_0v_1\dots v_{k-1}v_0$ and $t_i$ 
denotes the unique neighbour of $v_i$ in $V(T_G)\setminus V(C_G)$, $0\le i\le k-1$
(recall that the $t_i$'s are not necessarily distinct).
Note that starting from an $L$-list incidence colouring $\sigma$ of $T_G$ and then
colouring the incidences of $C_G$ in cyclic order, 
starting from any incidence,
all incidences of $C_G$ but the last two ones can be coloured, as each of these incidences has
at most five forbidden colours.
We will prove that one can always fix the colour of some incidences, so that
one can produce an $L$-list incidence colouring of~$G$.

Since $T_G$ is not a star, there exists an index $i$, $0\le i\le k-1$,
such that the vertices $t_{i-1}$ and $t_i$ are distinct.
We can thus assume, without loss of generality, that $t_{0}\neq t_1$.
Moreover, since $T_G$ has no vertex of degree two, we have
$t_{k-1}=t_{0}$ and $t_2=t_1$ (see Figure~\ref{fig:Halin2}).

%%%%%%%%%%%%%%%%%%%%%%%%%%%%%%%%%%%%%%%%%%%%%%%%%%%%%%%%%%%%%%%
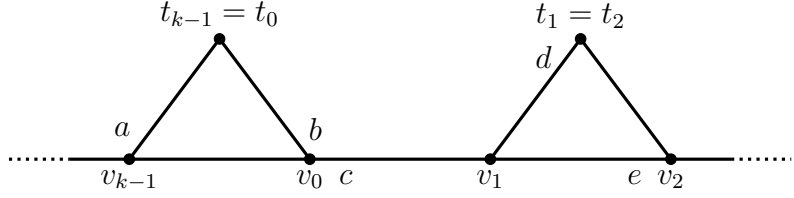
\begin{figure}
\begin{center}
\begin{tikzpicture}[domain=0:10,x=0.8cm,y=0.8cm]
%sommets
\SOMMET{2,0} \SOMMET{5,0} \SOMMET{8,0} \SOMMET{11,0}
\SOMMET{3.5,2} \SOMMET{9.5,2}
%arêtes
\draw[very thick,dotted] (0,0) -- (1,0);
\draw[very thick] (1,0) -- (12,0);
\draw[very thick,dotted] (12,0) -- (13,0);
\draw[very thick] (2,0) -- (3.5,2);
\draw[very thick] (5,0) -- (3.5,2);
\draw[very thick] (8,0) -- (9.5,2);
\draw[very thick] (11,0) -- (9.5,2);
%labels
\node[below] at (2,0) {$v_{k-1}$};
\node[below] at (5,0) {$v_{0}$};
\node[below] at (8,0) {$v_{1}$};
\node[below] at (11,0) {$v_{2}$};
\node[above] at (3.5,2) {$t_{k-1}=t_{0}$};
\node[above] at (9.5,2) {$t_{1}=t_2$};
% colours
\node[left] at (2.2,0.5) {$a$};
\node[below] at (5.6,0) {$c$};
\node[right] at (4.8,0.5) {$b$};
\node[left] at (9.2,1.7) {$d$};
\node[below] at (10.4,0) {$e$};
% symbol
%\node[below] at (5.6,0) {$\diamond$};
%\node[below] at (7.4,0) {$\star$};
\end{tikzpicture}
\caption{Configuration for the proof of Lemma~\ref{lem:Halin2}.}
\label{fig:Halin2}
\end{center}
\end{figure}
%%%%%%%%%%%%%%%%%%%%%%%%%%%%%%%%%%%%%%%%%%%%%%%%%%%%%%%%%%%%%%%%%%%%%%%%

The following claim will be essential in the construction of
an $L$-list incidence colouring of~$G$.

\begin{claim}\label{claim:Halin}
There exist $a\in L(v_{k-1},v_{k-1}t_{0})$, $b\in L(v_0,v_0t_0)$,
$c\in L(v_{0},v_{0}v_{1})$, $d\in L(t_1,t_1v_1)$ and $e\in L(v_2,v_2v_1)$,
with $b\neq c$,
such that 
$$|L(v_{k-1},v_{k-1}v_{0})\cap\{a,b,c\}|\le 2,\ 
|L(v_{0},v_0v_{k-1})\cap\{a,b,c\}|\le 2,
\ \mbox{and}\ |L(v_1,v_1v_{0})\cap\{c,d,e\}|\le 1.$$
\end{claim}

\begin{proof}
We first deal with the incidence $(v_1,v_1v_{0})$ and set the values of $c$, $d$ and $e$.
Let $C=L(v_{0},v_{0}v_{1})$, $D=L(t_1,t_1v_1)$ and $E=L(v_2,v_2v_1)$.
If $C\cap D\cap E\neq\emptyset$, then 
we set $c=d=e=\gamma$ for some $\gamma\in C\cap D\cap E$,
so that $|L(v_1,v_1v_{0})\cap\{c,d,e\}|\le 1$.

Otherwise, we have two cases to consider.
\begin{enumerate}
\item If $C$, $D$ and $E$ are pairwise disjoint, then at least two of them
are distinct from $L(v_1,v_1v_{0})$, so that we can choose $c$, $d$ and $e$
in such a way that $|L(v_1,v_1v_{0})\cap\{c,d,e\}|\le 1$.
\item 
Suppose now that $C\cap D\neq\emptyset$ 
(the cases $C\cap E\neq\emptyset$ and $D\cap E\neq\emptyset$ are similar).
We first set $c=d=\gamma$ for some $\gamma\in C\cap D$.
If $\gamma\in L(v_1,v_1v_{0})$, then there exists $\varepsilon\in E\setminus L(v_1,v_1v_{0})$
(since $(C\cap D)\cap E=\emptyset$) and we set $e=\varepsilon$, 
so that $|L(v_1,v_1v_{0})\cap\{c,d,e\}|\le 1$.
If $\gamma\notin L(v_1,v_1v_{0})$, then we set $e=\varepsilon$ for any $\varepsilon\in E$
and we also get $|L(v_1,v_1v_{0})\cap\{c,d,e\}|\le 1$.
\end{enumerate}

We now consider the incidence $(v_0,v_0v_{k-1})$.
Let $A=L(v_{k-1},v_{k-1}t_{0})$ and $B=L(v_0,v_0t_0)$.
If $c\notin L(v_0,v_0v_{k-1})$, then 
$|L(v_{0},v_0v_{k-1})\cap\{a,b,c\}|\le 2$ for any values of $a$ and $b$.

Suppose now that $c\in L(v_0,v_0v_{k-1})$.
If $|A\cap B|\ge 2$, then we set $a=b=\lambda$ for some $\lambda\in(A\cap B)\setminus\{c\}$,
so that $|L(v_{0},v_0v_{k-1})\cap\{a,b,c\}|\le 2$.
Otherwise, we necessarily have $A\neq L(v_{0},v_0v_{k-1})$ or $B\neq L(v_{0},v_0v_{k-1})$.
In the former case, we set $a=\alpha$ for some $\alpha\in A\setminus L(v_{0},v_0v_{k-1})$,
so that $|L(v_{0},v_0v_{k-1})\cap\{a,b,c\}|\le 2$ for any value of $b$.
In the latter case, we set $b=\beta$ for some $\beta\in B\setminus L(v_{0},v_0v_{k-1})$,
so that $|L(v_{0},v_0v_{k-1})\cap\{a,b,c\}|\le 2$ for any value of $a$.

\medskip

We finally consider the incidence $(v_{k-1},v_{k-1}v_0)$.
If $c\notin L(v_{k-1},v_{k-1}v_0)$, then 
$|L(v_{k-1},v_{k-1}v_{0})\cap\{a,b,c\}|\le 2$ for any values of $a$ and $b$ and we are done.

Suppose now that $c\in L(v_{k-1},v_{k-1}v_0)$.
If none of the values of $a$ and $b$ have been set in the previous step,
then we proceed as for the incidence $(v_0,v_0v_{k-1})$ and the result follows.
Otherwise, we have two cases to consider.
\begin{enumerate}
\item 
If the values of both $a$ and $b$ have been set in the previous step,
then we have $a=b=\lambda$, so that $|L(v_{k-1},v_{k-1}v_{0})\cap\{a,b,c\}|\le 2$.
\item 
Suppose now that the value of $a$ has been set in the previous step,
that is, $a=\alpha$ for some $\alpha\in A\setminus L(v_{0},v_0v_{k-1})$
(the proof is similar if the value of $b$ has been set).

If $\alpha\in B$, then we can set $b=\alpha$ and we are done.
If $\alpha\notin B$ and $\alpha\notin L(v_{k-1},v_{k-1}v_0)$, 
then we get $|L(v_{k-1},v_{k-1}v_{0})\cap\{a,b,c\}|\le 2$ for any value of $b$.
Otherwise, we have $\alpha\notin B$ and $\alpha\in L(v_{k-1},v_{k-1}v_0)$,
which implies $B\neq L(v_{k-1},v_{k-1}v_0)$.
Therefore, we can set $b=\beta$ for some $\beta\in B\setminus L(v_{k-1},v_{k-1}v_0)$,
so that $|L(v_{k-1},v_{k-1}v_{0})\cap\{a,b,c\}|\le 2$.
\end{enumerate}

This concludes the proof of Claim~\ref{claim:Halin}.
\end{proof}

We now construct an $L$-list incidence colouring $\sigma$ of~$G$ in three steps.
 
\begin{enumerate}
\item 
We first set 
$\sigma(v_{k-1},v_{k-1}t_0)=a$, 
$\sigma(v_0,v_0t_0)=b$, 
$\sigma(v_0,v_0v_1)=c$, 
$\sigma(t_1,v_1t_1)=d$, 
and $\sigma(v_2,v_1v_2)=e$,
where $a$, $b$, $c$, $d$ and $e$ are the values determined in the proof
of Claim~\ref{claim:Halin}.

\item
Let $P=t_0u_1\dots u_\ell t_1$, or $P=t_0t_1$  if $t_0t_1\in E(G)$,
denote the unique path from $t_0$ to $t_1$ in $T_G$ (see Figure~\ref{fig:path-t1-t0}).
We colour all the incidences of $T_G$ as follows.
\begin{itemize}
\item We first colour all internal incidences of $t_0$, starting with the incidence $(t_0,t_0v_0)$,
and then the incidence $(t_0,t_0t_1)$ if $t_0t_1\in E(G)$.
This can be done since every such incidence has at most $\Delta(G)+1$ already
coloured adjacent incidences.
\item If $t_0t_1\notin E(G)$), then we colour the internal incidences of the vertices of $P$ sequentially, 
from $u_1$ to $u_\ell$.
Again, every such incidence has at most $\Delta(G)+1$ already
coloured adjacent incidences.
\item We colour the incidence $(t_1,t_1u_\ell)$ (or $(t_1,t_1t_0)$ if $t_0t_1\in E(G)$), which 
has at most $\Delta(G)+1$ already coloured adjacent incidences,
then the incidence $(v_1,v_1t_1)$, which has four already
coloured adjacent incidences, 
and then the incidence $(t_1,t_1v_2)$,
which has five already coloured adjacent incidences (recall that $p\ge 6$).
\item We colour the remaining uncoloured internal incidences of $t_1$, if any.
This can be done since every such incidence has at most $\Delta(G)+1$ already
coloured adjacent incidences.
\item Now, we colour the uncoloured external incidences of the vertices of $P$,
sequentially, from $t_0$ to $t_1$.
Again, this can be done since every such incidence has at most $\Delta(G)+1$ already
coloured adjacent incidences.
\item 
For every edge $xy\in E(T_G)$, we denote by $T_{xy}$ the unique maximal subtree
of $T_G$ containing the edge $xy$ and such that $\deg_{T_{xy}}(x)=1$.
Clearly, each remaining uncoloured incidence of $T_G$ belongs to some subtree
$T_{xy}$, with $x\in V(P)$ and $y\notin V(P)\cup\{v_{k-1},v_0,v_1,v_2\}$.
Moreover, the only already coloured incidences of any such subtree $T_{xy}$
are $(x,xy)$ and $(y,yx)$.
By Proposition~\ref{prop:tree}, we can therefore extend $\sigma$ to 
all incidences of $T_G$.
\end{itemize}

\item We finally colour all the uncoloured incidences of $C_G$
(the only incidences of $C_G$ already coloured are $(v_0,v_0v_1)$
and $(v_2,v_2v_1)$) as follows.
\begin{itemize}
\item We first colour the incidence $(v_1,v_1v_2)$, which has 
five already
coloured adjacent incidences.
\item We then cyclically colour the incidences of $C_G$ from $(v_2,v_2v_3)$
to $(v_{k-1},v_{k-1}v_{k-2})$.
This can be done since, doing so, every such incidence has four or five already
coloured adjacent incidences.
\item By Claim~\ref{claim:Halin}, the incidence $(v_{k-1},v_{k-1}v_0)$ has at most
five forbidden colours %already coloured adjacent incidences 
and can thus be coloured.
Similarly, thanks to Claim~\ref{claim:Halin}, we can also
colour the incidences $(v_0,v_0v_{k-1})$ and $(v_1,v_1v_0)$ (in that order).
\end{itemize}
\end{enumerate}
This completes the proof.
\end{proof}

%%%%%%%%%%%%%%%%%%%%%%%%%%%%%%%%%%%%%%%%%%%%%%%%%%%%%%%%%%%%%%%
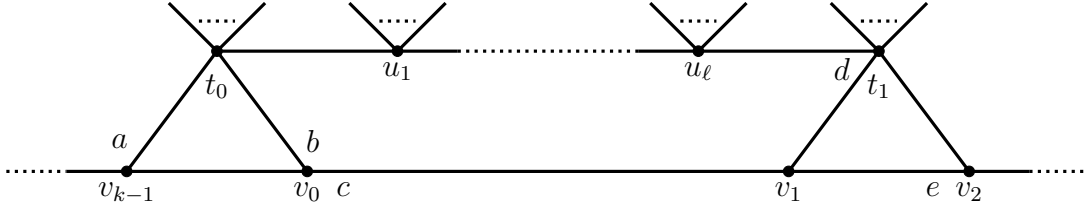
\begin{figure}
\begin{center}
\begin{tikzpicture}[domain=0:10,x=0.8cm,y=0.8cm]
%sommets
\SOMMET{2,0} \SOMMET{5,0} \SOMMET{13,0} \SOMMET{16,0}
\SOMMET{3.5,2} \SOMMET{6.5,2} \SOMMET{11.5,2} \SOMMET{14.5,2} 
%arêtes
\draw[very thick,dotted] (7.5,2) -- (10.5,2);
\draw[very thick] (3.5,2) -- (7.5,2);
\draw[very thick] (10.5,2) -- (14.5,2);
\draw[very thick] (2,0) -- (3.5,2);
\draw[very thick] (5,0) -- (3.5,2);
\draw[very thick] (13,0) -- (14.5,2);
\draw[very thick] (16,0) -- (14.5,2);
\draw[very thick] (1,0) -- (17,0);
\draw[very thick,dotted] (0,0) -- (1,0);
\draw[very thick,dotted] (17,0) -- (18,0);
%labels
\node[below] at (2,0) {$v_{k-1}$};
\node[below] at (5,0) {$v_{0}$};
\node[below] at (13,0) {$v_{1}$};
\node[below] at (16,0) {$v_{2}$};
\node[below] at (3.5,1.8) {$t_{0}$};
\node[below] at (14.5,1.8) {$t_{1}$};
\node[below] at (6.5,2) {$u_1$};
\node[below] at (11.5,2) {$u_\ell$};
% colours
%\node[right] at (3.8,1.7) {$d$};
\node[below] at (5.6,0) {$c$};
\node[right] at (4.8,0.5) {$b$};
\node[left] at (14.2,1.7) {$d$};
\node[below] at (15.4,0) {$e$};
\node[left] at (2.2,0.5) {$a$};
% adjacent incidences
\draw[very thick] (3.5,2) -- (2.7,2.8);
\draw[very thick] (3.5,2) -- (4.3,2.8);
\draw[very thick,dotted] (3.2,2.5) -- (3.8,2.5);
\draw[very thick] (6.5,2) -- (5.7,2.8);
\draw[very thick] (6.5,2) -- (7.3,2.8);
\draw[very thick,dotted] (6.2,2.5) -- (6.8,2.5);
\draw[very thick] (11.5,2) -- (10.7,2.8);
\draw[very thick] (11.5,2) -- (12.3,2.8);
\draw[very thick,dotted] (11.2,2.5) -- (11.8,2.5);
\draw[very thick] (14.5,2) -- (13.7,2.8);
\draw[very thick] (14.5,2) -- (15.3,2.8);
\draw[very thick,dotted] (14.2,2.5) -- (14.8,2.5);
\end{tikzpicture}
\caption{Colouring the subtree of $T_G$ in the proof of Lemma~\ref{lem:Halin2}.}
\label{fig:path-t1-t0}
\end{center}
\end{figure}
%%%%%%%%%%%%%%%%%%%%%%%%%%%%%%%%%%%%%%%%%%%%%%%%%%%%%%%%%%%%%%%%%%%%%%%%

%We still have to consider the case when $T_G$ is a star with,
%as observed at the beginning of the proof of Lemma~\ref{lem:Halin2},
%$\Delta(T_G)\le 4$.
%We thus have only two cases to consider, namely $T_G=K_{1,3}$ and $T_G=K_{1,4}$,
%that is the Halin graphs $K_4$ and $W_4$, respectively.

%We still have to consider the two particular Halin graphs $K_4$ and $W_4$.
%These two cases are dealt with in the following two lemmas.

The next lemma shows that the incidence choice number of $K_4$ is at most~6.

\begin{lemma}
$\ch_i(K_4)\leq 6$.
\label{lem:HalinK4}
\end{lemma}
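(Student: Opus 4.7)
The plan is to view $K_4$ as the wheel $W_3$, fix one vertex $v$ as the hub, and let the outer triangle be $C=w_1w_2w_3$. Given any list assignment $L$ with $|L(x,xy)|=6$ for every incidence, I will colour the twelve incidences of $K_4$ in three successive stages, each of which reduces to a list-colouring task that earlier results in the paper already handle.

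In the first stage I colour the three incidences of $A^-(v)$; these form a triangle in $I_{K_4}$ with untouched lists of size $6\ge 3=\ch(K_3)$, so three pairwise distinct colours can be picked. In the second stage I handle $A^+(v)$: using Observation~\ref{obs:incidences} one checks that its three incidences $(w_i,w_iv)$ are pairwise non-adjacent in $I_{K_4}$ (they share neither a vertex nor an edge, and $w_iw_j$ is neither of the edges $w_iv$, $w_jv$), and that the only already-coloured neighbours of $(w_i,w_iv)$ are precisely the three incidences of $A^-(v)$. Each restricted list therefore has size at least $6-3=3$, and the three independent incidences can be coloured one by one.

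In the third stage I colour the six remaining incidences, which are the incidences of the outer triangle and induce in $I_{K_4}$ a subgraph isomorphic to $I_{C_3}$. A routine case check using Observation~\ref{obs:incidences} shows that each outer incidence $(w_i,w_iw_j)$ has exactly three already-coloured neighbours, namely $(v,vw_i)\in A^-(v)$ together with $(w_i,w_iv),\,(w_j,w_jv)\in A^+(v)$. Its restricted list therefore has size at least $6-3=3$, and Theorem~\ref{th:cycles} (which yields $\ch_i(C_3)=3$, since $3\equiv 0\pmod 3$) completes the $L$-list incidence colouring. The only mildly delicate point is this bookkeeping in the third stage; verifying that no further neighbours from $A^-(v)\cup A^+(v)$ appear reduces to inspecting the three adjacency conditions, and all other stages are immediate applications of the results above.
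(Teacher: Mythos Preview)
Your proof is correct and takes a genuinely different route from the paper's. The paper first chooses colours for the three incidences of $A^+(v_0)$ via a preliminary claim that bounds their overlap with one designated list $L(v_0,v_0v_1)$, and then carries out a multi-case analysis (depending on whether two of these colours coincide, and on further intersections with $L(v_0,v_0v_2)$ and $L(v_0,v_0v_3)$), in each case colouring the remaining nine incidences in a carefully chosen order while counting forbidden colours. Your argument instead decomposes the incidences as $A^-(v)\cup A^+(v)\cup I(C_3)$, colours the first two layers greedily from their untouched or mildly reduced lists, and then hands the outer triangle off to Theorem~\ref{th:cycles}, exploiting that $3\equiv 0\pmod 3$ gives $\ch_i(C_3)=3$. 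This is shorter, avoids all case distinctions, and makes the structure of $K_4=W_3$ do the work; the trade-off is that it leans on the Prowse--Woodall result behind Theorem~\ref{th:cycles}, whereas the paper's argument is entirely self-contained.
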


\begin{proof}
Let $V(K_4)=\{v_0,v_1,v_2,v_3)$  and $L$ be any list assignment of $K_4$ such that
$|L(v_i,v_iv_j)|= 6$ for every incidence $(v_i,v_iv_j)$ of $K_4$.

The following claim will be useful for constructing
an $L$-list incidence colouring of $K_4$.

\begin{claim}\label{claim:HalinK4}
There exist $a\in L(v_1,v_1v_0)$, $b\in L(v_2,v_2v_0)$,
and $c\in L(v_3,v_3v_0)$
such that 
$$|L(v_0,v_0v_1)\cap\{a,b,c\}|\le 1.$$
\end{claim}

\begin{proof}
Let $A=L(v_1,v_1v_0)$, $B=L(v_2,v_2v_0)$ and $C=L(v_3,v_3v_0)$.
If $A\cap B\cap C\neq\emptyset$, then we set $a=b=c=\gamma$
for some $\gamma\in A\cap B\cap C$, so that $|L(v_0,v_0v_1)\cap\{a,b,c\}|\le 1$.

Otherwise, we consider two cases.
\begin{enumerate}
\item 
If $A$, $B$ and $C$ are pairwise disjoint, then at least two of them are distinct from
$L(v_0,v_0v_1)$, so that we can choose $a$, $b$ and $c$ in such a way that 
$|L(v_0,v_0v_1)\cap\{a,b,c\}|\le 1$.

\item 
Suppose now that $A\cap B\neq\emptyset$ 
(the cases $A\cap C\neq\emptyset$ and $B\cap C\neq\emptyset$ are similar).
We first set $a=b=\gamma$ for some $\gamma\in A\cap B$.
If $\gamma\in L(v_0,v_0v_1)$, then there exists $\varepsilon\in C\setminus L(v_0,v_0v_1)$
(since $(A\cap B)\cap C=\emptyset$) and we set $c=\varepsilon$, 
so that $|L(v_0,v_0v_1)\cap\{a,b,c\}|\le 1$.
If $\gamma\notin L(v_0,v_0v_1)$, then we set $e=\varepsilon$ for any $\varepsilon\in C$
and we also get $|L(v_0,v_0v_1)\cap\{a,b,c\}|\le 1$.
\end{enumerate}
This concludes the proof of Claim~\ref{claim:HalinK4}.
\end{proof}

We now construct an $L$-list incidence colouring $\sigma$ of $K_4$, by setting
first $\sigma(v_1,v_1v_0)=a$, $\sigma(v_2,v_2v_0)=b$ and $\sigma(v_3,v_3v_0)=c$,
where $a$, $b$ and $c$ are the values determined in the proof
of Claim~\ref{claim:HalinK4}.

We then consider two cases.
\begin{enumerate}
\item 
Suppose first that $|\{a,b,c\}|\le 2$ and assume $a=b$ (the cases $a=c$ and $b=c$ are similar).
We then colour the remaining uncoloured incidences as follows (see Figure~\ref{fig:HalinK4}(a)).
We first colour the incidences
$(v_3,v_3v_1)$, $(v_3,v_3v_2)$, $(v_2,v_2v_3)$, $(v_1,v_1v_3)$ and $(v_2,v_2v_1)$,
in that order.
This can be done since, doing so, every such incidence has at most five already coloured
adjacent incidences.
We then colour the incidences
$(v_1,v_1v_2)$, $(v_0,v_0v_3)$ and $(v_0,v_0v_2)$, in that order.
This can be done since, doing so, every such incidence has at most five
forbidden colours (recall that $a=b$).
We finally colour the incidence $(v_0,v_0v_1)$,
which has at least one available colour in its own list since,
by Claim~\ref{claim:HalinK4}, $|L(v_0,v_0v_1)\cap\{a,b,c\}|\le 1$).

%%%%%%%%%%%%%

%%%%%%%%%%%%%%%%%%%%%%%%%%%%%%%%%%%%%%%%%%%%%%%%%%%%%%%%%%%%%%%
\begin{figure}
\begin{center}
\begin{tikzpicture}[domain=0:10,x=0.8cm,y=0.8cm]
%sommets
\SOMMET{0,0} \SOMMET{3,0} \SOMMET{6,0} \SOMMET{3,3}
\node[above] at (3,3) {$v_0$};
\node[left] at (0,0) {$v_1$};
\node[below] at (3,0) {$v_2$};
\node[right] at (6,0) {$v_3$};
%arêtes
\draw[very thick] (0,0) -- (6,0);
\draw[very thick] (3,3) -- (0,0);
\draw[very thick] (3,3) -- (3,0);
\draw[very thick] (3,3) -- (6,0);
\draw[very thick] (0,0) to[bend right=45] (6,0);
% a b c
\node[left] at (0.4,0.5) {$a$};
\node[right] at (3,0.5) {$a$};
\node[right] at (5.6,0.5) {$c$};
% ordre incidences
\node[below] at (5.6,-0.4) {{\scriptsize $1$}};
\node[above] at (5.2,0) {{\scriptsize $2$}};
\node[above] at (3.8,0) {{\scriptsize $3$}};
\node[below] at (0.4,-0.4) {{\scriptsize $4$}};
\node[above] at (2.2,0) {{\scriptsize $5$}};
\node[above] at (0.8,0) {{\scriptsize $6$}};
\node[right] at (3.6,2.5) {{\scriptsize $7$}};
\node[right] at (3,2.2) {{\scriptsize $8$}};
\node[left] at (2.4,2.5) {{\scriptsize $9$}};
% legende
\node[below] at (3,-2) {(a) Case $1$ ($a=b$)};
\end{tikzpicture}
\hskip 0.5cm
\begin{tikzpicture}[domain=0:10,x=0.8cm,y=0.8cm]
%sommets
\SOMMET{0,0} \SOMMET{3,0} \SOMMET{6,0} \SOMMET{3,3}
\node[above] at (3,3) {$v_0$};
\node[left] at (0,0) {$v_1$};
\node[below] at (3,0) {$v_2$};
\node[right] at (6,0) {$v_3$};
%arêtes
\draw[very thick] (0,0) -- (6,0);
\draw[very thick] (3,3) -- (0,0);
\draw[very thick] (3,3) -- (3,0);
\draw[very thick] (3,3) -- (6,0);
\draw[very thick] (0,0) to[bend right=45] (6,0);
% a b c
\node[left] at (0.4,0.5) {$a$};
\node[right] at (3,0.5) {$b$};
\node[right] at (5.6,0.5) {{\scriptsize $7$}};
% ordre incidences
\node[below] at (5.6,-0.4) {{\scriptsize $6$}};
\node[above] at (5.2,0) {{\scriptsize $5$}};
\node[above] at (3.8,0) {{\scriptsize $4$}};
\node[below] at (0.4,-0.4) {{\scriptsize $1$}};
\node[above] at (2.2,0) {{\scriptsize $3$}};
\node[above] at (0.8,0) {{\scriptsize $2$}};
\node[right] at (3.6,2.5) {{\scriptsize $8$}};
\node[right] at (3,2.2) {{\scriptsize $9$}};
\node[left] at (2.4,2.5) {{\scriptsize $10$}};
% legende
\node[below] at (3,-2) {(b) Case $2(a)$};
\end{tikzpicture}

\begin{tikzpicture}[domain=0:10,x=0.8cm,y=0.8cm]
%sommets
\SOMMET{0,0} \SOMMET{3,0} \SOMMET{6,0} \SOMMET{3,3}
\node[above] at (3,3) {$v_0$};
\node[left] at (0,0) {$v_1$};
\node[below] at (3,0) {$v_2$};
\node[right] at (6,0) {$v_3$};
%arêtes
\draw[very thick] (0,0) -- (6,0);
\draw[very thick] (3,3) -- (0,0);
\draw[very thick] (3,3) -- (3,0);
\draw[very thick] (3,3) -- (6,0);
\draw[very thick] (0,0) to[bend right=45] (6,0);
% a b c
\node[left] at (0.4,0.5) {{\scriptsize $7$}};
\node[right] at (3,0.5) {{\scriptsize $4$}};
\node[right] at (5.6,0.5) {$c$};
% ordre incidences
\node[below] at (5.6,-0.4) {{\scriptsize $1$}};
\node[above] at (5.2,0) {{\scriptsize $2$}};
\node[above] at (3.8,0) {{\scriptsize $3$}};
\node[below] at (0.4,-0.4) {{\scriptsize $6$}};
\node[above] at (2.2,0) {{\scriptsize $5$}};
\node[above] at (0.8,0) {$d$};
\node[right] at (3.6,2.5) {$b$};
\node[right] at (3,2.2) {$a$};
\node[left] at (2.4,2.5) {{\scriptsize $8$}};
% legende
\node[below] at (3,-2) {(c) Case $2(b)$};
\end{tikzpicture}

\caption{Ordering the incidences of $K_4$ for the proof of Lemma~\ref{lem:HalinK4}.}
\label{fig:HalinK4}
\end{center}
\end{figure}
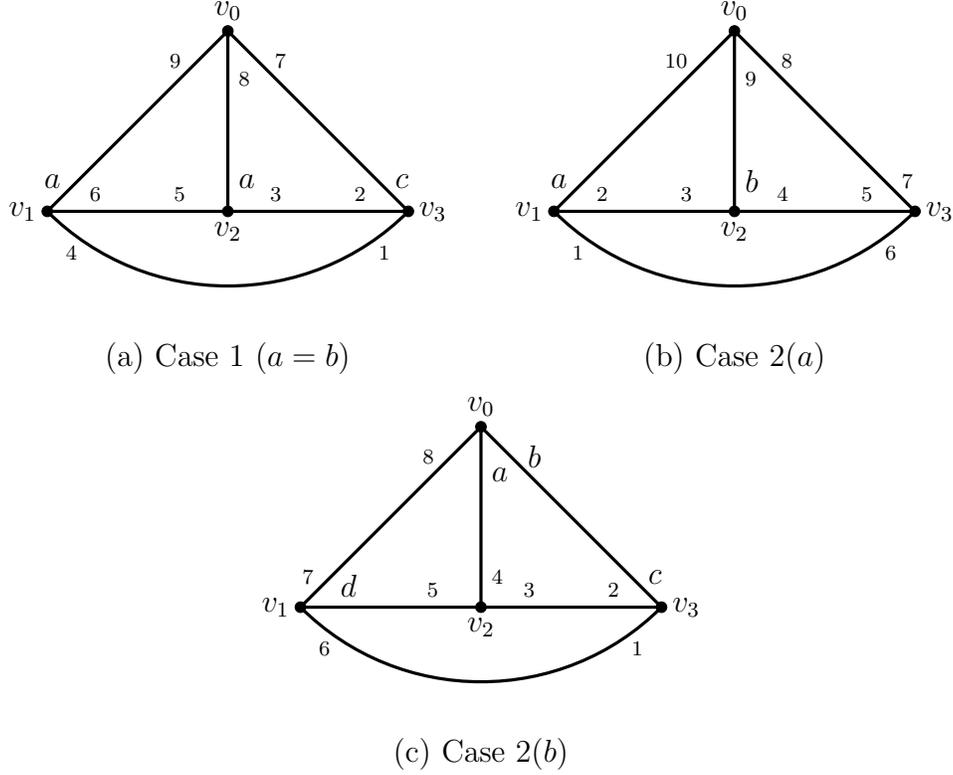
%%%%%%%%%%%%%%%%%%%%%%%%%%%%%%%%%%%%%%%%%%%%%%%%%%%%%%%%%%%%%%%%%%%%%%%%

\item 
Suppose now that $|\{a,b,c\}|=3$.
By symmetry and thanks to Claim~\ref{claim:HalinK4}, we may assume $L(v_0,v_0v_1)\cap\{a,b\}=\emptyset$,
without loss of generality.
We consider two subcases.
\begin{enumerate}
\item 
$|L(v_0,v_0v_2)\cap \{a,b\}|\le 1$ (or, similarly, $|L(v_0,v_0v_3)\cap \{a,b\}|\le 1$).\\
We first uncolour the incidence $(v_3,v_3v_0)$ (note that for any choice of $\sigma(v_3,v_3v_0)$, the statement of
Claim~\ref{claim:HalinK4} will be satisfied).
We then colour the remaining uncoloured incidences as follows (see Figure~\ref{fig:HalinK4}(b)).
We first colour the incidences
$(v_1,v_1v_3)$, 
$(v_1,v_1v_2)$, 
$(v_2,v_2v_1)$,
$(v_2,v_2v_3)$, 
$(v_3,v_3v_2)$, 
$(v_3,v_3v_1)$, 
$(v_3,v_3v_0)$ and $(v_0,v_0v_3)$, in that order.
This can be done since, doing so, every such incidence has at most five already coloured adjacent incidences.
We then colour the incidence $(v_0,v_0v_2)$, which has at most five forbidden colours since 
$|L(v_0,v_0v_2)\cap  \{a,b\}|\le 1$,
and the incidence $(v_0,v_0v_1)$,
which has also at most five forbidden colours since  
$|L(v_0,v_0v_1)\cap\{a,b,\sigma(v_3,v_3v_0)\}|\le 1$.

\item 
$\{a,b\}\subseteq (L(v_0,v_0v_2)\cap L(v_0,v_0v_3))$.\\
We first uncolour the incidences $(v_1,v_1v_0)$ and $(v_2,v_2v_0)$,
and set $\sigma(v_0,v_0v_2)=a$ and $\sigma(v_0,v_0v_3)=b$ (this is possible since $c\notin\{a,b\}$).

We claim that there exists a colour $d\in L(v_1,v_1v_2)$
such that $|L(v_1,v_1v_0)\cap\{b,d\}| \le 1.$
This is obviously the case if $b\notin L(v_1,v_1v_0)$.
Assume thus that $b\in L(v_1,v_1v_0)$.
If $b\in L(v_1,v_1v_2)$, then we can set $d=b$.
Otherwise, it suffices to choose any $d$ in $L(v_1,v_1v_2)\setminus L(v_1,v_1v_0)$.
%Since $a\in L(v_1,v_1v_0)$, we will have $d\neq a$ for any such $d$.
We then set $\sigma(v_1,v_1v_2)=d$.

We then colour the remaining uncoloured incidences as follows (see Figure~\ref{fig:HalinK4}(c)).
We first colour the incidences
%Finally, we colour the remaining uncoloured incidences
$(v_3,v_3v_1)$, $(v_3,v_3v_2)$, $(v_2,v_2v_3)$, 
$(v_2,v_2v_0)$, $(v_2,v_2v_1)$ and $(v_1,v_1v_3)$, in that order.
This can be done since, doing so, every such incidence has at most 
five already coloured adjacent incidences.
We then colour the incidences
$(v_1,v_1v_0)$, which has at most five forbidden colours since $|L(v_1,v_1v_0)\cap\{b,d\}| \le 1$),
 and $(v_0,v_0v_1)$, which has also at most five forbidden colours since
$L(v_0,v_0v_1)\cap\{a,b\}=\emptyset$.
\end{enumerate}

\end{enumerate}

This completes the proof.
\end{proof}

By Proposition~\ref{prop:bounds} and Lemmas~\ref{lem:Halin1}, \ref{lem:Halin2} and \ref{lem:HalinK4}, we 
 get:

\begin{theorem}
If $G$ is a  Halin graph, then
$$\left\{
  \begin{array}{ll}
     \ch_i(G)\leq 6, & \hbox{if $\Delta(G) \in\{3,4\}$ and $G\neq W_4$,} \\
     \ch_i(G)\leq 7, & \hbox{if $\Delta(G) = 5$ or $G=W_4$,} \\
     \ch_i(G)= \Delta(G)+1, & \hbox{otherwise.}
  \end{array}
\right.$$
\label{th:Halin}
\end{theorem}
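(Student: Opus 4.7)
The plan is to derive the theorem as a direct case analysis on $\Delta(G)$, invoking the appropriate lemma in each regime, with the single structural observation that $T_G$ is a star if and only if $G$ is a wheel $W_n$ (in which case $\Delta(G)=n$, the degree of the star's center, provided $n\ge 3$).

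First I would dispose of the large-degree case: if $\Delta(G)\ge 6$, then Lemma~\ref{lem:Halin1} gives $\ch_i(G)\le\max(\Delta(G)+1,7)=\Delta(G)+1$, and the matching lower bound $\Delta(G)+1\le\ch_i(G)$ is supplied by Proposition~\ref{prop:bounds}. Hence $\ch_i(G)=\Delta(G)+1$, settling the third line of the statement. The case $\Delta(G)=5$ is equally direct: Lemma~\ref{lem:Halin1} yields $\ch_i(G)\le\max(6,7)=7$, which is the bound claimed.

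Next I would handle $\Delta(G)\in\{3,4\}$. Here the dichotomy ``$T_G$ is a star vs.\ not'' is exactly the dichotomy ``$G$ is a wheel vs.\ not.'' If $T_G$ is not a star, then Lemma~\ref{lem:Halin2} applies and gives $\ch_i(G)\le\max(\Delta(G)+2,6)=6$, covering every non-wheel Halin graph in this range. If $T_G$ is a star, then $G=W_n$ with $n=\Delta(G)\in\{3,4\}$, so $G\in\{W_3,W_4\}=\{K_4,W_4\}$. For $G=K_4$, Lemma~\ref{lem:HalinK4} gives $\ch_i(K_4)\le 6$, which combines with the previous case to give $\ch_i(G)\le 6$ for every Halin graph with $\Delta(G)=3$ (and for every Halin graph with $\Delta(G)=4$ other than $W_4$). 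For $G=W_4$ itself, Lemma~\ref{lem:Halin1} gives $\ch_i(W_4)\le\max(5,7)=7$, which is the bound claimed in the second line.

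The only subtlety to double-check is the structural equivalence ``$T_G$ is a star $\iff$ $G$ is a wheel,'' but this is immediate from the definition of a Halin graph: if $T_G=K_{1,n}$ with center $c$, then all the remaining vertices are leaves of $T_G$ lying on the outer cycle, so $G=W_n$, and conversely $T_{W_n}=K_{1,n}$. Assembling the four regimes yields exactly the piecewise formula in the theorem, so no further argument is required beyond this bookkeeping.
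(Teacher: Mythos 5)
Your proof is correct and follows exactly the route the paper intends: the theorem is stated there as an immediate consequence of Proposition~\ref{prop:bounds} and Lemmas~\ref{lem:Halin1}, \ref{lem:Halin2} and \ref{lem:HalinK4}, and your case analysis (including the observation that $T_G$ is a star precisely when $G$ is a wheel, so that for $\Delta(G)\in\{3,4\}$ the only wheels are $K_4$ and $W_4$) is precisely the bookkeeping the paper leaves implicit.
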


%%%%%%%%%%%%%%%%%%%%%%%%%%%%%%%%%%%%%%%%%%%%%%%%%%%%%%%%%%%%%%%%%%%%%%%%%%%%%%%%%%%%%%%%%%%%%%%%%%%%%%%%%%%%%%%%%%%%%%%%%%%%%
%%%%%%%%%%%%%%%%%%%%%%%%%%%%%%%%%%%%%%%%%%%%%%%%%%%%%%%%%%%%%%%%%%%%%%%%%%%%%%%%%%%%%%%%%%%%%%%%%%%%%%%%%%%%%%%%%%%%%%%%%%%%%
%%%%%%%%%%%%%%%%%%%%%%%%%%%%%%%%% CACTUSES

\section{Cactuses}\label{sec:cactus}

A \emph{cactus} is a (planar) graph such that every vertex belongs to at most one cycle.
The \emph{corona} $G\odot K_1$ of a graph $G$ is the graph obtained from $G$
by adding one pendent neighbour to each vertex of~$G$.
A \emph{generalized corona} of a graph $G$ is a graph $G\odot pK_1$, for some
integer $p\ge 1$,  obtained from $G$
by adding $p$ pendent neighbours to each vertex of~$G$.
In particular, every generalized corona of a cycle is thus a cactus.

We give in this section an upper bound on the incidence choice number of cactuses.
In order to do that, we will first consider the case of generalized coronae of cycles.

For every integer $n\ge 3$, we let $V(C_n)=\{v_0,\dots,v_{n-1}\}$.
For every generalized corona $C_n\odot pK_1$ of the cycle $C_n$ and every vertex $v_i$ of $C_n$, 
$0\le i\le n-1$, we denote by $v_i^1,\dots,v_i^p$ the $p$~pendent neighbours of $v_i$.

Let $G=C_n\odot pK_1$, with $n\ge 3$ and $p\ge 1$, be a generalized corona of $C_n$,
and $L$ be any list assignment of $G$ such that $|L(v,vu)|= \Delta(G)+2$
for every incidence $(v,vu)$ of $G$. By colouring first the incidences of $C_n$,
then the uncoloured internal incidences of $v_0,\dots,v_{n-1}$, and finally
the external incidences of $v_0,\dots,v_{n-1}$, we can produce an 
$L$-list incidence colouring of $G$ since, doing so, every incidence has at most
$\Delta(G)+1$ already coloured adjacent incidences.
Therefore, $\ch_i(C_n\odot pK_1)\leq\Delta(C_n\odot pK_1)+2$ for every generalized corona $C_n\odot pK_1$.

The next lemma shows that we can decrease by~1 this bound whenever $p\geq 4$.
Note that by Proposition~\ref{prop:bounds}, in that case, the corresponding bound is tight.
Since it will be useful for studying the incidence choice number of cactuses,
the next lemma also considers the case when the two incidences of one pendent edge are pre-coloured,
and proves that an additional colour is needed in that case only when $n=3$ and $p\ge 3$.

\begin{lemma}
For every integers $n\ge 3$ and $p\ge 1$, 
$$\ch_i(C_n\odot pK_1)\le
\left\{
\begin{array}{ll}
\Delta(C_n\odot pK_1)+2=p+4, & \mbox{if $p\le 2$,} \\
\max(\Delta(C_n\odot pK_1)+1,7)=\max(p+3,7), & \mbox{otherwise.}
\end{array}
\right.$$
Moreover, for every list assignment $L$ of $C_n\odot pK_1$ with
$|L(v,vu)|=k$ for every incidence $(v,vu)$ of $C_n\odot pK_1$, 
$a\in L(v_0,v_0v_0^1)$ and $b\in L(v_0^1,v_0^1v_0)$, $a\neq b$,
there exists an $L$-incidence colouring $\sigma$ of $C_n\odot pK_1$ 
with $\sigma(v_0,v_0v_0^1)=a$ and $\sigma(v_0^1,v_0^1v_0)=b$
in each
of the following cases:
\begin{enumerate}
\item $p\le 2$ and $k\ge p+4$,
\item $n>3$, $p\ge 3$ and $k\ge \max(p+3,7)$,
\item $n=3$, $p\ge 3$ and $k\ge \max(p+3,8)$.
\end{enumerate}
\label{lem:coronae}
\end{lemma}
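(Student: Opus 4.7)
The proof divides naturally along the three items in the statement. In every case I begin by setting $\sigma(v_0, v_0 v_0^1) = a$ and $\sigma(v_0^1, v_0^1 v_0) = b$; the hypothesis $a \neq b$ makes this consistent because these two incidences are adjacent in $I_{C_n\odot pK_1}$.

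\textbf{Case 1 ($p \le 2$, $k \ge p+4$).} I extend the pre-coloring greedily in three rounds: (i) colour the incidences of $C_n$ in cyclic order; (ii) colour the remaining pendant internals $(v_i, v_i v_i^j)$; (iii) colour the pendant externals $(v_i^j, v_i^j v_i)$. As in the paragraph preceding the lemma, when each incidence is reached it has at most $\Delta+1=p+3 < k$ already-coloured adjacent incidences (the pre-coloured $a$ and $b$ being counted among these at the first steps), so a colour from its list is available.

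\textbf{Case 2 ($p \ge 3$, $n > 3$, $k \ge \max(p+3,7)$).} The plan is to colour $C_n$ first, then the pendant structure at each vertex. For the cycle, the restricted lists at $v_0$ avoid at most the two colours $a$ and $b$, so they retain size $\ge k-2 \ge 5 \ge 4$; Theorem~\ref{th:cycles} therefore yields an $L$-list incidence colouring $\sigma_C$ of $C_n$ respecting the pre-coloured constraints. At each vertex $v_i$, the $p$ pendant internals form a clique in $I_G$, and every one must avoid the four cycle colours at $v_i$ (the two elements of $A^-(v_i)$ on cycle edges and the two cycle externals in $A^+(v_i)$), giving effective lists of size $\ge k-4$. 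When $p=3$ and $k\ge 7$ the effective size is $\ge 3 = p$, so Hall's theorem produces a system of distinct representatives; when $p \ge 4$ and $k=p+3$, the effective size is only $\ge p-1$, one short of the clique size. To overcome this I exploit that the two cycle externals at $v_i$, $(v_{i-1},v_{i-1}v_i)$ and $(v_{i+1},v_{i+1}v_i)$, are non-adjacent in $I_G$ (both belong to the independent set $A^+(v_i)$); I therefore choose $\sigma_C$, using the available list freedom, so that $\sigma_C(v_{i-1},v_{i-1}v_i) = \sigma_C(v_{i+1},v_{i+1}v_i)$ at every problematic vertex, reducing the forbidden set to three distinct colours and restoring effective list size $\ge p$. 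Finally, each pendant external $(v_i^j, v_i^j v_i)$ has effective list $\ge k-(p+2) \ge 1$ and is coloured greedily; the pre-coloured $b$ is automatically compatible because $b\neq a$ and $a$ is one of the colours used in $A^-(v_0)$.

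\textbf{Case 3 ($p \ge 3$, $n = 3$, $k \ge \max(p+3,8)$).} With $n=3$ the three cycle vertices are pairwise adjacent in $G$, so incidences across different cycle vertices interact more strongly and the cycle has only six incidences forming a very rigid substructure. The extra colour (8 instead of 7) compensates: the same plan as Case~2 applies, colouring the triangle with lists of effective size $\ge k-2\ge 6$ (enough for $C_3$ together with the pre-colouring), then handling pendant internals and externals as above. The added slot guarantees that the Hall argument for the pendant-internal cliques and the coincidence trick for the cycle externals both succeed even in the triangle.

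The hard part of the argument is Case~2 with $p \ge 4$: the naive count gives four distinct forbidden colours per pendant internal at $v_i$, so that the effective list size falls one short of the clique size $p$. The saving must come from the cycle colouring, by arranging that the two cycle externals at each $v_i$ carry the same colour; this reduces to a constrained list colouring of a structure akin to $C_n^2$, which is feasible via the Prowse--Woodall equality $\ch(C_n^2) = \chi(C_n^2)$ cited in the proof of Theorem~\ref{th:cycles}, thanks to the list sizes being at least $7$.
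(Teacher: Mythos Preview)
Your outline has the right overall architecture (cycle first, then pendant internals, then pendant externals), but the key step fails.

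\medskip

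\textbf{The coincidence trick does not go through.} In Case~2 with $p\ge 4$ you propose to choose the cycle colouring so that $\sigma_C(v_{i-1},v_{i-1}v_i)=\sigma_C(v_{i+1},v_{i+1}v_i)$ at every vertex. But these two incidences carry \emph{independent} lists, and their intersection may be empty; no appeal to Prowse--Woodall helps, since that result concerns list-colouring $C_n^2$ from lists of a given size, not forcing two prescribed vertices with possibly disjoint lists to receive the same colour. What the paper actually proves is the weaker Property~(P'): for each $i$, \emph{either} the two cycle-externals of $v_i$ get the same colour, \emph{or} only one of them is fixed at that stage and its colour lies outside $L(v_i,v_iv_i^p)$. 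The point is that when the two reduced lists are disjoint, their union has size at least $\max(2p,8)>|L(v_i,v_iv_i^p)|$, so one of them must contain a colour not in $L(v_i,v_iv_i^p)$; that colour still saves the last pendant internal at $v_i$. This alternative is the missing idea.

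\medskip

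\textbf{The greedy count is off at the pre-coloured vertex.} In Case~1 with $p=2$ your bound ``at most $\Delta+1=p+3$'' is false at $(v_0,v_0v_0^2)$: by the time you reach it, all four cycle incidences at $v_0$ plus $a$ and $b$ are coloured, i.e.\ six forbidden colours against a list of size $p+4=6$. The paper fixes this by first choosing $c=\sigma(v_{n-1},v_{n-1}v_0)$ so that $|L(v_0,v_0v_0^2)\cap\{a,b,c\}|\le 2$. The same problem hits your Case~2 at $v_0$: even for $p=3$, the remaining pendant internals $(v_0,v_0v_0^2),(v_0,v_0v_0^3)$ must avoid $a$, $b$ and four cycle colours, leaving effective lists of size $\ge 1$, which is not enough to colour a $2$-clique; the paper handles this by a careful choice of the two cycle-externals $c,d$ at $v_0$ with $|L(v_0,v_0v_0^p)\cap\{a,b,c,d\}|\le 2$.

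\medskip

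In short, the proof cannot be completed by a pure greedy/Hall argument: one needs the ``equal-or-absent'' property for the cycle externals, built vertex by vertex with a disjunction at each step, together with a tailored choice of one or two external colours at the pre-coloured vertex $v_0$.
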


\begin{proof}
Since the proof when two incidences are pre-coloured is similar to the proof of
the general bound, we give these two proofs simultaneously, referring to the former
case as the pre-coloured case.
In the following, subscripts are always taken modulo $n$.

\medskip

We first consider the case $p\leq 2$.
Let $L$ be any list assignment of $C_n\odot pK_1$ such that
$|L(v,vu)|= p+4$ if $p\leq 2$ for every incidence $(v,vu)$ of $C_n\odot pK_1$,
and let $a\in L(v_0,v_0v_0^1)$ and $b\in L(v_0^1,v_0^1v_0)$, $a\neq b$.
We will construct an $L$-list incidence colouring $\sigma$ of $C_n\odot pK_1$ in three steps.
We first set $\sigma(v_0,v_0v_0^1)=a$ and $\sigma(v_0^1,v_0^1v_0)=b$, 
even if we are not in the pre-coloured case.

\begin{enumerate}
\item {\em Incidences of $C_n$}.\\
If $p=1$, there is only one edge incident to $v_0$  not belonging to $C_n$, and both its
incidences are already coloured.
If $p=2$, we claim that there exists $c\in L(v_{n-1},v_{n-1}v_0)$ 
such that $|L(v_0,v_0v_0^2)\cap \{a,b,c\}|\leq 2$ and we set $\sigma(v_{n-1},v_{n-1}v_0)=c$.
Indeed, 
if $\{a,b\}\not\subseteq L(v_0,v_0v_0^2)$, then $|L(v_0,v_0v_0^2)\cap \{a,b,c\}|\leq 2$ 
for any value of $c\in L(v_{n-1},v_{n-1}v_0)$.
Suppose now that  $\{a,b\}\subseteq L(v_0,v_0v_0^2)$. 
If $b\in L(v_{n-1},v_{n-1}v_0)$, then we set $c=b$. 
Otherwise, we set $c=\gamma$ for some $\gamma\in L(v_{n-1},v_{n-1}v_0)\setminus L(v_0,v_0v_0^2)$.

We then colour the remaining uncoloured incidences of $C_n$ cyclically,
from $(v_0,v_0v_{n-1})$ to $(v_{n-1},v_{n-1}v_{n-2})$,
which can be done since, doing so, every such incidence has at most $4<p+4$
already coloured adjacent incidences. 

\item {\em Uncoloured internal incidences of $v_i$, $0\le i\le n-1$}.\\
If $p=2$, we colour the incidence $(v_0,v_0v_0^2)$, which can be done
since it has at most~5 forbidden colours (recall that
$|L(v_0,v_0v_0^2)\cap \{\sigma(v_0,v_0v_0^1),\sigma(v_0^1,v_0^1v_0),\sigma(v_{n-1},v_{n-1}v_0)\}|\leq 2$
thanks to the previous step). 

Now, for each vertex $v_i$, $1\le i\le n-1$, we colour
the incidence $(v_i,v_iv_i^1)$, or the incidences 
$(v_i,v_iv_i^1)$ and $(v_i,v_iv_i^2)$, in that order, if $p=2$.
%, $\dots$, $(v_i,v_iv_i^p)$, in that order.
This can be done since, doing so, every such incidence $(v_i,v_iv_i^j)$, $1\le j\le p$,
has $j+3< p+4$ already coloured adjacent incidences.

\item {\em External incidences of $v_i$, $0\le i\le n-1$}.\\
We finally colour all uncoloured incidences of the form $(v_i^j,v_i^jv_i)$,
$0\le i\le n-1$, $1\le j\le p$, which can be done since
every such incidence has at most $p+2$ already coloured adjacent incidences.

\end{enumerate}

The above-constructed mapping $\sigma$ is clearly an $L$-list incidence colouring $\sigma$ of $C_n\odot pK_1$
with $\sigma(v_0,v_0v_0^1)=a$ and $\sigma(v_0^1,v_0^1v_0)=b$, as required.

\medskip

We now consider the case $p\geq 3$.
Let $L$ be any list assignment of $C_n\odot pK_1$ such that,
for every incidence $(v,vu)$ of $C_n\odot pK_1$,
$|L(v,vu)|= \max(p+3,7)$ if we are not in the pre-coloured case or $n>3$,
and $|L(v,vu)|= \max(p+3,8)$ otherwise.
Moreover, if we are in the pre-coloured case, then let
$a\in L(v_0,v_0v_0^1)$ and $b\in L(v_0^1,v_0^1v_0)$, $a\neq b$.

We will construct an $L$-list incidence colouring $\sigma$ of $C_n\odot pK_1$ in two steps.
If we are in the pre-coloured case, we first set $\sigma(v_0,v_0v_0^1)=a$ and $\sigma(v_0^1,v_0^1v_0)=b$.

\begin{enumerate}
\item {\em Incidences of $C_n$}.\\
We first construct a partial $L$-list incidence colouring $\sigma_0$ of $C_n\odot pK_1$,
fixing the colour of all incidences of $C_n$, and
satisfying the following property:
\begin{itemize}[leftmargin=1cm]
\item[(P)] For every $i$, $0\le i\le n-1$ (or $1\le i\le n-1$ if we are in the pre-coloured case), 
$$|L(v_i,v_iv_i^p) \cap \{\sigma_0(v_{i-1},v_{i-1}v_{i}),\sigma_0(v_{i+1},v_{i+1}v_{i})\}| \le 1.$$
Moreover, if we are in the pre-coloured case, then
$$|L(v_0,v_0v_0^p) \cap \{a,b,\sigma_0(v_{n-1},v_{n-1}v_{0}),\sigma_0(v_{1},v_{1}v_{0})\}| \le 2.$$
\end{itemize}

We proceed in two steps.
\begin{enumerate}

\item %1 
If we are in the pre-coloured case, then
we first claim that there exist $c\in L(v_1,v_1v_0)$ and $d\in L(v_{n-1},v_{n-1}v_0)$,
$c\neq a$, $d\neq a$, 
such that $|L(v_0,v_0v_0^p)\cap \{a,b,c,d\}|\leq 2$, and
 set $\sigma_0(v_1,v_1v_0)=c$ and $\sigma_0(v_{n-1},v_{n-1}v_0)=d$ (see Figure~\ref{fig:coronae-f1}).
 To see that, we consider two cases.

%%%%%%%%%%%%%%%%%%%%%%%%%%%%%%%%%%%%%%%%%%%%%%%%%%%%%%%%%%%%%%
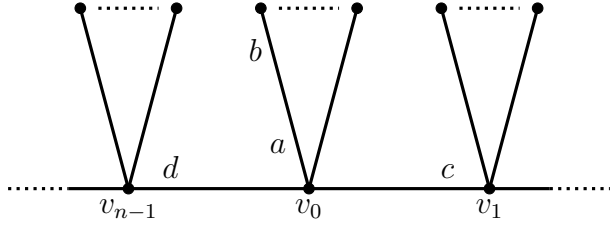
\begin{figure}
\begin{center}
\begin{tikzpicture}[domain=0:10,x=0.8cm,y=0.8cm]
%sommets
\SOMMET{-1,0} \SOMMET{2,0} \SOMMET{5,0} %\SOMMET{8,0} 
\SOMMET{-1.8,3} \SOMMET{-0.2,3} 
\SOMMET{1.2,3} \SOMMET{2.8,3} 
\SOMMET{4.2,3} \SOMMET{5.8,3} 
%arêtes
\draw[very thick,dotted] (-3,0) -- (-2,0);
\draw[very thick] (-2,0) -- (6,0);
\draw[very thick,dotted] (6,0) -- (7,0);
\draw[very thick,dotted] (-1.5,3) -- (-0.5,3);
\draw[very thick,dotted] (1.5,3) -- (2.5,3);
\draw[very thick,dotted] (4.5,3) -- (5.5,3);
\draw[very thick] (-1,0) -- (-1.8,3);
\draw[very thick] (-1,0) -- (-0.2,3);
\draw[very thick] (2,0) -- (1.2,3);
\draw[very thick] (2,0) -- (2.8,3);
\draw[very thick] (5,0) -- (4.2,3);
\draw[very thick] (5,0) -- (5.8,3);
%labels
\node[below] at (-1,0) {$v_{n-1}$};
\node[below] at (2,0) {$v_{0}$};
\node[below] at (5,0) {$v_{1}$};
%\node[below] at (8,0) {$v_{2}$};
%couleurs
\node[above] at (4.3,0) {$c$};
\node[above] at (-0.3,0) {$d$};
\node[left] at (1.8,0.7) {$a$};
\node[left] at (1.4,2.3) {$b$};

\end{tikzpicture}
\caption{Configuration for the proof of Lemma~\ref{lem:coronae}, pre-coloured case.}
\label{fig:coronae-f1}
\end{center}
\end{figure}
%%%%%%%%%%%%%%%%%%%%%%%%%%%%%%%%%%%%%%%%%%%%%%%%%%%%%%%%%%%%%%%%%%%%%%%%
 
\begin{enumerate}
\item $|\{a,b\}\cap L(v_0,v_0v_0^p)[ \leq 1$.\\
In that case, it suffices to choose $c$ and $d$ in such a way that  
$|\{c,d\}\cap L(v_0,v_0v_0^p)| \leq 1$.
This can be done since either $(L(v_{n-1},v_{n-1}v_0)\cap L(v_1,v_1v_0))\setminus\{a\}\neq \emptyset$,
in which case we choose $c=d=\gamma$ for some 
$\gamma\in (L(v_{n-1},v_{n-1}v_0)\cap L(v_1,v_1v_0))\setminus\{a\}$,
or $(L(v_{n-1},v_{n-1}v_0)\cap L(v_1,v_1v_0))\setminus\{a\}=\emptyset$,
which implies 
$$|L(v_{n-1},v_{n-1}v_0)\cup L(v_1,v_1v_0)|\ge \max(2(p+2),12),$$ 
and we can choose $c$ and $d$ in such a way that
$|\{c,d\}\cap L(v_0,v_0v_0^p)[ \leq 1$.

\item $\{a,b\}\subseteq L(v_0,v_0v_0^p)$.\\
If $b\in L(v_1,v_1v_0)$, then we set $c=b$. 
Otherwise, we set $c=\gamma$ for some $\gamma\in L(v_1,v_1v_0)\setminus L(v_0,v_0v_0^p)$.
Similarly, if $b\in L(v_{n-1},v_{n-1}v_0)$ then we set $d=b$. Otherwise, we set $d=\delta$ for 
some $\delta\in L(v_{n-1},v_{n-1}v_0)\setminus L(v_0,v_0v_0^p)$.

\end{enumerate}

In all cases, we get $|L(v_0,v_0v_0^p)\cap \{a,b,c,d\}|\leq 2$.

\medskip

In both cases (pre-coloured or not), we are going to colour some incidences of $C_n$, 
in such a way that for every $i$, $0\le i\le n-1$ 
(or $1\le i\le n-1$ if we are in the pre-coloured case), we have the following property:

\begin{itemize}[leftmargin=1cm]
\item[(P')] Either
$\sigma_0(v_{i-1},v_{i-1}v_{i}) = \sigma_0(v_{i+1},v_{i+1}v_{i})$, or
 one of $\sigma_0(v_{i-1},v_{i-1}v_{i})$, $\sigma_0(v_{i+1},v_{i+1}v_{i})$ only is set
and, in that case, the assigned colour does not belong to $L(v_i,v_iv_i^p)$.
\end{itemize}

For every such $i$, we denote by $\alpha_i$ the colour assigned to one or both external
incidences of $v_i$.
If we are in the pre-coloured case, we first deal with the external incidences
of $v_1$ and~$v_{n-1}$.

%%%%%%%%%%%%%%%%%%%%%%%%%%%%%%%%%%%%%%%%%%%%%%%%%%%%%%%%%%%%%%
\begin{figure}
\begin{center}
\begin{tikzpicture}[domain=0:10,x=0.8cm,y=0.8cm]
%sommets
\SOMMET{-1,0} \SOMMET{2,0} \SOMMET{5,0} \SOMMET{8,0} 
\SOMMET{-1.8,3} \SOMMET{-0.2,3} 
\SOMMET{1.2,3} \SOMMET{2.8,3} 
\SOMMET{4.2,3} \SOMMET{5.8,3} 
\SOMMET{7.2,3} \SOMMET{8.8,3} 
%arêtes
\draw[very thick,dotted] (-3,0) -- (-2,0);
\draw[very thick] (-2,0) -- (9,0);
\draw[very thick,dotted] (9,0) -- (10,0);
\draw[very thick,dotted] (-1.5,3) -- (-0.5,3);
\draw[very thick,dotted] (1.5,3) -- (2.5,3);
\draw[very thick,dotted] (4.5,3) -- (5.5,3);
\draw[very thick,dotted] (7.5,3) -- (8.5,3);
\draw[very thick] (-1,0) -- (-1.8,3);
\draw[very thick] (-1,0) -- (-0.2,3);
\draw[very thick] (2,0) -- (1.2,3);
\draw[very thick] (2,0) -- (2.8,3);
\draw[very thick] (5,0) -- (4.2,3);
\draw[very thick] (5,0) -- (5.8,3);
\draw[very thick] (8,0) -- (7.2,3);
\draw[very thick] (8,0) -- (8.8,3);
%labels
\node[below] at (-1,0) {$v_{n-1}$};
\node[below] at (2,0) {$v_{0}$};
\node[below] at (5,0) {$v_{1}$};
\node[below] at (8,0) {$v_{2}$};
%couleurs
\node[above] at (4.3,0) {$c$};
\node[above] at (-0.3,0) {$d$};
\node[left] at (1.8,0.7) {$a$};
\node[left] at (1.4,2.3) {$b$};

\end{tikzpicture}
\caption{Colouring the external incidences of $v_1$ ($v_{n-1}=v_2$ if $n=3$), pre-coloured case.}
\label{fig:coronae-f2}
\end{center}
\end{figure}
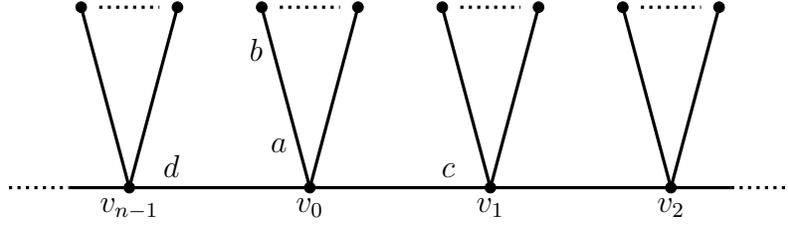
%%%%%%%%%%%%%%%%%%%%%%%%%%%%%%%%%%%%%%%%%%%%%%%%%%%%%%%%%%%%%%%%%%%%%%%%

\begin{itemize}
\item {\em External incidences of $v_1$, pre-coloured case} (see Figure~\ref{fig:coronae-f2}).\\
Let $L'(v_{0},v_{0}v_{1})=L(v_{0},v_{0}v_{1})\setminus\{a,b,c,d\}$,
and
$$L'(v_{2},v_{2}v_{1})
=\left\{
    \begin{array}{ll}
       L(v_{2},v_{2}v_{1})\setminus\{c,d\}, & \hbox{if $n=3$,} \\
       L(v_{2},v_{2}v_{1})\setminus\{c\}, & \hbox{otherwise.}
    \end{array}
 \right.
$$
If $L'(v_{0},v_{0}v_{1}) \cap L'(v_{2},v_{2}v_{1}) \neq \emptyset$,
then we set $\sigma_0(v_{0},v_{0}v_{1}) = \sigma_0(v_{2},v_{2}v_{1}) = \alpha_1$
for some $\alpha_1 \in L'(v_{0},v_{0}v_{1}) \cap L'(v_{2},v_{2}v_{1})$.
Otherwise, we consider two cases.
\begin{itemize}
\item If $n=3$, then  $|L(v,vu)|\ge \max(p+3,8)$ %%%% 8
for every incidence $(v,vu)$ of $C_n\odot pK_1$,
which implies $|L'(v_{0},v_{0}v_{1})| \ge \max(p-1,4)$ %%%% 4
and $|L'(v_{2},v_{2}v_{1})| \ge \max(p+1,6)$, %%%% 6
so that $|L'(v_{0},v_{0}v_{1}) \cup L'(v_{2},v_{2}v_{1})| \ge \max(2p,10)$. %%%% 10
Therefore, either there exists some colour $\alpha_1\in L'(v_{0},v_{0}v_{1})\setminus L(v_1,v_1v_1^p)$,
in which case we set $\sigma_0(v_{0},v_{0}v_{1})=\alpha_1$,
or there exists some colour $\alpha_1\in L'(v_{2},v_{2}v_{1})\setminus L(v_1,v_1v_1^p)$,
and we set $\sigma_0(v_{2},v_{2}v_{1})=\alpha_1$.

\item If $n\geq 4$, then $|L(v,vu)|\ge \max(p+3,7)$ %%%% 7
for every incidence $(v,vu)$ of $C_n\odot pK_1$,
which implies $|L'(v_{0},v_{0}v_{1})| \ge \max(p-1,3)$ %%%% 3
and $|L'(v_{2},v_{2}v_{1})| \ge \max(p+2,6)$, %%%% 6
so that $|L'(v_{0},v_{0}v_{1}) \cup L'(v_{2},v_{2}v_{1})| \ge \max(2p+1,9)$. %%%% 9
Therefore, either there exists some colour $\alpha_1\in L'(v_{0},v_{0}v_{1})\setminus L(v_1,v_1v_1^p)$,
in which case we set $\sigma_0(v_{0},v_{0}v_{1})=\alpha_1$,
or there exists some colour $\alpha_1\in L'(v_{2},v_{2}v_{1})\setminus L(v_1,v_1v_1^p)$,
and we set $\sigma_0(v_{2},v_{2}v_{1})=\alpha_1$.
\end{itemize}

\item {\em External incidences of $v_{n-1}$, pre-coloured case}.\\
Let $L'(v_{0},v_{0}v_{n-1})=L(v_{0},v_{0}v_{n-1})\setminus\{a,b,c,d,\alpha_1\}$,
and
$$L'(v_{n-2},v_{n-2}v_{n-1})
=\left\{
    \begin{array}{lll}
        L(v_{n-2},v_{n-2}v_{n-1})\setminus\{c,d,\alpha_1\}, & \hbox{if $n=3$,} \\
        L(v_{n-2},v_{n-2}v_{n-1})\setminus\{d,\alpha_1\}, & \hbox{if $n=4$,} \\
        L(v_{n-2},v_{n-2}v_{n-1})\setminus\{d\}, & \hbox{otherwise.}
     \end{array}
\right.
$$
If $d\not\in L(v_{n-1},v_{n-1}v_{n-1}^p)$, then we set 
$\sigma_0(v_{0},v_{0}v_{n-1})=\alpha_{n-1}$ 
for some $\alpha_{n-1}\in L'(v_{0},v_{0}v_{n-1})$ and we are done.

Suppose now that $d\in L(v_{n-1},v_{n-1}v_{n-1}^p)$.
If $L'(v_{n-2},v_{n-2}v_{n-1}) \cap L'(v_{0},v_{0}v_{n-1}) \neq \emptyset$,
then we set $\sigma_0(v_{n-2},v_{n-2}v_{n-1}) = \alpha_{n-1}$
and $\sigma_0(v_{0},v_{0}v_{n-1}) = \alpha_{n-1}$
for some $\alpha_{n-1} \in L'(v_{n-2},v_{n-2}v_{n-1}) \cap L'(v_{0},v_{0}v_{n-1})$.
Otherwise, we consider two cases.

\begin{itemize}
\item If $n=3$ (and thus, $(v_{n-2},v_{n-2}v_{n-1})=(v_{1},v_{1}v_{2})$), then $|L(v,vu)|\ge \max(p+3,8)$ %%%% 8
for every incidence $(v,vu)$ of $C_3\odot pK_1$,
which implies $|L'(v_{1},v_{1}v_{2})| \ge \max(p,5)$ %%%% 5
and $|L'(v_{0},v_{0}v_{2})| \ge \max(p-2,3)$, %%%% 3
so that $|L'(v_{1},v_{1}v_{2}) \cup L'(v_{0},v_{0}v_{2})| \ge \max(2p-2,8)$. %%%% 8
Note that $L'(v_{1},v_{1}v_{2}) \cup L'(v_{0},v_{0}v_{2}) \neq L(v_{2},v_{2}v_{2}^p)$
since $d\in L(v_{2},v_{2}v_{2}^p)$    
and $d \not\in L'(v_{1},v_{1}v_{2}) \cup L'(v_{0},v_{0}v_{2})$.
Therefore, 
either there exists some colour $\alpha_{2}\in L'(v_{1},v_{1}v_{2})\setminus L(v_{2},v_{2}v_{2}^p)$,
in which case we set $\sigma_0(v_{1},v_{1}v_{2})=\alpha_{2}$,
or there exists some colour $\alpha_{2}\in L'(v_{0},v_{0}v_{2})\setminus L(v_{2},v_{2}v_{2}^p)$,
and we set $\sigma_0(v_{0},v_{0}v_{2})=\alpha_{2}$.

%%%%%%%%%%%%%%%%%%%%%%%%%%%%%%%%%%%%%%%%%%%%%%%%%%%%%%%%%%%%%%
\begin{figure}
\begin{center}
\begin{tikzpicture}[domain=0:10,x=0.8cm,y=0.8cm]
%sommets
\SOMMET{-4,0} \SOMMET{-1,0} \SOMMET{2,0} \SOMMET{5,0} \SOMMET{8,0} 
\SOMMET{-4.8,3} \SOMMET{-3.2,3} 
\SOMMET{-1.8,3} \SOMMET{-0.2,3} 
\SOMMET{1.2,3} \SOMMET{2.8,3} 
\SOMMET{4.2,3} \SOMMET{5.8,3} 
\SOMMET{7.2,3} \SOMMET{8.8,3} 
%arêtes
\draw[very thick,dotted] (-6,0) -- (-5,0);
\draw[very thick] (-5,0) -- (9,0);
\draw[very thick,dotted] (9,0) -- (10,0);
\draw[very thick,dotted] (-4.5,3) -- (-3.5,3);
\draw[very thick,dotted] (-1.5,3) -- (-0.5,3);
\draw[very thick,dotted] (1.5,3) -- (2.5,3);
\draw[very thick,dotted] (4.5,3) -- (5.5,3);
\draw[very thick,dotted] (7.5,3) -- (8.5,3);
\draw[very thick] (-4,0) -- (-4.8,3);
\draw[very thick] (-4,0) -- (-3.2,3);
\draw[very thick] (-1,0) -- (-1.8,3);
\draw[very thick] (-1,0) -- (-0.2,3);
\draw[very thick] (2,0) -- (1.2,3);
\draw[very thick] (2,0) -- (2.8,3);
\draw[very thick] (5,0) -- (4.2,3);
\draw[very thick] (5,0) -- (5.8,3);
\draw[very thick] (8,0) -- (7.2,3);
\draw[very thick] (8,0) -- (8.8,3);
%labels
\node[below] at (-4,0) {$v_{n-2}$};
\node[below] at (-1,0) {$v_{n-1}$};
\node[below] at (2,0) {$v_{0}$};
\node[below] at (5,0) {$v_{1}$};
\node[below] at (8,0) {$v_{2}$};
%couleurs
\node[above] at (4.3,0) {$c$};
\node[above] at (-0.3,0) {$d$};
\node[above] at (2.7,0) {$\alpha_1?$};
\node[above] at (7.3,0) {$\alpha_1?$};
\node[left] at (1.8,0.7) {$a$};
\node[left] at (1.4,2.3) {$b$};

\end{tikzpicture}
\caption{Colouring the external incidences of $v_{n-1}$ ($v_{n-2}=v_2$ if $n=4$), pre-coloured case.
At least one of the incidences $(v_0,v_0v_1)$ or $(v_2,v_2v_1)$ is coloured with $\alpha_1$.}
\label{fig:coronae-f3}
\end{center}
\end{figure}
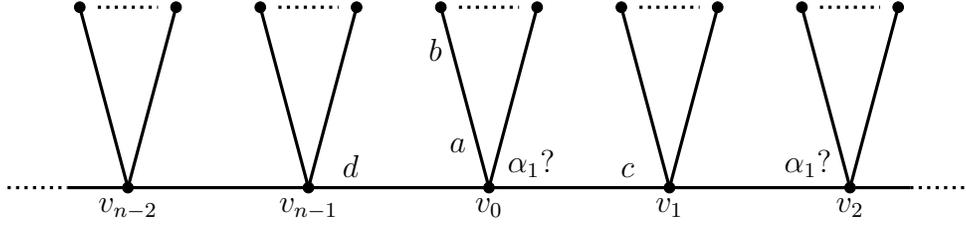
%%%%%%%%%%%%%%%%%%%%%%%%%%%%%%%%%%%%%%%%%%%%%%%%%%%%%%%%%%%%%%%%%%%%%%%%

\item If $n\geq 4$ (see Figure~\ref{fig:coronae-f3}), then $|L(v,vu)|\ge \max(p+3,7)$ %%%% 7
for every incidence $(v,vu)$ of $C_n\odot pK_1$,
which implies $|L'(v_{n-2},v_{n-2}v_{n-1})| \ge \max(p+1,5)$ %%%% 5
and $|L'(v_{0},v_{0}v_{n-1})| \ge \max(p-2,2)$, %%%% 2
so that $|L'(v_{n-2},v_{n-2}v_{n-1}) \cup L'(v_{0},v_{0}v_{n-1})| \ge \max(2p-1,7)$. %%%% 7
As in the previous case, $L'(v_{n-2},v_{n-2}v_{n-1})) \cup L'(v_{0},v_{0}v_{n-1}) \neq L(v_{n-1},v_{n-1}v_{n-1}^p)$
since $d\in L(v_{n-1},v_{n-1}v_{n-1}^p)$    
and $d \not\in L'(v_{n-2},v_{n-2}v_{n-1}) \cup L'(v_{0},v_{0}v_{n-1})$.
Therefore, either there exists some colour $\alpha_{n-1}\in L'(v_{n-2},v_{n-2}v_{n-1})\setminus L(v_{n-1},v_{n-1}v_{n-1}^p)$,
in which case we set $\sigma_0(v_{n-2},v_{n-2}v_{n-1})=\alpha_{n-1}$,
or there exists some colour $\alpha_{n-1}\in L'(v_{0},v_{0}v_{n-1})\setminus L(v_{n-1},v_{n-1}v_{n-1}^p)$,
and we set $\sigma_0(v_{0},v_{0}v_{n-1})=\alpha_{n-1}$.
\end{itemize}

\end{itemize}

For constructing the partial colouring $\sigma_0$,
we proceed sequentially, 
from $i=2$ to $i=n-2$ if we are in the pre-coloured case and $n\neq 3$
(note that $\sigma_0$ is already constructed if $n=3$),
or from $i=0$ to $i=n-1$ otherwise. 

For each such $i$, let
$$L'(v_{i-1},v_{i-1}v_{i})=L(v_{i-1},v_{i-1}v_{i})\setminus\{\alpha_{i-2},\alpha_{i-1},\alpha_{i+1}\},\ \mbox{and}$$ 
$$L'(v_{i+1},v_{i+1}v_{i})=L(v_{i+1},v_{i+1}v_{i})\setminus\{\alpha_{i-1},\alpha_{i+1},\alpha_{i+2}\},$$
if we are not in the pre-coloured case, or
$$L'(v_{i-1},v_{i-1}v_{i})
=\left\{
    \begin{array}{ll}
       L(v_{i-1},v_{i-1}v_{i})\setminus\{c, \alpha_{1},\alpha_{3}\}, & \hbox{if $i=2$,} \\
       L(v_{i-1},v_{i-1}v_{i})\setminus\{\alpha_{i-2},\alpha_{i-1},\alpha_{i+1}\}, & \hbox{otherwise,}
    \end{array}
 \right.
\mbox{and}$$
$$L'(v_{i+1},v_{i+1}v_{i})
=\left\{
    \begin{array}{ll}
       L(v_{i+1},v_{i+1}v_{i})\setminus\{\alpha_{n-3},\alpha_{n-1},d\}, & \hbox{if $i=n-2$,} \\
       L(v_{i+1},v_{i+1}v_{i})\setminus\{\alpha_{i-1},\alpha_{i+1},\alpha_{i+2}\}, & \hbox{otherwise,}
    \end{array}
 \right.
$$
if we are in the pre-coloured case.

Note here that when proceeding with $i$, the colour $\alpha_{i-2}$ (resp. $\alpha_{i-1}$, $\alpha_{i+1}$, $\alpha_{i+2}$)
is defined only if $i\ge 2$ (resp. $i\ge 1$, $i\le n-1$, $i\le n-2$).

%We then construct $\sigma_0$ as follows.
If $L'(v_{i-1},v_{i-1}v_{i}) \cap L'(v_{i+1},v_{i+1}v_{i}) \neq \emptyset$,
we set $\sigma_0(v_{i-1},v_{i-1}v_{i}) = \sigma_0(v_{i+1},v_{i+1}v_{i}) = \alpha_i$ 
for some $\alpha_i \in L'(v_{i-1},v_{i-1}v_{i}) \cap L'(v_{i+1},v_{i+1}v_{i})$.

Otherwise, 
since $|L(v,vu)|\ge \max(p+3,7)$ %%%% 7
for every incidence $(v,vu)$ of $C_n\odot pK_1$,
which implies $|L'(v_{i-1},v_{i-1}v_{i})| \ge \max(p,4)$ %%%% 4
and $|L'(v_{i+1},v_{i+1}v_{i})| \ge \max(p,4)$, %%%% 4
so that $|L'(v_{i-1},v_{i-1}v_{i}) \cup L'(v_{i+1},v_{i+1}v_{i})| \ge \max(2p,8)$, %%%% 8
either there exists some colour $\alpha_i\in L'(v_{i-1},v_{i-1}v_{i})\setminus L(v_i,v_iv_i^p)$,
in which case we set $\alpha_0(v_{i-1},v_{i-1}v_{i})=\alpha_i$,
or there exists some colour $\alpha_i\in L'(v_{i+1},v_{i+1}v_{i})\setminus L(v_i,v_iv_i^p)$,
and we set $\alpha_0(v_{i+1},v_{i+1}v_{i})=\alpha_i$.

By construction, the partial $L$-list incidence colouring $\sigma_0$ clearly satisfies
Property (P').

\item %2
We now colour the remaining uncoloured incidences of $C_n$, which can be done
since every such incidence has at most four already coloured adjacent incidences.
Thanks to Property (P'), and since at least one of the external incidences
of each vertex $v_i$ has been coloured in the previous step,
the partial $L$-list incidence colouring $\sigma_0$ 
thus obtained  satisfies Property (P).

\end{enumerate}

\item 
We now extend the partial $L$-list incidence colouring $\sigma_0$
to an $L$-list incidence colouring $\sigma$ of $C_n\odot pK_1$.
The only remaining uncoloured incidences are
the internal and external incidences of pendent vertices
(except $(v_0,v_0v_0^1)$ and $(v_0^1,v_0^1v_0)$ if we are
in the pre-coloured case, which are already coloured by $a$ and $b$, respectively).

We proceed as follows.
If we are in the pre-coloured case, then we first colour
the incidences $(v_0,v_0v_0^2)$, $\dots$, $(v_0,v_0v_0^p)$, in that order,
otherwise we first colour 
the incidences $(v_0,v_0v_0^1)$, $\dots$, $(v_0,v_0v_0^p)$, in that order.
Then, for each vertex $v_i$, $1\le i\le n-1$, we colour
the incidences $(v_i,v_iv_i^1)$, $\dots$, $(v_i,v_iv_i^p)$, in that order.
This can be done since, doing so,
\begin{enumerate}
\item every incidence $(v_i,v_iv_i^j)$, $1\le j\le p-1$,
has $j+3\le p+2$ already coloured adjacent incidences
(recall that $|L(v_i,v_iv_i^j)|\ge p+3$), and
\item thanks to Property (P)
(and to the fact that $|L(v_0,v_0v_0^p)\cap \{a,b,c,d\}|\leq 2$ if we are in the precoloured case), 
the incidence $(v_i,v_iv_i^p)$
has at most $p+2$ forbidden colours.
\end{enumerate}
We finally colour all the uncoloured incidences of the form $(v_i^j,v_i^jv_i)$,
$0\le i\le n-1$, $1\le j\le p$, which can be done since
every such incidence has  $p+2$ already coloured adjacent incidences.

\end{enumerate}

This completes the proof.
\end{proof}

We are now able to prove the main result of this section.
Let $G$ be a cactus, and $C$ be a cycle in $G$.
We say that $C$ is a {\em maximal cycle} if $C$ contains
a vertex $v$ with $\deg_G(v)=\Delta(G)$.

\begin{theorem}
Let $G$ be a cactus which is neither a tree nor a cycle. We then have
$$\ch_i(G)\leq
\left\{
    \begin{array}{ll}
      \Delta(G)+2, & \hbox{if $\Delta(G)=3$,}\\
      \Delta(G)+1, & \hbox{if $\Delta(G)=4$ and $G$ has no maximal cycle,}\\
      \Delta(G)+2, & \hbox{if $\Delta(G)=4$ and $G$ has a maximal cycle,}\\
      \max(\Delta(G)+1, 7), & \hbox{if $\Delta(G)\geq 5$ and $G$ has at most one maximal $3$-cycle,} \\
      \max(\Delta(G)+1, 8), & \hbox{otherwise.}
    \end{array}
\right.$$
\label{th:cactus}
\end{theorem}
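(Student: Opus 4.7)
The plan is to induct on the number of blocks of $G$. Since $G$ is neither a tree nor a cycle, its block-tree contains at least two blocks, each either a bridge or a cycle. First I would root the block-tree at a block containing a vertex of maximum degree and, if $\Delta(G) \ge 5$ and $G$ has exactly one maximal $3$-cycle, at that $3$-cycle, so that this $3$-cycle is processed first and avoids the penalty of the pre-colouring case of Lemma~\ref{lem:coronae}. I would then pick any leaf block $B$, joined to the rest of $G$ by a single cut vertex $r$, let $H$ denote $B$ together with all pendant subtrees attached to its non-$r$ vertices (these are forced to be trees because every vertex in $V(B) \setminus \{r\}$ belongs to at most one cycle of $G$, namely $B$ itself), set $G' = G - (V(H) \setminus \{r\})$, and apply the induction hypothesis to $G'$ to obtain an $L$-list incidence colouring $\sigma'$. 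The task is then to extend $\sigma'$ to colour the incidences of $H$.

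For this extension, if $B$ is a bridge, I would apply Proposition~\ref{prop:tree} to the pendant tree $H$ rooted at $r$, pre-fixing the single incidence $(r,ru)$ already determined by $\sigma'$ (with $u$ the neighbour of $r$ in $G'$). If $B = C_n$ is a cycle, I would view $H$ as a subgraph of a generalized corona $C_n \odot p K_1$ with $p = \max_{v \in V(C)}(\deg_G(v) - 2)$, obtained by padding each vertex of $C$ with enough formal pendant neighbours, and invoke Lemma~\ref{lem:coronae}. Its pre-colouring option, which fixes the two incidences $(v_0, v_0 v_0^1)$ and $(v_0^1, v_0^1 v_0)$ of one pendant edge at $v_0 = r$, matches exactly the two incidences already coloured by $\sigma'$ at the bridge joining $r$ to $G'$. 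Proposition~\ref{prop:tree} then fills in the remaining incidences of the deeper pendant subtrees that Lemma~\ref{lem:coronae} did not touch.

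The case analysis on $k$ reads off from Lemma~\ref{lem:coronae}: when $\Delta(G) = 3$ every cycle carries $p = 1$, so $k = p+4 = 5 = \Delta(G)+2$; when $\Delta(G) = 4$ with a maximal cycle, that cycle has $p = 2$ and $k = p+4 = 6$; when $\Delta(G) = 4$ without a maximal cycle, all cycles carry $p \le 1$ while the degree-$4$ vertices live only in the tree part, so Proposition~\ref{prop:tree} handles them with $k = \Delta(G)+1 = 5$; and when $\Delta(G) \ge 5$, some cycle has $p = \Delta(G)-2 \ge 3$ and Lemma~\ref{lem:coronae} yields $k = \max(\Delta(G)+1, 7)$ provided the one maximal $3$-cycle (if any) is processed as the root, bumped to $\max(\Delta(G)+1, 8)$ as soon as a maximal $3$-cycle must be processed with pre-colouring, which is exactly case~3 of Lemma~\ref{lem:coronae}. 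The main obstacle I anticipate is when a leaf cycle $C$ shares its cut vertex $r$ with another cycle of $G$ rather than with a bridge: two cycle-type incidences at $r$ are then pre-determined instead of a single pendant-type pair, and the pre-colouring option of Lemma~\ref{lem:coronae} does not directly apply. I would address this by always peeling off first a leaf block accessible via a bridge when one exists, and otherwise handling the two cycles sharing the cut vertex jointly, by an ad hoc extension argument that mirrors the case analysis already developed in the proof of Lemma~\ref{lem:coronae}.
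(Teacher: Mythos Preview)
Your overall strategy—decompose along the block structure, invoke Lemma~\ref{lem:coronae} on each cycle with its pendant edges, and root the decomposition at the unique maximal $3$-cycle when there is one—is exactly the paper's strategy. But there is a real gap in the direction you run the induction.

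First, the ``main obstacle'' you anticipate is a phantom. In this paper a cactus is defined so that every \emph{vertex} belongs to at most one cycle; hence any two cycles are vertex-disjoint and a cut vertex can lie on at most one cycle. Two cycle blocks can therefore never share a cut vertex.

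The genuine problem is elsewhere. When you peel off a leaf block $B$ with cut vertex $r$ and colour $G'=G-(V(H)\setminus\{r\})$ first, \emph{every} edge from $r$ into $G'$ is already coloured before you touch $H$. You write ``the bridge joining $r$ to $G'$'' and ``$u$ the neighbour of $r$ in $G'$'', but $r$ may have $\deg_G(r)-\deg_B(r)$ such edges, and this number can be as large as $\Delta(G)-2$ (if $B$ is a cycle) or $\Delta(G)-1$ (if $B$ is a bridge). The pre-colouring clause of Lemma~\ref{lem:coronae} fixes only \emph{one} pendant edge at $v_0$, and Proposition~\ref{prop:tree} does not help either: the incidence $(r,ry)$ at the top of $H$ already has $2(\deg_G(r)-1)$ forbidden colours from $G'$, which exceeds $k-1$ as soon as $\deg_G(r)\ge 4$ while $k\le\Delta(G)+2$. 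So the extension step fails whenever the cut vertex of the peeled leaf has degree at least~$4$.

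The paper avoids this by processing in the \emph{opposite} direction. It contracts every cycle $C_i$ to a vertex $c_i$, obtaining a tree $M$, orders the vertices of $M$ starting from $c_1$ (the maximal $3$-cycle if there is one) so that each vertex has a unique predecessor, and colours forward along this order. When a cycle $C_i$ is reached, only the single bridge to its predecessor has its two incidences already coloured, which is precisely the pre-coloured situation covered by Lemma~\ref{lem:coronae}; when a non-cycle vertex $v$ is reached, only one incident edge is pre-coloured, and the remaining internal and external incidences of $v$ can be filled in greedily with at most $\Delta(G)$ forbidden colours each. Reversing your induction to go root-to-leaves rather than leaves-to-root closes the gap with no further changes.
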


\begin{proof}
Let $L$ be a list assignment of $G$ such that
$|L(v,vu)|=k$ for every incidence $(v,vu)$ of $G$, where $k$ is the value claimed
in the statement of the theorem.

Let $C_1,\dots,C_\ell$, $\ell\geq 1$, denote the cycles in $G$,
and $M$ denote the graph obtained from $G$ by contracting each cycle $C_i$
into a vertex $c_i$. 
The graph $M$ is clearly a tree.
Let us call each vertex $c_i$ in $M$ a {\em cycle vertex} and each other vertex in $M$, if any, a {\em normal vertex}.
Moreover, if $G$ contains a maximal 3-cycle, we assume without loss of generality that this cycle is $C_1$.
We now order all the vertices of $M$, starting with $c_1$, in such a way that each vertex $v\neq c_1$ has
exactly one neighbour among the vertices preceding $v$ in the order
(this can be done since $M$ is a tree).

We now colour the incidences of $G$ according to the ordering of the vertices of $M$
as follows. Let $v$ be the vertex of $M$ to be treated.
We have two cases to consider.

\begin{enumerate}
\item {\em $v$ is a cycle vertex of $M$.}\\
Let $v=c_i$, $1\le i\le\ell$.
We then colour all the incidences of the subgraph $H_i$ of $G$
induced by the vertices of the cycle $C_i$ and their neighbours.
The subgraph $H_i$ is a subgraph of some generalized corona and thus,
thanks to Observation~\ref{obs:subgraph} and Lemma~\ref{lem:coronae},
all the incidences of $H_i$ can be coloured.

\item {\em $v$ is a normal vertex of $M$.}\\ 
In that case, $v$ is also a vertex in $G$.
We colour the uncoloured internal incidences of $v$, if any,
and then the uncoloured external incidences of $v$, if any, in that order.
This can be done since, doing so, every such incidence has
at most $\Delta(G)$ already coloured adjacent incidences.
\end{enumerate}

This concludes the proof.
\end{proof}

Note that thanks to Proposition~\ref{prop:bounds}, the bound given
in Theorem~\ref{th:cactus} is tight for every cactus $G$ such that 
$\Delta(G)\geq 7$,
or $\Delta(G)=6$ and $G$ has at most one maximal $3$-cycle,
or $\Delta(G)=4$ and $G$ has no maximal cycle.

%%%%%%%%%%%%%%%%%%%%%%%%%%%%%%%%%%%%%%%%%%%%%%%%%%%%%%%%%%%%%%%%%%%%%%%%%%%%%%%%%%%%%%%%%%%%%%%%%%%%%%%%%%%%%%%%%%%%%%%%%%%%%
%%%%%%%%%%%%%%%%%%%%%%%%%%%%%%%%%%%%%%%%%%%%%%%%%%%%%%%%%%%%%%%%%%%%%%%%%%%%%%%%%%%%%%%%%%%%%%%%%%%%%%%%%%%%%%%%%%%%%%%%%%%%%
%%%%%%%%%%%%%%%%%%%%%%%%%%%%%%%%% CUBIC GRAPHS

%\section{Hamiltonian cubic and strictly subcubic graphs}\label{sec:cubic}
\section{Hamiltonian cubic graphs}\label{sec:cubic}

By Proposition~\ref{prop:bounds-2}, we know that $\ch_i(G)\leq 7$ for
every graph with maximum degree~3. 
We prove in this section that this bound can be decreased to~6 for
Hamiltonian cubic graphs. 
(Recall that by the result of Maydanskyi~\cite{M05},
$\chi_i(G)\leq 5$ for every cubic graph.)

Let $G$ be a Hamiltonian cubic graph of order $n$ ($n$ is necessarily even)
and $C_G=v_0v_1\dots v_{n-1}v_0$ be a Hamilton cycle in $G$.
The set of edges $F=E(G)\setminus E(C_G)$ is thus a perfect matching.
We denote by $F_G$ the subgraph of $G$ induced by $F$.
Let $v_i$, $0\le i\le n-1$, be a vertex of $G$.
The {\em matched vertex} of $v_i$ (with respect to $C_G$) is the unique vertex $v_j$ such that $v_iv_j\in F$.
The {\em antipodal vertex} of $v_i$ (with respect to $C_G$) is the vertex $v_{i+\frac{n}{2}}$ (subscripts are taken modulo $n$).
Two vertices $x$ and $y$ of $G$ are {\em consecutive} (with respect to $C_G$) if there exists some $i$, $0\le i\le n-1$,
such that $\{x,y\}=\{v_i,v_{i+1}\}$ (subscripts are taken modulo $n$).

%The following lemma has been proved by Shiu, Lam and Chen~\cite{SLC02}.
%
%\begin{lemma}[Shiu, Lam and Chen~\cite{SLC02}]
%If $G$ is a cubic Hamiltonian graph and $C_G$ a Hamilton cycle of $G$,
%then either $G$ contains two consecutive vertices whose matched
%vertices are not consecutive or each vertex is matched with its antipodal vertex (with respect to $C_G$).
%\label{lem:cub}
%\end{lemma}

We first prove the following easy lemma.

\begin{lemma}
If $G$ is a Hamiltonian cubic graph of order $n\ge 6$
and $C_G=v_0v_1\dots v_{n-1}v_0$ a Hamilton cycle in $G$,
then there exists a vertex $v_i$ in $G$, $0\le i\le n-1$,
such that $v_{i+2}$ is not the matched vertex of $v_i$.
\label{lem:cub2}
\end{lemma}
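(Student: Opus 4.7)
The plan is to prove this by a short contradiction argument: I would assume that every $v_i$ is matched to $v_{i+2}$ and then derive that $n$ must be too small.

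First I would set up notation: let $M: \{0,\dots,n-1\} \to \{0,\dots,n-1\}$ be the function sending each index $i$ to the index of the matched vertex of $v_i$, so that $v_iv_{M(i)} \in F$ for every $i$. Suppose, for contradiction, that $M(i) \equiv i+2 \pmod n$ for every $i$. In particular $v_0 v_2 \in F$ and $v_2 v_4 \in F$.

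Since $F$ is a perfect matching, each vertex of $G$ is incident with exactly one edge of $F$. Hence the unique edge of $F$ containing $v_2$ is simultaneously $v_0 v_2$ and $v_2 v_4$, which forces $v_0 = v_4$, i.e.\ $n \mid 4$. Combined with the fact that $n$ is even and at least $6$, this is a contradiction. Therefore there exists some index $i$ with $v_{i+2}$ distinct from the matched vertex of $v_i$.

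The proof is essentially a one-line pigeonhole observation on the perfect matching $F$, so there is no real obstacle; the only thing to be careful about is subscripts modulo $n$ and observing that the hypothesis $n \ge 6$ (together with $n$ being even, which is already noted in the paper just before the lemma) rules out the only small exception $n = 4$.
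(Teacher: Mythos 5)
Your proof is correct and rests on exactly the same observation as the paper's: if $v_0$ is matched to $v_2$, then $v_2$'s unique matching partner is $v_0$, which differs from $v_4$ because $n\ge 6$. The paper phrases this as a direct two-case argument ("either $v_0$ works or $v_2$ does") while you phrase it as a contradiction, but the content is identical.
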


\begin{proof}
If $v_2$ is not the matched vertex of $v_0$ then $v_0$ satisfies the required property.
Otherwise, since $n\ge 6$, $v_2$ satisfies the required property. 
\end{proof}

We now prove the main result of this section.

\begin{theorem}
For every Hamiltonian cubic graph $G$, $\ch_i(G)\leq 6$.
\label{th:HamiltonianCubic}
\end{theorem}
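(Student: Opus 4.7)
For $n = 4$, the graph $G$ is necessarily $K_4$ and the result follows from Lemma~\ref{lem:HalinK4}. From now on I assume $n \geq 6$. The plan is to apply Lemma~\ref{lem:cub2} to choose the Hamilton cycle $C_G = v_0 v_1 \cdots v_{n-1} v_0$ so that the matched vertex $v_p$ of~$v_0$ satisfies $p \in \{3, 4, \ldots, n-2\}$, and then, for any list assignment $L$ with $|L(v,e)| = 6$, to build an $L$-list incidence colouring $\sigma$ in three phases. For brevity I would denote $a_i = (v_i, v_iv_{i-1})$, $b_i = (v_i, v_iv_{i+1})$ and $m_i = (v_i, v_iv_{j(i)})$, the three internal incidences at $v_i$.

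In the \emph{seed} phase I would pick two distinct colours $\sigma(m_0) \in L(m_0)$ and $\sigma(m_p) \in L(m_p)$. The \emph{cycle sweep} phase then processes $v_1, v_2, \ldots, v_{n-1}$ in order, colouring at each $v_i$ the three internal incidences $a_i, b_i, m_i$ (skipping $m_p$ at $v_p$, which is already coloured). These three incidences form a triangle in $I_G$ and so must receive pairwise distinct colours; a count based on Observation~\ref{obs:incidences} shows that when $v_i$ is treated, each of its three incidences has at most~$5$ already-coloured adjacent incidences in the worst case, the condition $p \neq 2$ from Lemma~\ref{lem:cub2} being invoked precisely to prevent an extra forbidden colour from $m_0$ appearing at $v_2$. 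A system-of-distinct-representatives argument in the spirit of Claim~\ref{claim:HalinK4} would then extract three valid colours. Finally the \emph{closing} phase would colour the two remaining incidences $a_0, b_0$ at $v_0$.

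The main obstacle is the closing phase: once the sweep is complete, $a_0$ has exactly $6$ coloured adjacent incidences, namely $m_0, m_p, b_{n-1}, a_1, a_{n-1}, m_{n-1}$, and $b_0$ likewise has $6$, namely $m_0, m_p, b_{n-1}, a_1, b_1, m_1$; with lists of size only~$6$, a purely greedy completion may leave no available colour. To defeat this, I would weave a \emph{reservation strategy} into the seed and sweep phases: already at the seed step, also commit to candidate colours $\alpha \in L(a_0)$ and $\beta \in L(b_0)$ with $\alpha \neq \beta$ and $\{\alpha,\beta\} \cap \{\sigma(m_0),\sigma(m_p)\} = \emptyset$, and then, throughout the cycle sweep, steer the choices at the four ``critical'' incidences adjacent to $a_0$ (namely $b_{n-1}, a_1, a_{n-1}, m_{n-1}$) away from $\alpha$ and at the four adjacent to $b_0$ (namely $a_1, b_1, m_1, b_{n-1}$) away from $\beta$; the closing phase would then simply set $\sigma(a_0) = \alpha$ and $\sigma(b_0) = \beta$.

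The condition $p \in \{3, \ldots, n-2\}$ from Lemma~\ref{lem:cub2} is essential here: it guarantees that $v_p$ is neither equal to nor a cycle-neighbour of $v_1$ or $v_{n-1}$, so that the seed colour $\sigma(m_p)$ never pre-occupies a critical slot, and the extra avoidance of $\alpha, \beta$ at critical incidences during the sweep shrinks each affected effective list by at most two colours; combined with the at-most-$5$ forbidden colours from already-coloured adjacents, this still leaves at least one available colour at every step. I expect the hardest part of the argument to be a technical claim, in the spirit of Claim~\ref{claim:Halin} and Claim~\ref{claim:HalinK4}, establishing that compatible reservation colours $\alpha, \beta$ can always be chosen and that the three incidences at each $v_i$ can simultaneously be given distinct colours respecting the reservations; the remaining bookkeeping of forbidden-colour counts at each incidence should then be routine.
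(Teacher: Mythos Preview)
Your approach differs substantively from the paper's. Instead of sweeping the cycle vertex by vertex and colouring all three internal incidences at each step, the paper first colours \emph{all} incidences of the perfect matching $F=E(G)\setminus E(C_G)$ (after fixing five carefully chosen seed colours $a,b,c,d,e$ via a claim analogous to Claim~\ref{claim:Halin}), and only then sweeps the cycle incidences. The payoff of this decomposition is that each matching incidence has at most two already-coloured neighbours when it is reached, so that phase is trivial; and during the subsequent cycle sweep each incidence has at most five already-coloured neighbours, so a plain greedy argument works everywhere except at the final two incidences $(v_0,v_0v_1)$ and $(v_1,v_1v_0)$, where the seed claim guarantees at most five forbidden colours. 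In particular, the paper never has to colour a triangle of three mutually adjacent incidences simultaneously, and so never needs an SDR argument.

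Your vertex-by-vertex sweep forces exactly such a simultaneous triangle-colouring at every step, and this is where the sketch has a genuine gap. At $v_{n-1}$, each of $a_{n-1},b_{n-1},m_{n-1}$ has (by your own accounting) up to five forbidden colours once the reservations $\alpha,\beta$ are included, leaving possibly a single available colour each from a list of size~$6$; but one available colour per incidence is \emph{not} sufficient for a system of distinct representatives on a triangle---one can arrange the lists so that all three residual singletons coincide. The reservation strategy as described does not prevent this: the freedom in choosing $\alpha,\beta$ and the earlier sweep colours is spent on making $a_0,b_0$ colourable at the end, not on guaranteeing Hall's condition at $v_{n-1}$. (There is also a minor slip: Lemma~\ref{lem:cub2} only excludes $p=2$, so $p=n-2$ is allowed and $v_p$ may well be the cycle-neighbour $v_{n-2}$ of $v_{n-1}$, contrary to what you assert; but this is not the central issue.) To make your route work you would need a considerably more delicate claim controlling the three lists at $v_{n-1}$ simultaneously with the closing constraints at $v_0$---strictly more than what the paper's matching-first decomposition requires.
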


\begin{proof}
Let $G$ be a Hamiltonian cubic graph, $C_G=v_0v_1\dots v_{n-1}v_0$ be a Hamilton cycle in $G$, 
and $L$ be any list assignment of~$G$ such that
$|L(v,vu)|=6$ for every incidence $(v,vu)$ of~$G$.
In the following, subscripts are always taken modulo $n$.

Note first that if $n=4$, then $G=K_4$ and the result follows from Lemma~\ref{lem:HalinK4}.
We thus assume $n\geq 6$.
Each vertex~$v_i$, $0\le i\le n-1$, has three neighbours in $G$, namely $v_{i-1}$, $v_{i+1}$ and 
the matched vertex $v'_i=v_{j}$ of $v_i$, $j\in\{0,\dots,n\}\setminus\{i-1,i,i+1\}$. 
Let $v_s$ and $v_t$ denote the matched vertices of $v_0$ and $v_1$, respectively.
Without loss of generality, we may assume that $v_0$ satisfies the statement of Lemma~\ref{lem:cub2},
so that $v_s\neq v_2$.
  
The following claim will be useful for constructing
an $L$-list incidence colouring of~$G$.

%%%%%%%%%%%%%%%%%%%%%%%%%%%%%%%%%%%%%%
\begin{claim}\label{claim:Ham}
There exist $a\in L(v_1,v_1v_t)$, $b\in L(v_s,v_sv_0)$,
$c\in L(v_{2},v_{2}v_{1})$, $d\in L(v_0,v_0v_s)$ and $e\in L(v_t,v_tv_{1})$,
with $a\neq c$, $a\neq e$ and $b\neq d$,
such that
$$|L(v_{0},v_{0}v_{1})\cap\{a,b\}|\le 1,\ \ \mbox{and}\ |L(v_1,v_1v_{0})\cap\{c,d,e\}|\le 1.$$
\end{claim}

\begin{proof}
We first deal with the incidence $(v_1,v_1v_{0})$ and set the values of $c$, $d$ and $e$
(see Figure~\ref{fig:Ham}).
Let $C=L(v_{2},v_{2}v_{1})$, $D=L(v_0,v_0v_s)$ and $E=L(v_t,v_tv_{1})$.
If $C\cap D\cap E\neq\emptyset$, then
we set $c=d=e=\gamma$ for some $\gamma\in C\cap D\cap E$,
so that $|L(v_1,v_1v_{0})\cap\{c,d,e\}|\le 1$.
Otherwise, we consider two cases.

\begin{enumerate}
\item If $C$, $D$ and $E$ are pairwise disjoint, then at least two of them
are distinct from $L(v_1,v_1v_{0})$, so that we can choose $c$, $d$ and $e$
in such a way that $|L(v_1,v_1v_{0})\cap\{c,d,e\}|\le 1$.
\item
Suppose now that $C\cap D\neq\emptyset$
(the cases $C\cap E\neq\emptyset$ and $D\cap E\neq\emptyset$ are similar).
We first set $c=d=\gamma$ for some $\gamma\in C\cap D$.
If $\gamma\in L(v_1,v_1v_{0})$, then there exists $\varepsilon\in E\setminus L(v_1,v_1v_{0})$
(since $(C\cap D)\cap E=\emptyset$) and we set $e=\varepsilon$,
so that $|L(v_1,v_1v_{0})\cap\{c,d,e\}|\le 1$.
If $\gamma\notin L(v_1,v_1v_{0})$, then we set $e=\varepsilon$ for any $\varepsilon\in E$
and we also get $|L(v_1,v_1v_{0})\cap\{c,d,e\}|\le 1$.
\end{enumerate}

We now deal with the incidence $(v_0,v_0v_{1})$ and set the values of $a$ and $b$.
Let $L'(v_1,v_1v_t)=L(v_1,v_1v_t)\setminus \{e,c\}$ and $L'(v_s,v_sv_0)=L(v_s,v_sv_0)\setminus\{d\}$.
If $L'(v_1,v_1v_t)\cap L'(v_s,v_sv_0)\neq \emptyset$, then we set $a=b=\alpha$ for some $\alpha\in L'(v_1,v_1v_t)\cap L'(v_s,v_sv_0)$, so that $|L(v_{0},v_{0}v_{1})\cap\{a,b\}|\le 1$.
Otherwise, as $|L(v,vu)|=6$ for every incidence $(v,vu)$ of $G$, 
which implies $|L'(v_1,v_1v_t)|\geq 4$ and $|L'(v_s,v_sv_0)|\geq 5$,
we get $|L'(v_1,v_1v_t)\cup L'(v_s,v_sv_0)|\geq 9$.
Therefore, either there exists some colour $\alpha\in L'(v_s,v_sv_0)\setminus L(v_0,v_0v_{1})$,
in which case we set $b=\alpha$, so that $|L(v_{0},v_{0}v_{1})\cap\{a,b\}|\le 1$ for any value of $a$,
or there exists some colour $\alpha\in L'(v_1,v_1v_t)\setminus L(v_0,v_0v_{1})$,
in which case we set $a=\alpha$, so that $|L(v_{0},v_{0}v_{1})\cap\{a,b\}|\le 1$ for any value of $b$.
This completes the proof of Claim~\ref{claim:Ham}.
\end{proof}
%%%%%%%%%%%%%%%%%%%%

%%%%%%%%%%%%%%%%%%%%%%%%%%%%%%%%%%%%%%%%%%%%%%%%%%%%%%%%%%%%%%
\begin{figure}
\begin{center}
\begin{tikzpicture}[domain=0:10,x=0.8cm,y=0.8cm]
%sommets
\SOMMET{-1,0} \SOMMET{2,0} \SOMMET{5,0} \SOMMET{8,0} 
\SOMMET{2,3} \SOMMET{5,3} 
%arêtes
\draw[very thick,dotted] (-3,0) -- (-2,0);
\draw[very thick] (-2,0) -- (9,0);
\draw[very thick,dotted] (9,0) -- (10,0);
\draw[very thick] (2,0) -- (2,3);
\draw[very thick] (5,0) -- (5,3);
%labels
\node[below] at (-1,0) {$v_{n-1}$};
\node[below] at (2,0) {$v_{0}$};
\node[below] at (5,0) {$v_{1}$};
\node[below] at (8,0) {$v_{2}$};
\node[above] at (2,3) {$v_s$};
\node[above] at (5,3) {$v_t$};
%couleurs
\node[above] at (7.3,0) {$c$};
\node[left] at (2,0.7) {$d$};
\node[left] at (2,2.3) {$b$};
\node[right] at (5,0.7) {$a$};
\node[right] at (5,2.3) {$e$};

\end{tikzpicture}
\caption{Configuration for the proof of Claim~\ref{claim:Ham}.}
\label{fig:Ham}
\end{center}
\end{figure}
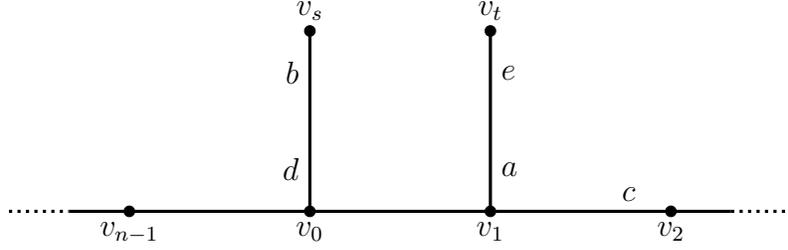
%%%%%%%%%%%%%%%%%%%%%%%%%%%%%%%%%%%%%%%%%%%%%%%%%%%%%%%%%%%%%%%%%%%%%%%%

We now construct an $L$-list incidence colouring $\sigma$ of~$G$ in three steps.

\begin{enumerate}
\item
We first set
$\sigma(v_1,v_1v_t)=a$, $\sigma(v_s,v_sv_0)=b$,
$\sigma(v_{2},v_{2}v_{1})=c$, $\sigma(v_0,v_0v_s)=d$ and $\sigma(v_t,v_tv_{1})=e$,
where $a$, $b$, $c$, $d$ and $e$ are the values determined in the proof
of Claim~\ref{claim:Ham}.

\item
We colour all the uncoloured incidences of the perfect matching $F=E(G)\setminus E(C_G)$.
This can be done since every such incidence has at most two already coloured
adjacent incidences (indeed, only the lastly coloured incidence of the edge $v_2v'_2$,
where $v'_2$ is the antipodal vertex of $v_2$, will have two already coloured
adjacent incidences).

\item
 We finally colour all the uncoloured incidences of $C_G$
(the only incidence of $C_G$ already coloured is $(v_2,v_2v_1)$) as follows.

\begin{itemize}
\item We first colour the incidence $(v_1,v_1v_2)$, which has
four already
coloured adjacent incidences.

\item We then cyclically colour the incidences of $C_G$ from $(v_2,v_2v_3)$
to $(v_{0},v_0v_{n-1})$.
%$(v_{n-1},v_{n-1}v_{0})$.
This can be done since, doing so, every such incidence has four or five already
coloured adjacent incidences.

\item By Claim~\ref{claim:Ham}, the incidence $(v_{0},v_{0}v_1)$ has at most
five forbidden colours and can thus be coloured.
Similarly, thanks to Claim~\ref{claim:Ham},
the incidence $(v_1,v_1v_0)$ has at most
five forbidden colours and can thus be coloured.
\end{itemize}
\end{enumerate}

This completes the proof of Theorem~\ref{th:HamiltonianCubic}.
\end{proof}

By Observation~\ref{obs:subgraph}, we get the following
corollary of Theorem~\ref{th:HamiltonianCubic}.

\begin{corollary}
If $G$ is a Hamiltonian graph with maximum degree~3, then $\ch_i(G)\leq 6$.
\label{cor:HamiltonianSubcubic}
\end{corollary}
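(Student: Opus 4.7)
The plan is to realise $G$ as a subgraph of some Hamiltonian cubic graph $G^*$, so that Observation~\ref{obs:subgraph} together with Theorem~\ref{th:HamiltonianCubic} immediately yields $\ch_i(G)\le\ch_i(G^*)\le 6$. Since $G$ is Hamiltonian and subcubic, every vertex of $G$ has degree $2$ or $3$; let $V_2\subseteq V(G)$ denote the set of vertices of degree $2$, and let $C=v_0v_1\dots v_{n-1}v_0$ be a fixed Hamilton cycle of $G$.

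If $V_2=\emptyset$ then $G$ is already Hamiltonian cubic and one invokes Theorem~\ref{th:HamiltonianCubic} directly. In general I would use the two-copies construction: take disjoint copies $G^{(0)}$ and $G^{(1)}$ of $G$ with vertices labelled $v^{(0)},v^{(1)}$, and, for every $v\in V_2$, add the crossing edge $v^{(0)}v^{(1)}$. Each vertex of $V_2$ gains exactly one edge in each copy (bringing it to degree $3$) and the remaining vertices of $G$ are untouched, so $G^*$ is cubic and $G\cong G^{(0)}\subseteq G^*$.

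The central step is to verify that $G^*$ is Hamiltonian. Picking two vertices $v_i,v_j\in V_2$ and a Hamilton path $P$ of $G$ from $v_i$ to $v_j$, one obtains a Hamilton cycle of $G^*$ by concatenating $P$ inside $G^{(0)}$, the crossing edge $v_j^{(0)}v_j^{(1)}$, a reversed copy of $P$ inside $G^{(1)}$, and the crossing edge $v_i^{(1)}v_i^{(0)}$. Such a Hamilton path is obtained straight away whenever $v_i$ and $v_j$ are consecutive along $C$ (delete the cycle edge $v_iv_j$), and a short extra argument combining arcs of $C$ with chord edges covers the remaining configurations as soon as $|V_2|\ge 2$.

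The hard part will be the parity-degenerate case $|V_2|=1$, which really does arise (for instance when $n$ is odd): the unique crossing edge is then a bridge of $G^*$, so no Hamilton cycle exists. To handle this I would either enlarge $G^*$ by a carefully designed simple gadget on a few additional vertices that creates a second crossing without violating cubicity (so that the cut between the two sides of the construction has size at least $2$), or, if a clean embedding cannot be produced, adapt the proof of Theorem~\ref{th:HamiltonianCubic} directly to $G$: at the unique degree-$2$ vertex strictly fewer incidences are adjacent to a given incidence, so the tight colouring inequalities of the cubic proof become slack there and can absorb the absence of a matched partner with the same six colours.
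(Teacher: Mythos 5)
Your route is exactly the one the paper intends: realise $G$ as a subgraph of a Hamiltonian cubic graph and invoke Observation~\ref{obs:subgraph} together with Theorem~\ref{th:HamiltonianCubic}. But the central step -- the existence of such a supergraph $G^*$ -- is precisely what your proposal does not establish, and it is in fact false in general, so the gaps you acknowledge cannot be closed along the lines you suggest. Take $G=P-w$, the Petersen graph $P$ with one vertex deleted: it is Hamiltonian (since $P$ is hypohamiltonian) and has maximum degree~$3$, with $V_2=\{a,b,c\}$ the three former neighbours of $w$. A Hamilton path of $G$ between any two of $a,b,c$ would close up through $w$ to a Hamilton cycle of $P$, so no such path exists. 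Now in any cubic supergraph $G^*\supseteq G$ the degree-$3$ vertices of $G$ receive no new edges and $a,b,c$ receive exactly one new edge each; since three is odd, at most one new edge lies inside $\{a,b,c\}$ and at least one vertex outside $V(G)$ must exist, so the cut separating $V(G)$ from the rest has size $1$ or $3$. A Hamilton cycle of $G^*$ must cross this cut a positive even number of times, hence exactly twice through a cut of size $3$ (with no added edge inside $\{a,b,c\}$), and its restriction to $V(G)$ would be a Hamilton path of $G$ between two of $a,b,c$ -- impossible. So $P-w$ has no Hamiltonian cubic supergraph at all; in particular your two-copies graph (here $|V_2|=3\ge 2$) is not Hamiltonian, so the claim that ``a short extra argument'' settles every configuration with $|V_2|\ge 2$ fails, and the obstruction is not confined to the parity case $|V_2|=1$ that you flag.

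As it stands, then, the proposal does not prove the corollary, and the only viable option among those you list is your last one: rerun the proof of Theorem~\ref{th:HamiltonianCubic} on $G$ itself, viewing $E(G)\setminus E(C_G)$ as a possibly non-perfect matching. At a degree-$2$ vertex every incidence has fewer coloured neighbours, so the counting in Claim~\ref{claim:Ham} and in the final colouring steps only gains slack, but this adaptation has to be written out (in particular the choice of $v_0,v_1$ and of the ``matched'' vertices $v_s,v_t$ must be redone when $v_0$ or $v_1$ carries no chord). It is worth noting that the paper itself dispatches the corollary with the same one-line appeal to Observation~\ref{obs:subgraph}, i.e.\ it tacitly assumes the embedding you were trying to construct; the difficulty you ran into is genuine, and the direct adaptation, not the embedding, is the safe way to justify the statement.
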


%%%%%%%%%%%%%%%%%%%%%%%%%%%%%%%%%%%%%%%%%%%%%%%%%%%%%%%%%%%%%%%%%%%%%%%%%%%%%%%%%%%%%%%%%%%%%%%%%%%%%%%%%%%%%%%%%%%%%%%%%%%%%
%%%%%%%%%%%%%%%%%%%%%%%%%%%%%%%%%%%%%%%%%%%%%%%%%%%%%%%%%%%%%%%%%%%%%%%%%%%%%%%%%%%%%%%%%%%%%%%%%%%%%%%%%%%%%%%%%%%%%%%%%%%%%
%%%%%%%%%%%%%%%%%%%%%%%%%%%%%%%%% DISCUSSION

\section{Discussion}\label{sec:discussion}

In this paper, we have introduced and studied the list version of incidence colouring.
We determined the exact value of -- or upper bounds on -- the incidence choice number of 
several classes of graphs, namely square grids, Halin graphs, generalized coronae of cycles,
cactuses and Hamiltonian cubic graphs.
Following the work presented here, we propose the following problems:

\begin{enumerate}
\item Is it true that $\ch_i(G_{m,n})=6$ for every square grid $G_{m,n}$ with $m\geq n\geq 3$?
\item What is the best possible upper bound on the list incidence chromatic number of Halin graphs
with maximum degree~3, 4 or 5? (Theorem~\ref{th:Halin} gives the exact bound only for Halin graphs
with maximum degree $k\ge 6$.)
\item What is the best possible upper bound on the list incidence chromatic number of cactuses
with maximum degree~6? 
With maximum degree~5 and containing at most one maximal cycle? 
With maximum degree~4 and containing a maximal cycle?
(Theorem~\ref{th:cactus} gives the exact bound for all other cases.)
\item What is the best possible upper bound on the list incidence chromatic number of graphs
with bounded maximum degree? In particular, what about graphs with maximum degree~3?
(By Proposition~\ref{prop:bounds-2}, we know that this bound is at most $3k-2$ for graphs with maximum
degree $k\ge 2$, and thus at most~7 for graphs with maximum degree~3.)
\item What is the value of $\ch_i(K_n)$?
(By Proposition~\ref{prop:bounds-2}, we know that this value is at most $3n-5$.)
\item Which classes of graphs satisfy the incidence version of the list colouring conjecture, that is,
for which graphs $G$ do we have $\ch_i(G)=\chi_i(G)$?
(By Proposition~\ref{prop:bounds} and Theorem~\ref{th:degenerated}, we know for instance that this equality holds for every tree.)
\end{enumerate}

\end{document}